\documentclass[10pt,onecolumn,draft]{IEEEtran2}
\usepackage{epsf,psfrag,amssymb,amsfonts,amsmath,cite}
\usepackage{graphicx}
\def\HOME{D:/Ge}
\newtheorem{theorem}{Theorem}
\newtheorem{example}{Example}
\newtheorem{corollary}{Corollary}
\newtheorem{lemma}{Lemma}
\newtheorem{definition}{Definition}

\newtheorem{proposition}{Proposition}

%%
%% This is for making fancy paragraphs
%%
\def\psfancypar#1#2{\begingroup\def\par{\endgraf\endgroup\lineskiplimit=0pt}
               \setbox2=\hbox{\large\sc #2}
%              \showthe\lht2\showthe\baselineskip
               \newdimen\tmpht \tmpht \ht2 \advance\tmpht by \baselineskip
%              \showthe\tmpht
% Changed 10/2/89 rhr - t-bol -> Times-Bold for dvi2ps to dvips conversion
%              \font\hhuge=t-bol at \tmpht
               \font\hhuge=Times-Bold at \tmpht
               \setbox1=\hbox{{\hhuge #1}}
%              \showthe\ht1
               \count7=\tmpht \count8=\ht1
%  note that all this BS is necessary since TeX only does integer
%  divides and rounds all results
               \divide\count8 by 1000 \divide\count7 by \count8
%               \showthe\count7
               \tmpht=.001\tmpht\multiply\tmpht by \count7
%               \showthe\tmpht
%              \font\hhuge=t-bol at \tmpht
               \font\hhuge=Times-Bold at \tmpht
               \setbox1=\hbox{{\hhuge #1}}
               \noindent
                \hangindent1.05\wd1
               \hangafter=-2 {\hskip-\hangindent
               \lower1\ht1\hbox{\raise1.0\ht2\copy1}%
                \kern-0\wd1}\copy2\lineskiplimit=-1000pt}

\newcommand{\beq}{\begin{equation}}
\newcommand{\eeq}{\end{equation}}
\newcommand{\bqa}{\begin{eqnarray}}
\newcommand{\eqa}{\end{eqnarray}}
\newcommand{\bqn}{\begin{eqnarray*}}
\newcommand{\eqn}{\end{eqnarray*}}
\newcommand{\nn}{\nonumber}

\newcommand{\be}{\begin{enumerate}}
\newcommand{\ee}{\end{enumerate}}
\newcommand{\bi}{\begin{itemize}}
\newcommand{\ei}{\end{itemize}}
\newcommand{\bd}{\begin{description}}
\newcommand{\ed}{\end{description}}
\newcommand{\ba}{\begin{array}}
\newcommand{\ea}{\end{array}}
\newcommand{\bde}{\begin{definition}}
\newcommand{\ede}{\end{definition}}
\newcommand{\bex}{\begin{example}}
\newcommand{\eex}{\end{example}}

%Define affiliation at footnote
% Make \small be smaller (8 pt)
%\def\small{\@setsize\small{9pt}\viiipt\@viiipt\let\@listi\@listI}

\def\boxit#1{\vbox{\hrule\hbox{\vrule\kern3pt
        \vbox{\kern3pt#1\kern3pt}\kern3pt\vrule}\hrule}}

\def\reals{ { {\rm  I \kern-0.15em R }  } }
\def\complex{ {\,{{\rm C} \kern-0.50em \raise0.20ex {  |}}\, }}

\def\0bf{{\bf 0}}
\def\1bf{{\bf 1}}
\def\2bf{{\bf 2}}
\def\3bf{{\bf 3}}
\def\4bf{{\bf 4}}
\def\5bf{{\bf 5}}
\def\6bf{{\bf 6}}
\def\7bf{{\bf 7}}
\def\8bf{{\bf 8}}
\def\9bf{{\bf 9}}

\def\dbf{{\bf d}}

\def\xbf{{\bf x}}
\def\ybf{{\bf y}}

\def\xbf{{\bf x}}
\def\ybf{{\bf y}}

\def\Dbf{{\bf D}}

\def\Rbf{{\bf R}}

\def\Xbf{{\bf X}}

\def\Dmat{\mathcal{D}}
\def\Emat{\mathcal{E}}

\def\Nmat{\mathcal{N}}

\def\Rmat{\mathcal{R}}

\def\Xmat{\mathcal{X}}

%\def\be{\vskip .3cm \begin{equation}}
%\def\ee{\end{equation} \vskip .4cm \noindent}
%

%
%\newtheorem{theorem}{Theorem}[chapter]
%\newtheorem{conjecture}{Conjecture}[chapter]
%\newtheorem{lemma}{Lemma}[chapter]
%\newtheorem{definition}{Definition}[chapter]
%\newtheorem{corollary}{Corollary}[chapter]
%\newcounter{remarknr}[chapter]
%\newenvironment{remark}{\vskip\baselineskip
%\stepcounter{remarknr}\noindent{\bf
%Remark~\theremarknr.}}{\ \hfill $\Box$ \vskip\baselineskip}
%\newcounter{examplenr}[chapter]
%\newenvironment{example}[1]{\vskip\baselineskip
%\stepcounter{examplenr}\noindent{{\bf
%Example~\thechapter.\theexamplenr}\hskip .8em #1\\}}{\ \hfill $\Box$
%\vskip\baselineskip}

%\newcommand{\Ad}{\mbox{${\cal A}^d$}}

\def\Rxx{\Rbf_{\ssstyle X\kern-.1em X}}

\let\ssstyle=\scriptscriptstyle

% I add the following!

\def\Kout{\setbox1=\hbox{\Huge\bf K}\hbox to
1.05\wd1{\hspace{.05\wd1}% [arxiv_v2: inline-PS \special stripped, 291 chars]}}
\def\Sout{\setbox1=\hbox{\Huge\bf S}\hbox to 1.05\wd1{\hspace{.05\wd1}% [arxiv_v2: inline-PS \special stripped, 291 chars]}}

\def\scalefig#1{\epsfxsize #1\textwidth}
\begin{document}
\title{\bf \LARGE Wyner's Common Information: Generalizations and A New Lossy Source Coding Interpretation}
\IEEEoverridecommandlockouts
\author{Ge Xu,  Wei Liu, ~\IEEEmembership{Student Member, ~IEEE,} and  Biao Chen, ~\IEEEmembership{Senior Member, ~IEEE}%
\thanks{G. Xu and B. Chen are with the Department of Electrical Engineering and Computer Science, Syracuse
University, Syracuse, NY 13244 USA (email: gexu@syr.edu, bichen@syr.edu).}%
\thanks{W. Liu was with the Department of Electrical Engineering and Computer Science, Syracuse
University, Syracuse, NY 13244 USA. He is now with Bloomberg L.P., New York, NY 10022 USA (email:wliusyr@gmail.com).}
}
\maketitle
\begin{abstract}
Wyner's common information was originally defined for a pair of dependent discrete random variables. Its significance is largely reflected in, hence also confined to, several existing interpretations in various source coding problems. This paper attempts to both generalize its definition and to expand its practical significance by providing a new operational interpretation. The generalization is two-folded: the number of dependent variables can be arbitrary, so are the alphabet of those random variables. New properties are determined for the generalized Wyner's common information of $N$ dependent variables. More importantly, a lossy source coding interpretation of Wyner's common information is developed using the Gray-Wyner network. In particular, it is established that the common information equals to the smallest common message rate when the total rate is arbitrarily close to the rate distortion function with joint decoding. A surprising observation is that such equality holds independent of the values of distortion constraints as long as the distortions are within some distortion region. Examples about the computation of common information are given, including that of a pair of dependent Gaussian random variables.
\let\thefootnote\relax\footnote{
 The material in this paper was presented in part at the Annual Allerton Conference on Communication, Control, and Computing, Montecillo, IL, Sept. 2010 and Annual Conference on Information Sciences and Systems, Baltimore, MD, March 2011. This work was supported in part by the Army Research Office under Award W911NF-12-1-0383, by the Air Force Office of Scientific Research under Award FA9550-10-1-0458, and by the National Science Foundation under Award 1218289.}
\end{abstract}
\begin{keywords}
Common information, Gray-Wyner network, rate distortion function
\end{keywords}
\section{Introduction}\label{section 1}
Consider  a pair of dependent random variables $X$ and $Y$ with
joint distribution $p(x,y)$,  which denotes either the probability density function if $X$ and $Y$ are continuous or the probability mass function if $X$ and $Y$ are discrete. Quantifying the information that is common between $X$ and $Y$ has been a classical problem both in information theory and in mathematical statistics\cite{Shannon:1948,Gacs_Korner_CI73,Ahlswede_Korner_CI74,Wyner_CI_75IT}. The most widely used notion is Shannon's mutual information, defined as
\[
I(X;Y)=E \left[\log \frac{p(x,y)}{p(x)p(y)} \right]
\]
where $p(x)$ and $p(y)$ are the marginal distribution of $X$ and $Y$ corresponding to the joint distribution $p(x,y)$ and $E[\cdot]$ denotes expectation with respect to $p(x,y)$. Shannon's mutual information measures the amount of uncertainty reduction in one variable by observing the other. Its significance lies in its applications to a broad range of problems in which concrete operational meanings of $I(X;Y)$ can be established. These include both source and channel coding problems in information and communication theory \cite{Cover:Book91} and hypothesis testing problems in statistical inference \cite{CK:book}.

Other notions of information have also been defined between a pair of dependent variables. Most notable among them are G\'{a}cs and  K\"{o}rner's
common randomness $K(X,Y)$ \cite{Gacs_Korner_CI73} and Wyner's common information $C(X,Y)$ \cite{Wyner_CI_75IT}. G\'{a}cs and  K\"{o}rner's common randomness is defined as the maximum number of common bits per symbol that can be independently extracted from $X$ and $Y$.
Quite naturally, $K(X,Y)$ has found extensive applications in secure communications, e.g., for key generation \cite{Maurer,Ahlswede&csiszar1,Ahlswede&csiszar2}.
More recently, a new interpretation of $K(X,Y)$ using the Gray-Wyner source coding network was given in \cite{Kamath_Anantharam}. It was noted in \cite{Gacs_Korner_CI73,Witsenhausen_CI75} that the definition of $K(X,Y)$ is rather restrictive in that $K(X,Y)$ equals $0$ in most cases except for the
special case
when $X=(X',V)$ and $Y=(Y',V)$ and
$X',Y',V$ are independent variables or those $(X,Y)$ pair that can be converted to such a dependence structure through relabeling the realizations, i.e., whose distribution is a permutation of the original joint distribution matrix. Notice also that $K(X,Y)$ is defined only for discrete random variables.

% $C(X,Y)$ is the number of common random bits needed to generate correlated random variables.
Wyner's common information was originally defined
for a pair of discrete random variables with finite alphabet as
\beq C(X,Y)=\inf_{X- W- Y} I(X,Y;W).\label{eq:CI}\eeq
Here, the infimum is taken over all auxiliary random variables $W$ such that
$X$, $W$, and $Y$ form a Markov chain. Clearly, the quantity $C(X,Y)$ in (\ref{eq:CI}) can be defined for any pair of random variables with arbitrary alphabets. However, the operational meanings of $C(X,Y)$ available in existing literature are largely confined to that for discrete $X$ and $Y$. These include the minimum common rate for the Gray-Wyner lossless source coding problem under a sum rate constraint, the minimum rate of a common input of two independent random channels for distribution approximation \cite{Wyner_CI_75IT}, and strong coordination capacity of a two-node network  without common randomness and with actions assigned at one node\cite{Cuff_10IT}.
%The application of the common information of multiple random variables includes the  strong coordination capacity for a  network with no communication\cite{Cuff_10IT}.

%It was also observed in \cite{Wyner_CI_75IT} that \beq K(X,Y)\leq
%I(X;Y)\leq C(X,Y).\eeq
% For the
%example of $X=(X',V)$ and $Y=(Y',V)$ with $(X',Y',V)$ mutually
%independent, $C(X,Y)=I(X;Y)=K(X,Y)=H(V)$.

This paper intends to generalize Wyner's common information along two directions. The first is to generalize it to that of multiple dependent random variables. The second is to generalize it to that of continuous random variables.

For the first direction, Wyner's common information is defined through a conditional independence structure which is equivalent to the Markov chain condition for two dependent variables. Relevant properties related to this generalization are derived. In addition, we prove that Wyner's original interpretations in \cite{Wyner_CI_75IT} can be directly extended to that involving multiple variables.  Note that both mutual information and common randomness have also been generalized to that of multiple random variables \cite{Hu62,Yeung:book,Tyagi_Narayan_Gupta_11IT}.

For the second direction, we provide a new lossy source coding interpretation using the Gray-Wyner network. Specifically, we show that, for the Gray-Wyner network, Wyner's common information is precisely the smallest common message rate for a certain range of distortion constraints when the total rate is arbitrarily close to the rate distortion function with joint decoding. As the common information is only a function of the joint distribution, this smallest common rate remains constant even if the distortion constraints vary, as long as they are in a specific distortion region. There has also been recent effort in characterizing the common message rate for lossy source coding using the Gray-Wyner network  \cite{Viswanatha&Akyol&Rose:ISIT12}. We establish the equivalence between the characterization in \cite{Viswanatha&Akyol&Rose:ISIT12} with an alternative characterization presented in the present paper.

Computing Wyner's common information is known to be a challenging problem; $C(X,Y)$ was only resolved for several special cases described in \cite{Wyner_CI_75IT, Witsenhausen_CI76}. Along with our generalizations of Wyner's common information, we provide two new examples where we can explicitly evaluate the common information of multiple dependent variables. In particular, we derive, through an estimation theoretic approach, $C(X,Y)$ for a bivariate Gaussian source and its extension to the multi-variate case with a certain correlation structure.

The rest of the paper is organized as follows.
Section \ref{section Gray Wyner} reviews Wyner's two approaches for the common information of two discrete random variables, the general Gray-Wyner network, and the relations among joint, marginal, and conditional rate distortion functions.
Section \ref{section CI}  gives the definition of Wyner's common information for $N$ dependent random variables with arbitrary alphabets along with some associated properties. The operational meanings of Wyner's common information developed in \cite{Wyner_CI_75IT} are  also extended to that of $N$ discrete dependent random variables in Section \ref{section CI}.  In Section \ref{section  CI for arbitrary alphabets}, we provide a new interpretation of Wyner's common information using Gray-Wyner's lossy source coding network. Specifically, we prove that for the Gray-Wyner network, Wyner's common information is precisely the smallest common message rate for a certain range of distortion constraints when the total rate is arbitrarily close to the rate distortion function with joint decoding.  In Section \ref{section 4}, two examples, the  doubly symmetric binary source
and the bivariate Gaussian source, are used to illustrate the lossy source coding interpretation of Wyner's common information.
The common information for bivariate Gaussian source and its extension to the multi-variate case  is also derived in \ref{section 4}.  Section \ref{section 5} concludes this paper.

{\em Notation:}
Throughout this paper, we use calligraphic letter $\Xmat$ to denote the alphabet and $p(x)$ to denote either point mass function
or probability density function of a random variable  $X$.
Boldface capital letter $\Xbf^A$ denotes a vector of random variables $\{X_i\}_{i\in A}$ where $A$ is an index set. $A\backslash B$ denotes set theoretic subtraction, i.e., $A\backslash B=\{x:x\in A \mbox{ and } x\notin B\}$. For two real vectors of identical size $\xbf$ and $\ybf$, $\xbf\leq \ybf$ denotes component-wise inequality.

\section{Existing Results}\label{section Gray Wyner}
\subsection{Wyner's result}\label{Wyner's result}

 \begin{figure}
\centerline{
\begin{psfrags}
\psfrag{xnyn}[c]{$X^n, Y^n$}
\psfrag{encoder}[c]{Encoder}
\psfrag{decoder1}[c]{Decoder 1}
\psfrag{decoder2}[c]{Decoder 2}
\psfrag{w1}[c]{$W_1$}
\psfrag{w2}[c]{$W_2$}
\psfrag{w0}[c]{$W_0$}
\psfrag{xnh}[c]{$\hat{X}^n$}
\psfrag{ynh}[c]{$\hat{Y}^n$}
 \scalefig{.45}\epsfbox{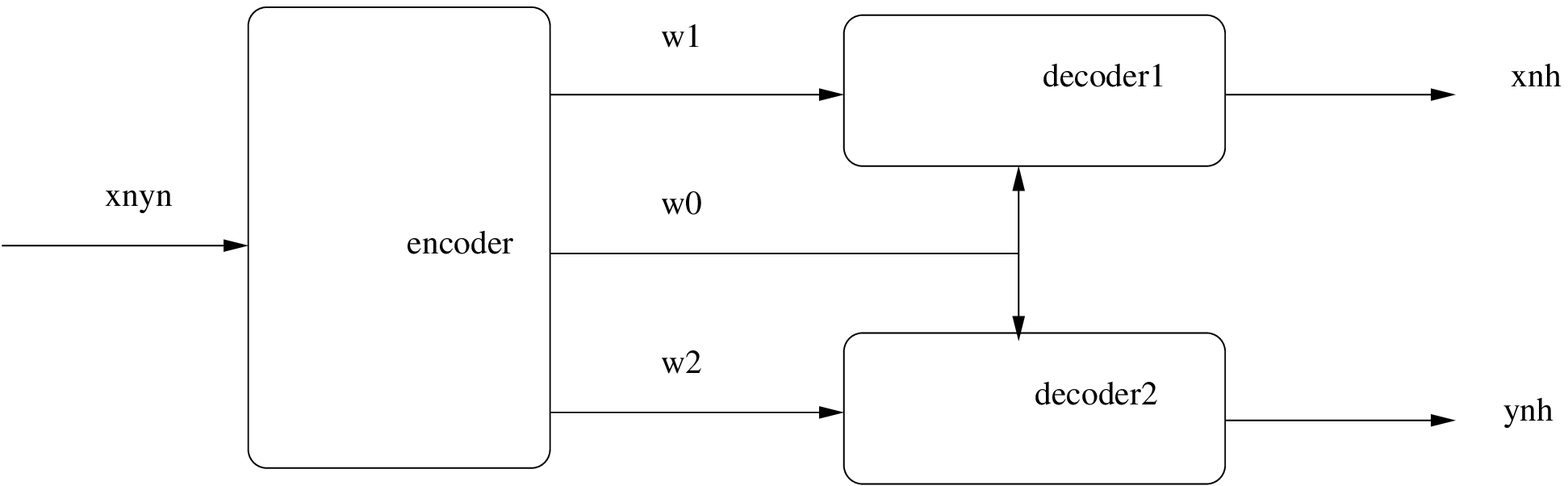}
\end{psfrags}}
\caption{\label{fig:model0}Source coding over a simple network.}
\end{figure}
\begin{figure}
\centerline{
\begin{psfrags}
\psfrag{W}[r]{$W$}
\psfrag{PROCESSOR 1}[l]{\!\!\!\!\small Processor 1}
\psfrag{PROCESSOR 2}[l]{\!\!\!\!\small Processor 2}
\psfrag{Xn}[c]{${\tilde X}^n$}
\psfrag{Yn}[c]{${\tilde Y}^n$}
 \scalefig{.3}\epsfbox{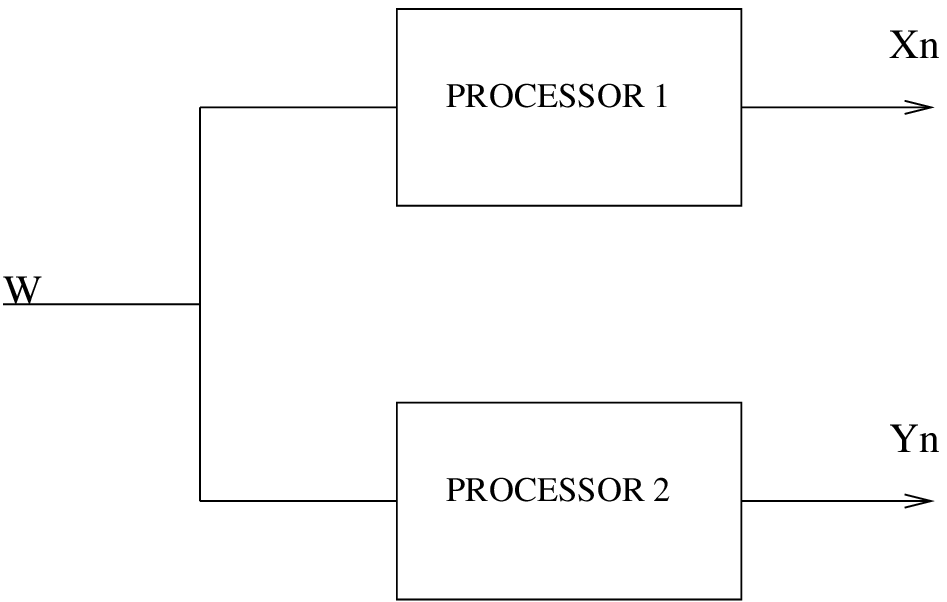}
\end{psfrags}}
\caption{\label{fig:model00}Random variable generators.}
\end{figure}
Wyner defined the common information of two discrete random variables $X$ and $Y$ with distribution $p(x,y)$ in equation (\ref{eq:CI}) and
provided two operational meanings for this definition.
 The first approach is shown in Fig.~\ref{fig:model0}. This model is a source coding network first studied by Gray and Wyner in \cite{Gray_Wyner_74}. In this model, the encoder observes a pair
 of sequences $(X^n,Y^n)$, and map
 them to three messages $W_0,W_1,W_2$, taking values in alphabets of respective sizes
 $2^{nR_0},2^{nR_1}$ and $2^{nR_2}$.
 Decoder 1, upon receiving $(W_0,W_1)$, needs to reproduce $X^n$ with high reliability
 while decoder 2, upon receiving $(W_0,W_2)$, needs to reproduce $Y^n$ with high reliability.
 %Let $\Delta$ be the average value of the Hamming distortions between  $X^n,\hat{X}^n$ and $Y^n,\hat{Y}^n$
Define \bqn \Delta=\frac{1}{2n}\left(E[d_H(X^n,\hat{X}^n)]+E[d_H(Y^n,\hat{Y}^n)]\right)\eqn
where $d_H(\cdot,\cdot)$ is the Hamming distortion. Let $C_1$ be the the infimum of all achievable $R_0$ for the system in Fig.~\ref{fig:model0} such that for any $\epsilon>0$, there exists, for $n$ sufficiently large, a source code with
 the total rate  $R_0+R_1+R_2\leq H(X,Y)+\epsilon$  and $\Delta\leq \epsilon$.

The second approach is shown in Fig.~\ref{fig:model00}. In this approach, the joint distribution  $p(x^n,y^n)=\prod_{i=1}^np(x_i,y_i)$ is approximated
by the output distribution of a pair of  random number generators.
A common input $W$,
uniformly distributed on ${\cal W}=\{1,\cdots,2^{nR_0}\}$ is sent
to two separate processors which are  independent of
each other. These processors (random number generators)
generate independent and identically distributed (i.i.d) sequences
according to two distributions $q_1(x^n|w)$ and $q_2(y^n|w)$ respectively. The output
sequences of the two processors are denoted by ${\tilde X}^n$ and
${\tilde Y}^n$ respectively and the joint distribution of the
output sequences is given by
\bqn q(x^n,y^n)=\sum_{w\in{\cal W}}\frac{1}{{|\cal
W|}}q_1(x^n|w)q_2(y^n|w).\label{eq:sta}\eqn
Let  \bqn D_n(q,p)=\frac{1}{n} \sum_{x^n\in{\cal
X}^n,y^n\in {\cal
Y}^n}q(x^n,y^n)\log{\frac{q(x^n,y^n)}{p(x^n,y^n)}}.\eqn
Let $C_2$ be the infimum of rate $R_0$ for the common input such that for any $\epsilon>0$, there exists a pair of distribtions  $q_1(x^n|w)$, $q_2(y^n|w)$ and $n$  such that
$D_n(q,p)\leq \epsilon$.

Wyner proved in\cite{Wyner_CI_75IT} that
 \bqn
C_1=C_2= C(X,Y).
\eqn
\subsection{Generalized Gray-Wyner networks}

Consider the  Gray-Wyner source coding
network \cite{Gray_Wyner_74}  with one encoder and $N$ decoders as shown in Fig.~\ref{fig:model1}.
The encoder observes  an i.i.d. vector source sequence $\{\Xbf_1,\cdots,\Xbf_n\}$ where each $\Xbf_k=\{X_{1k},\cdots,X_{Nk}\}$, $k=1,\cdots, n$, is a length-$N$ vector with joint distribution $p(\xbf)$. Denote by $X_i^n=[X_{i1},\cdots, X_{in}]$ the $i$th component of the vector sequence. There are a total of $N$ receivers, with the $i$th receiver only interested in recovering the $i$th component sequence $X^n_i$.
 The encoder encodes the source into $N+1$ messages, one is a public message available at all receivers while the other $N$
messages are private messages only available at the corresponding receivers.

\begin{figure}
\centerline{
\begin{psfrags}
\psfrag{X1}[c]{$X_1^n, \cdots,X_N^n$}
\psfrag{Enc}[c]{Encoder}
\psfrag{D1}[c]{Decoder 1}
\psfrag{Dk}[c]{Decoder 2}
\psfrag{DK}[c]{Decoder N}
\psfrag{R1}[c]{$W_1$}
\psfrag{R0}[c]{$W_0$}
\psfrag{Rk}[c]{$W_2$}
\psfrag{RK}[c]{$W_N$}
%\psfrag{1}[c]{$\cdots$}
\psfrag{2}[c]{$\cdots$}
\psfrag{hX1}[c]{$\hat{X}_1^n$}
\psfrag{hXk}[c]{$\hat{X}_2^n$}
\psfrag{hXK}[c]{$\hat{X}_N^n$}
 \scalefig{.5}\epsfbox{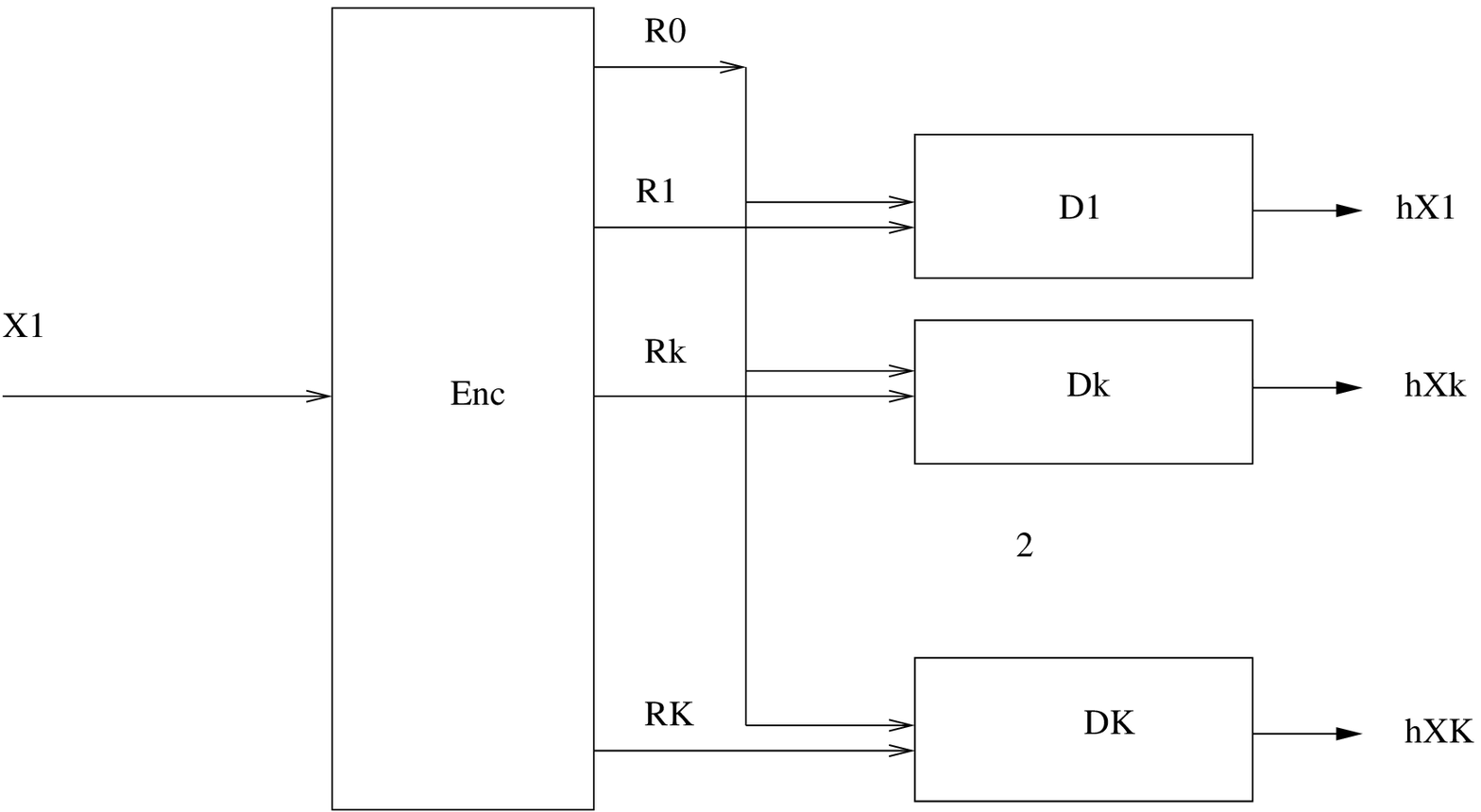}
\end{psfrags}}
\caption{\label{fig:model1} Generalized Gray-Wyner source coding network.}
\end{figure}

For $m=1,2,\cdots$, let $I_m=\{0,1,2,\cdots, m-1\}$. An $(n,M_0,M_1,\cdots,M_N)$ code is defined by
 %the achievable region of $(R_0,R_1,\cdots,R_N)$
%for lossless and lossy cases are given.
\begin{itemize} \item An encoder mapping
\[
 f:  {\cal X}^n_1 \times \cdots \times {\cal X}^n_N \rightarrow  I_{M_0}\times I_{M_1} \times \cdots I_{M_N},
 \]
%where the  cardinality of  $\Mmat_i$ is equal to $M_i$.

\item $N$ decoder mappings
 \bqn g_i:
I_{M_i}\times I_{M_0} \rightarrow \hat{\cal X}^n_i, \hspace{0.2in} i=1,2,\cdots,N.\eqn
\end{itemize}
For an $(n,M_0,M_1,\cdots,M_N)$ code,
let $f(\Xbf_1,\cdots,\Xbf_n)=(W_0,W_1,\cdots,W_N)$ and $\hat{X}_i^n=g_i(W_i,W_0)$, $i=1,2,\cdots,N$.

We discuss below the lossless and lossy source coding using the generalized Gray-Wyner network.

\subsubsection{Lossless Gray-Wyner source coding}

$\\$
Define the probability of error as
%\beq P^{(n)}_e=Pr\{(\hat{X}^n_1\hat{X}^n_2\cdots\hat{X}^n_N)\neq (X^n_1,X^n_2,\cdots,X^n_N)\},\eeq
\bqa  P^{(n)}_e= \frac{1}{nN}\sum_{i=1}^N E[ d_H(X_i^n,\hat{X}^n_i)],\eqa
where $\hat{X}^n_i=g_i(W_i,W_0)$
for $i=1,\cdots, N$ and $d_H(u^n,\hat{u}^n)$ is the Hamming distance between  $u^n$ and $\hat{u}^n$.

%For random variables $X_1,X_2,\cdots,X_N$ with finite alphabets,
A rate tuple $(R_0,R_1,\cdots,R_N)$ is said to be {\em achievable} if for any $\epsilon>0$,  there exists, for $n$ sufficiently large, an  $(n,M_0,M_1,\cdots,M_N)$ code such that
\bqa M_i&\leq& 2^{n(R_i+\epsilon)},\hspace{0.2in} i=0,1,\cdots,N,\\
 P^{(n)}_e&\leq& \epsilon .\eqa

Denote by $\Rmat_1$ the region of all achievable rate tuples $(R_0,R_1,\cdots,R_N)$.
%The rate region of this source coding problem is given in the following theorem.
\begin{theorem}\label{them A1}
$\Rmat_1$
is the union of  all rate tuples $(R_0,R_1,\cdots,R_N)$ that satisfy
\bqa R_0&\geq& I(X_1,\cdots,X_N; W),\label{eq:them41}\\
R_i&\geq& H(X_i|W),\hspace{0.2in} i=1,2,\cdots,N,\label{eq:them42}\eqa
for some $W\sim p(w|x_1,\cdots,x_N)$. %where $W\in {\cal W}$ and $|{\cal W}|\leq\prod^N_{i=1}|{\cal X}_i|+2$.
\end{theorem}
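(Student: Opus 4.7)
The plan is to extend Gray and Wyner's original two-variable argument to $N$ sources; the structure of the proof is the standard achievability-plus-converse split, with a single auxiliary $W$ replacing the scalar auxiliary of the $N=2$ case.

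For achievability I would use a two-stage random code. First, fix any $p(w\mid x_1,\dots,x_N)$ (hence the marginal $p(w)$), and generate a common codebook $\{W^n(w_0):w_0\in I_{M_0}\}$ of $2^{nR_0}$ sequences drawn i.i.d.\ from $p(w)$. The encoder searches for an index $w_0$ such that $(W^n(w_0),X_1^n,\dots,X_N^n)$ is jointly $\epsilon$-typical; by the standard covering/joint typicality lemma this succeeds with high probability provided $R_0>I(X_1,\dots,X_N;W)$. Once $W^n$ is fixed, each private encoder losslessly compresses $X_i^n$ \emph{conditionally} on $W^n$ using a conditional typical-set code of size $2^{n(H(X_i\mid W)+\epsilon)}$, and sends the resulting bin index as $W_i$. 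Decoder $i$ first recovers $W^n$ from $W_0$ and then $X_i^n$ from $(W_0,W_i)$; the total error probability goes to zero by the union bound.

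For the converse I would take any sequence of $(n,M_0,\dots,M_N)$ codes with $P_e^{(n)}\to 0$ and define the auxiliary variables
\[
W_k \;=\; (W_0,\, X_1^{k-1},\dots,X_N^{k-1}), \qquad k=1,\dots,n.
\]
For the common rate, $nR_0+n\epsilon\ge H(W_0)\ge I(X_1^n,\dots,X_N^n;W_0)$, and the i.i.d.\ source lets me rewrite the chain-rule expansion as $\sum_{k=1}^n I(X_{1k},\dots,X_{Nk};W_k)$. For each private rate, Fano's inequality applied to $\hat X_i^n=g_i(W_i,W_0)$ gives $H(X_i^n\mid W_0,W_i)\le n\epsilon_n'$, and then
\[
nR_i+n\epsilon \;\ge\; H(W_i\mid W_0)\;\ge\; H(X_i^n\mid W_0)-n\epsilon_n'\;\ge\; \sum_{k=1}^n H(X_{ik}\mid W_k)-n\epsilon_n',
\]
where the last step uses that conditioning on the full past $X_1^{k-1},\dots,X_N^{k-1}$ can only reduce entropy. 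Introducing a time-sharing variable $Q$ uniform on $\{1,\dots,n\}$ and independent of the source, and setting $W=(W_Q,Q)$, $X_i=X_{iQ}$, the source's i.i.d.\ property makes $(X_1,\dots,X_N)\sim p(\mathbf{x})$ and turns the $n$-letter sums into $I(X_1,\dots,X_N;W)$ and $H(X_i\mid W)$, respectively.

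The routine parts are the two typicality arguments in achievability and the Fano step in the converse; the real bookkeeping obstacle is the converse's single-letterization, specifically checking that the natural choice $W_k=(W_0,X_1^{k-1},\dots,X_N^{k-1})$ simultaneously produces the correct single-letter bound for \emph{all} $N+1$ rates under the \emph{same} auxiliary. This is why I include all past source symbols across \emph{every} coordinate in $W_k$: it gives the tight common-rate identity while still upper-bounding $H(X_i^n\mid W_0)$ through the conditioning-reduces-entropy step. Once this observation is in place, the time-sharing closure and the standard cardinality bound on $W$ (via Carath\'eodory) complete the characterization.
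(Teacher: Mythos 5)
Your proposal is correct and follows essentially the same route as the paper, which simply states that Theorem \ref{them A1} is a direct extension of Theorem 4 of Gray and Wyner and omits the proof: the standard covering argument plus conditional lossless coding for achievability, and a converse with the auxiliary identification $W_k=(W_0,X_1^{k-1},\dots,X_N^{k-1})$ followed by time sharing. Your observation that including the full past of \emph{all} components in $W_k$ is what makes a single auxiliary serve both the common-rate identity and the private-rate bounds is exactly the right bookkeeping point.
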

\subsubsection{Lossy Gray-Wyner source coding}

$\\$
Let $\dbf(\xbf,\hat{\xbf})\triangleq\{d_1(x_1,\hat{x}_1),\cdots,d_N(x_N,\hat{x}_N)\}$ be a compound distortion measure.
Define $\Delta_i, i=1,\cdots,N$ to be the average distortion between the $i$th component sequence of the encoder input and the $i$th decoder output,
\bqa \Delta_i=E[d_i(X^n_i,\hat{X}^n_i)]=\frac{1}{n}\sum_{k=1}^n E[d_i(X_{ik},\hat{X}_{ik})].\eqa

Define the vector of average distortions to be ${\bf \Delta}\triangleq\{\Delta_1,\cdots,\Delta_N\}$.
An $(n,M_0,M_1,\cdots,M_N)$ code with an average distortion vector ${\bf \Delta}$ is said to be an $(n,M_0,M_1,\cdots,M_N,{\bf \Delta})$ rate distortion code.
Let $\Dbf\triangleq \{D_1,D_2,\cdots, D_N\}\in \mathbb{R}^N_{+}$.  A rate tuple $(R_0,R_1,\cdots,R_N)$ is said to be {\em $\Dbf$-achievable} if for arbitrary $\epsilon>0$, there exists, for  $n$ sufficiently large, an  $(n,M_0,M_1,\cdots,M_N,{\bf \Delta})$ code  such that
\bqa M_i&\leq& 2^{n(R_i+\epsilon)},\hspace{0.2in} i=0,1,\cdots,N,\\
 {\bf \Delta}&\leq& \Dbf+\epsilon.\eqa
% where ${\bf \Delta}\leq\Dbf+\epsilon$ means $\Delta_i\leq D_i+\epsilon$ for each $i=1,\cdots,N$.
Let $\Rmat_2(\Dbf)$ be the region of all $\Dbf$-achievable rate tuples $(R_0,R_1,\cdots,R_N)$.
\begin{theorem}\label{them A2}
 $\Rmat_2(\Dbf)$ is the union of all rate tuples $(R_0,R_1,\cdots,R_N)$
 that satisfy
\bqa R_0&\geq& I(X_1,\cdots,X_N; W),\\
R_i&\geq& R_{X_i|W}(D_i),\hspace{0.2in} i=1,2,\cdots,N,\eqa
 for some $W\sim p(w|x_1,\cdots,x_N)$.
\end{theorem}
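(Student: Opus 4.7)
The plan is to prove Theorem \ref{them A2} in the standard two-step fashion: achievability via a two-stage random coding argument, and converse via single-letterization with an appropriately identified auxiliary random variable.

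\textbf{Achievability.} I would fix any $W$ with conditional distribution $p(w\mid x_1,\ldots,x_N)$ and construct a code in two layers. First, generate a codebook of $2^{n(R_0+\epsilon)}$ sequences $W^n$ drawn i.i.d.\ from the induced marginal $p(w)$. Given a source realization $(X_1^n,\ldots,X_N^n)$, the encoder selects a $W^n$ jointly typical with the source; by the covering lemma this succeeds with high probability provided $R_0 > I(X_1,\ldots,X_N;W)$. The index of this $W^n$ is the common message $W_0$. Second, for each $i$ the encoder applies a standard conditional rate--distortion code at rate $R_i > R_{X_i\mid W}(D_i)$, treating $W^n$ as side information available at both encoder and decoder, yielding the private index $W_i$. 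Decoder $i$ first reconstructs $W^n$ from $W_0$ and then $\hat X_i^n$ from $(W^n,W_i)$. A standard typicality/distortion calculation bounds the expected distortion by $D_i+\epsilon$.

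\textbf{Converse.} Suppose $(R_0,\ldots,R_N)$ is $\mathbf{D}$-achievable, with messages $(W_0,\ldots,W_N)$ and $\hat X_i^n=g_i(W_0,W_i)$. For each $k\in\{1,\ldots,n\}$ define $V_k \triangleq (W_0,\mathbf{X}_1,\ldots,\mathbf{X}_{k-1})$, let $Q$ be uniform on $\{1,\ldots,n\}$ independent of everything else, and set the single-letter auxiliaries $W \triangleq (V_Q,Q)$, $\mathbf{X} \triangleq \mathbf{X}_Q$, $\hat X_i \triangleq \hat X_{iQ}$. Using the i.i.d.\ property $\mathbf{X}_k \perp \mathbf{X}^{k-1}$,
\[
nR_0 \geq I(\mathbf{X}^n;W_0) = \sum_{k=1}^n I(\mathbf{X}_k;W_0\mid\mathbf{X}^{k-1}) = \sum_{k=1}^n I(\mathbf{X}_k;V_k) = n\,I(X_1,\ldots,X_N;W).
\]
For each $i$, since $W_i$ is a function of $\mathbf{X}^n$,
\[
nR_i \geq H(W_i\mid W_0) = I(\mathbf{X}^n;W_i\mid W_0) = \sum_{k=1}^n I(\mathbf{X}_k;W_i\mid V_k) \geq \sum_{k=1}^n I(X_{ik};\hat X_{ik}\mid V_k) = n\,I(X_i;\hat X_i\mid W),
\]
where the middle inequality uses data processing (with $\hat X_{ik}$ a function of $(W_0,W_i)\subseteq(V_k,W_i)$) and the last equality is the time-sharing identity. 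The definition of the conditional rate--distortion function then gives $I(X_i;\hat X_i\mid W)\geq R_{X_i\mid W}(\mathbb{E}[d_i(X_i,\hat X_i)])$, and since $\mathbb{E}[d_i(X_i,\hat X_i)]\leq D_i+\epsilon$, monotonicity yields $R_i\geq R_{X_i\mid W}(D_i+\epsilon)$. Sending $\epsilon\to 0$ and invoking continuity of the conditional RDF closes the argument.

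\textbf{Main obstacle.} The delicate step is the simultaneous identification of a single auxiliary $W$ that certifies all $N+1$ rate inequalities. Including the entire vector past $\mathbf{X}^{k-1}$ in $V_k$, rather than only the component past $X_i^{k-1}$ that a na\"ive per-decoder converse would suggest, is crucial: it lets the $R_0$ chain telescope cleanly via the product structure $\mathbf{X}_k\perp\mathbf{X}^{k-1}$, while simultaneously furnishing a conditioning rich enough to expose the conditional rate--distortion bound for every decoder $i$ in a form compatible with the same $W$. Once this joint choice is in place, the remainder of the converse reduces to time-sharing plus monotonicity/continuity of the conditional RDF, and the achievability is a routine application of the covering and packing lemmas.
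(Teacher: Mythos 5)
Your proof is correct and follows the standard two-stage achievability plus past-conditioned-auxiliary converse; the paper gives no proof of this theorem, deferring to Theorem~8 of Gray and Wyner, whose argument is essentially the one you reproduce (including the key identification $W=(W_0,\mathbf{X}^{Q-1},Q)$). The only step you compress is the final limiting argument---extracting a single $W$ as $\epsilon\to 0$ requires a cardinality bound on $\mathcal{W}$ and a compactness/continuity argument---which is the same standard technicality handled in the cited reference.
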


Here, $R_{X_i|W}(D_i)$ is the conditional rate distortion function  defined as \cite{Gray_72_CRD}
\beq R_{X_i|W}(D_i)=\min_{p_t(\hat{x}_i|x_i,w):Ed_i(X_i,\hat{X}_i)\leq D_i} I(X_i;\hat{X}_i|W).\label{eq CRD}\eeq
%where the minimum is taken with respect to all test channels $$ such that $$.

Theorems \ref{them A1} and \ref{them A2} are direct extensions of Theorem 4 and 8 in \cite{Gray_Wyner_74} for Gray-Wyner network with two receivers.
 Note that in \cite{Gray_Wyner_74}, the authors proved only the discrete case for \cite[Theorem 8]{Gray_Wyner_74}, the proof for continuous alphabets can be constructed in a similar fashion.

\subsection{Joint, marginal and conditional rate distortion functions}
In this section, we review the joint, marginal and conditional rate distortion functions and their relations.
Two-dimensional sources will be considered and the results can be generalized immediately to $N$-dimensional vector sources.

Given  a two-dimensional source $(X_1,X_2)$ with probability distribution $p(x_1,x_2)$ and two distortion measures $d_1(x_1,\hat{x}_1)$ and $d_2(x_2,\hat{x}_2)$
defined on  $\Xmat_1\times\hat{\Xmat}_1$ and  $\Xmat_2\times\hat{\Xmat}_2$,
the joint rate distortion function  is given by %$R_{X_1X_2}(D_1,D_2)$ is the rate distortion function for $X_1,X_2$ with joint encoding and decoding, i.e.,
 \beq R_{X_1X_2}(D_1,D_2)=\min I(X_1X_2;\hat{X}_1\hat{X}_2),\eeq
 where the minimum is taken over all  test channels $p_t(\hat{x}_1\hat{x}_2|x_1x_2)$ such that $Ed_1(X_1,\hat{X}_1)\leq D_1$, $Ed_2(X_2,\hat{X}_2)\leq D_2$.
The conditional rate distortion function is defined in (\ref{eq CRD}).
The   joint, marginal and conditional rate distortion functions satisfy the following inequalities. % \cite{Gray_73}.
\begin{lemma}\cite{Gray_73,Leiner_77}\label{lemma 1}
Given a two-dimensional source $(X_1,X_2)$ with joint distribution $p(x_1,x_2)$ and  two distortion measures $d_1(x_1,\hat{x}_1)$, $d_2(x_2,\hat{x}_2)$
defined respectively on $\Xmat_1\times\hat{\Xmat}_1$ and  $\Xmat_2\times\hat{\Xmat}_2$, the rate distortion functions satisfy the following inequalities
\begin{subequations}\label{all1}
\begin{align} R_{X_1X_2}(D_1,D_2)&\geq R_{X_1|X_2}(D_1)+R_{X_2}(D_2),\label{Rd 3}\\
  R_{X_1|X_2}(D_1)&\geq R_{X_1}(D_1)-I(X_1;X_2),\label{Rd 4}\\
 R_{X_1X_2}(D_1,D_2)&\geq R_{X_1}(D_1)+R_{X_1}(D_2)-I(X_1;X_2),\label{Rd 5}
 \end{align}\end{subequations}
\vspace{-0.4in}\begin{subequations}\label{all2}
\begin{align}
R_{X_1}(D_1)&\geq R_{X_1|X_2}(D_1),\label{Rd 1}\\
 R_{X_1}(D_1)+R_{X_2}(D_2)&\geq R_{X_1X_2}(D_1,D_2).\label{Rd 2}
 \end{align}\end{subequations}
Sufficient conditions for equality in (\ref{all1}) are that the optimum backward test channels for the functions on the left side of each
equation factor appropriately, i.e., for (\ref{Rd 3}) $p_b(x_1x_2|\hat{x}_1\hat{x}_2)=p(x_1|\hat{x}_1x_2)p(x_2|\hat{x}_2)$, for (\ref{Rd 4}) $p_b(x_1|\hat{x}_1x_2)=p(x_1|\hat{x}_1)$ and for (\ref{Rd 5}) that $p_b(x_1x_2|\hat{x}_1\hat{x}_2)=p(x_1|\hat{x}_1)p(x_2|\hat{x}_2)$.
Equalities hold in (\ref{all2}) if and only if $X_1$ and $X_2$ are independent.
\end{lemma}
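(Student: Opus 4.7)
The plan is to prove the five inequalities by the standard ``choose an optimal test channel and manipulate mutual information identities'' technique, then read off the equality conditions from where each step is tight. I will handle (\ref{all1}) and (\ref{all2}) in opposite directions: for (\ref{all1}) I start from the optimizer on the left-hand side and bound it below using the operational definitions of the right-hand side quantities, while for (\ref{all2}) I build an explicit test channel for the right-hand side that serves as a feasible (hence suboptimal) test channel for the left-hand side.

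For (\ref{Rd 3}), I would take an optimal joint test channel $p_t(\hat{x}_1\hat{x}_2|x_1x_2)$ achieving $R_{X_1X_2}(D_1,D_2)$ and apply the chain rule
\[
I(X_1X_2;\hat{X}_1\hat{X}_2) = I(X_2;\hat{X}_1\hat{X}_2) + I(X_1;\hat{X}_1\hat{X}_2|X_2) \geq I(X_2;\hat{X}_2) + I(X_1;\hat{X}_1|X_2),
\]
and then lower bound the two terms by $R_{X_2}(D_2)$ and $R_{X_1|X_2}(D_1)$ using the induced marginal and conditional test channels, which both still satisfy their respective distortion constraints. For (\ref{Rd 4}) I take an optimal conditional test channel $p_t(\hat{x}_1|x_1,x_2)$ for $R_{X_1|X_2}(D_1)$ and use the identity $I(X_1;X_2)+I(X_1;\hat{X}_1|X_2)=I(X_1;\hat{X}_1X_2)\geq I(X_1;\hat{X}_1)\geq R_{X_1}(D_1)$, noting that the induced marginal test channel meets the $D_1$ constraint. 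Inequality (\ref{Rd 5}) then follows by simply chaining (\ref{Rd 3}) and (\ref{Rd 4}).

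For (\ref{all2}), I build explicit feasible test channels. To get (\ref{Rd 1}), take the optimal marginal test channel $p_t(\hat{x}_1|x_1)$ for $R_{X_1}(D_1)$ and use it unchanged as a conditional test channel, so $X_2 - X_1 - \hat{X}_1$ forms a Markov chain. The data-processing-type identity $I(X_1;\hat{X}_1|X_2)=I(X_1;\hat{X}_1)-I(X_2;\hat{X}_1)\leq I(X_1;\hat{X}_1)=R_{X_1}(D_1)$ then yields the bound. For (\ref{Rd 2}), I take the product channel $p_t(\hat{x}_1\hat{x}_2|x_1x_2)=p_t(\hat{x}_1|x_1)p_t(\hat{x}_2|x_2)$ built from the two optimal marginal test channels, which satisfies both distortion constraints; independence of the noise gives $H(\hat{X}_1\hat{X}_2|X_1X_2)=H(\hat{X}_1|X_1)+H(\hat{X}_2|X_2)$, so $I(X_1X_2;\hat{X}_1\hat{X}_2)\leq I(X_1;\hat{X}_1)+I(X_2;\hat{X}_2)$.

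Finally, for the equality conditions I would simply trace back through each derivation and identify the single inequality that was used: in (\ref{Rd 3}) it is the step $I(X_2;\hat{X}_1\hat{X}_2)\geq I(X_2;\hat{X}_2)$ together with the separate optimality of the induced channels, which becomes equality precisely when the factorization $p_b(x_1x_2|\hat{x}_1\hat{x}_2)=p(x_1|\hat{x}_1 x_2)p(x_2|\hat{x}_2)$ holds for the optimizer of the left side; in (\ref{Rd 4}) equality requires $p_b(x_1|\hat{x}_1 x_2)=p(x_1|\hat{x}_1)$; and (\ref{Rd 5}) combines both. For (\ref{all2}) the only place slack enters is the term $I(X_2;\hat{X}_1)$ in (\ref{Rd 1}) and the entropy inequality in (\ref{Rd 2}), both of which collapse exactly when $X_1$ and $X_2$ are independent. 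The main technical nuisance, and the only non-mechanical step, is verifying that the induced marginal/conditional test channels arising from the optimizer in (\ref{Rd 3}) are themselves feasible (which is immediate from linearity of expected distortion) and that the equality conditions stated in terms of the backward channels are indeed the tightness conditions for these mutual-information decompositions — this is where I would be most careful when writing the formal proof.
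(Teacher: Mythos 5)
First, a point of reference: the paper offers no proof of this lemma at all --- it is quoted from Gray and Leiner --- so your attempt can only be measured against the classical arguments. Your derivations of the five \emph{inequalities} are correct and are exactly those arguments: chain rule plus data processing applied to the optimizer of the left-hand side for (\ref{Rd 3}) and (\ref{Rd 4}), chaining those two for (\ref{Rd 5}), and explicit feasible (marginally-induced or product) test channels for (\ref{Rd 1}) and (\ref{Rd 2}). No complaint there.

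The genuine gap is in the equality conditions for (\ref{all1}). Tracing back through your derivation of (\ref{Rd 3}), the slack sits in four places, not two: (i) $I(X_2;\hat{X}_1\hat{X}_2)\geq I(X_2;\hat{X}_2)$, (ii) $I(X_1;\hat{X}_1\hat{X}_2|X_2)\geq I(X_1;\hat{X}_1|X_2)$, (iii) $I(X_2;\hat{X}_2)\geq R_{X_2}(D_2)$, and (iv) $I(X_1;\hat{X}_1|X_2)\geq R_{X_1|X_2}(D_1)$. The backward-channel factorization $p_b(x_1x_2|\hat{x}_1\hat{x}_2)=p(x_1|\hat{x}_1x_2)p(x_2|\hat{x}_2)$ is precisely the Markov structure that closes (i) and (ii), but it says nothing by itself about (iii) and (iv): you still need the induced marginal channel to be optimal for $R_{X_2}(D_2)$ and the induced conditional channel to be optimal for $R_{X_1|X_2}(D_1)$, and that does not follow from the factorization via any mutual-information identity. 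Without (iii) and (iv) being tight, the factorization merely re-derives $R_{X_1X_2}(D_1,D_2)\geq R_{X_1|X_2}(D_1)+R_{X_2}(D_2)$ a second time; it never produces the reverse inequality needed for equality. Closing this requires the variational (Kuhn--Tucker/parametric) characterization of optimal backward test channels --- one shows that when the optimal joint backward channel factors, each factor satisfies the optimality conditions of its own rate-distortion problem at the relevant distortion --- and that is the real content of Gray's sufficiency proof. Your closing sentence correctly flags this as the delicate point, but the proposal does not contain the idea needed to resolve it, and the phrase ``becomes equality precisely when the factorization holds'' both overstates the lemma (which claims only sufficiency) and under-delivers on the proof (it omits the optimality of the induced channels). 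A milder version of the same issue affects the ``only if'' direction of (\ref{all2}): equality forcing independence cannot be read off from the one particular test channel you constructed, and indeed for $D_1$ above the maximal distortion both sides of (\ref{Rd 1}) vanish regardless of dependence, so that direction needs its own argument together with a qualification on the range of distortions.
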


%The conditions under which  is
 Furthermore, Gray has shown that under quite general conditions, equalities hold in (\ref{all1}) for small values of distortion. This is because the marginal, joint and conditional rate distortion functions
 equal to their  Extended Shannon Lower Bounds (ESLB) \cite{Gray_73},\cite{Gray_72_CRD} under suitable conditions. % which  satisfy certain equal interrelations.
These ESLB, denoted by $R_X^{(L)}(D)$ for a rate distortion function $R_X(D)$, satisfy the following property.
Denote by $\Dmat$ a surface in the $m$-dimensional space and the inequality $\Delta\leq\Dmat$ means that there exists a vector $\beta\in\Dmat$ such that $\Delta\leq \beta$. If there is no such a vector,  $\Delta>\Dmat$. Likewise, $\Dmat_1\leq\Dmat_2$ means that $\beta\leq\Dmat_2$ for any $\beta\in\Dmat_1$ \cite{Gray_73}.
 \begin{lemma}\cite{Gray_73}\label{lemma 2}
 Given a two-dimensional source $(X_1,X_2)$ with joint distribution $p(x_1,x_2)$ such that for $x_1\in \Xmat_1,x_2\in \Xmat_2$, $p(x_2|x_1)>0$,
reproduction alphabets $\hat{\Xmat}_1=\Xmat_1$, $\hat{\Xmat_2}=\Xmat_2$ and two per-letter distortion measures $d_1(x_1,\hat{x}_1)$ and $ d_2(x_2,\hat{x}_2)$ that satisfy
\bqa d_i(x_i,\hat{x}_i)&>&d_i(x_i,x_i)=0, x_i\neq \hat{x}_i, i=1,2,\eqa
there exist strictly positive surfaces $\Dmat(X_1X_2)$, $\Dmat(X_1|X_2)$, $\Dmat(X_1)$ and $\Dmat(X_2)$ such that
\bqn \begin{array}{ll} R_{X_1X_2}(D_1,D_2)=R_{X_1X_2}^{(L)}(D_1,D_2), &\mbox{if}~(D_1,D_2)\leq \Dmat(X_1X_2),\\
R_{X_1|X_2}(D_1)=R_{X_1|X_2}^{(L)}(D_1), &\mbox{if}~ D_1\leq \Dmat(X_1|X_2),\\
R_{X_1}(D_1)=R_{X_1}^{(L)}(D_1), &\mbox{if}~D_1\leq \Dmat(X_1),\\
R_{X_2}(D_2)=R_{X_2}^{(L)}(D_2), &\mbox{if}~D_2\leq \Dmat(X_2),
\end{array}
\eqn
and
\bqn \Dmat(X_1|X_2)&\leq& \Dmat(X_1),\\
\Dmat(X_1X_2)&\leq& \left(\Dmat(X_1|X_2),\Dmat(X_2)\right) \leq \left(\Dmat(X_1),\Dmat(X_2)\right).\eqn
Finally,
\bqa R_{X_1X_2}^{(L)}(D_1,D_2)&=&R_{X_1|X_2}^{(L)}(D_1)+R_{X_2}^{(L)}(D_2),\label{ESLB 1}\\
&=&R_{X_1}^{(L)}(D_1)+R_{X_2}^{(L)}(D_2)-I(X_1;X_2)\label{ESLB 2}.\eqa
 \end{lemma}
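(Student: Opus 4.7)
The plan is to follow Gray's approach \cite{Gray_73} by handling each of the four rate distortion functions with a low-distortion Shannon-lower-bound (SLB) argument, and then stitching the results together using the additivity of the ESLBs. First, via the Lagrangian formulation one lower-bounds each mutual information $I(\cdot\,;\cdot)$ by the appropriate source entropy minus the maximal conditional entropy consistent with the distortion budget; the ESLB, denoted $R^{(L)}$, is precisely the best such bound, so the inequality $R \geq R^{(L)}$ holds automatically for each of $R_{X_1X_2}$, $R_{X_1|X_2}$, $R_{X_1}$, and $R_{X_2}$.

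Second, I would establish tightness in a low-distortion region. For each rate distortion function, the SLB is attained iff one can produce a backward test channel whose induced marginal on the source coincides with the maximum-entropy distribution from the Lagrangian and whose distortion meets the constraint. The assumption $p(x_2|x_1)>0$ together with $d_i(x_i,x_i)=0$ guarantees that, for sufficiently small distortion, one can take the backward channel to be a small perturbation of the identity, keeping the induced source distribution correct and meeting the distortion budget. The supremum of all such distortions where this construction works defines the surface $\Dmat(\cdot)$ for that rate distortion function.

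Third, I would verify the monotonic containment of the $\Dmat$ surfaces. Conditioning on $X_2$ can only relax the constraints on the perturbation used to build the optimal backward channel for $X_1$, hence $\Dmat(X_1|X_2)\leq \Dmat(X_1)$. For the joint case, a backward channel that factors as $p(x_1 x_2|\hat{x}_1 \hat{x}_2)=p(x_1|\hat{x}_1 x_2)\,p(x_2|\hat{x}_2)$ or $=p(x_1|\hat{x}_1)\,p(x_2|\hat{x}_2)$ achieves the joint ESLB only if each factor individually works within its own admissible distortion surface, giving the chain $\Dmat(X_1X_2)\leq (\Dmat(X_1|X_2),\Dmat(X_2))\leq (\Dmat(X_1),\Dmat(X_2))$. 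The additivity identities (\ref{ESLB 1}) and (\ref{ESLB 2}) then follow by direct chain-rule manipulation of the entropy terms defining $R^{(L)}$: the decomposition $H(X_1,X_2)=H(X_1|X_2)+H(X_2)$ (or its differential analog) yields (\ref{ESLB 1}), and replacing $H(X_1|X_2)$ by $H(X_1)-I(X_1;X_2)$ yields (\ref{ESLB 2}).

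The main obstacle will be the tightness step for the joint rate distortion function: one must show that a product-form backward test channel can simultaneously meet both marginal distortion targets while inducing the correct joint source distribution. This is exactly where the strict positivity $p(x_2|x_1)>0$ is invoked, as it provides enough flexibility in the joint distribution that a small product perturbation of the identity remains a valid backward channel. Controlling this perturbation explicitly, and identifying the largest surface $\Dmat(X_1X_2)$ over which it works, is the crux of the argument and the only place where a non-trivial condition on the distortion must be imposed.
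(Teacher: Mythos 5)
The paper does not actually prove this lemma: it is quoted wholesale from Gray \cite{Gray_73} (and \cite{Gray_72_CRD}), so there is no in-paper argument to compare yours against. Judged on its own terms, your outline does track the structure of Gray's development --- the ESLB as the optimal Lagrangian-type lower bound, tightness for small distortions via an explicit backward test channel, and the additivity identities (\ref{ESLB 1})--(\ref{ESLB 2}) falling out of the chain rule $H(X_1,X_2)=H(X_1|X_2)+H(X_2)$ applied to the entropy terms that define $R^{(L)}$. That last part of your sketch is essentially complete, since each ESLB has the form ``source entropy minus a max-entropy function of the distortion alone,'' and the per-letter distortion terms are common to the joint, conditional, and marginal bounds.

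Two concrete problems remain. First, your justification of $\Dmat(X_1|X_2)\leq \Dmat(X_1)$ runs in the wrong direction: you argue that conditioning on $X_2$ ``relaxes the constraints'' on the backward-channel perturbation, but if that were so the conditional surface would be \emph{larger}, not smaller. The correct reason is the opposite --- tightness of the conditional ESLB requires the backward-channel construction to be valid for \emph{every} conditioning value $x_2$ simultaneously, a strictly more stringent requirement than validity for the single marginal $p(x_1)$, which is why the admissible distortion region shrinks. Second, the entire content of the lemma lives in the tightness step --- exhibiting, for each rate distortion function, a strictly positive distortion surface below which the required backward channel (with the prescribed product or conditional factorization and the correct induced source law) has nonnegative probabilities --- and your proposal asserts this rather than proves it. Saying the channel is ``a small perturbation of the identity'' does not by itself show that the deconvolution of $p(x_1,x_2)$ by the max-entropy noise yields a valid distribution; that is precisely where the hypothesis $p(x_2|x_1)>0$ and the condition $d_i(x_i,x_i)=0<d_i(x_i,\hat{x}_i)$ must be used quantitatively, and it is the part of Gray's proof that cannot be waved through.
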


%By this lemma, we can show the following Theorem.
It is apparent that when the rate distortion functions equal to their corresponding ESLB,
equations (\ref{ESLB 1}) and (\ref{ESLB 2}) imply equalities in (\ref{Rd 3}), (\ref{Rd 4}) and (\ref{Rd 5}).

%Here $R_{X|Y}(D)$ is the conditional rate distortion function which is defined as
%\beq R_{X|Y}(D)=\min I(X;\hat{X}|Y),\eeq
%where the minimum is taken with respect to all test channels $q_t(\hat{x}|x,y)$ such that $Ed(X,\hat{X})\leq D$.

\section{The Common Information of $N$ Dependent Discrete Random Variables}\label{section CI}
\subsection{Definition}
Wyner's original definition of the common information in (\ref{eq:CI}) assumes a Markov chain between the random variables $X$, $Y$ and the auxiliary variable $W$, i.e., $X-W-Y$. This Markov chain is equivalent to stating that $X$ and $Y$ are conditionally independent given $W$. This conditional independence structure can be naturally generalized to that of $N$ dependent random variables.
Let $\Xbf\triangleq\{X_1,\cdots,X_N\}$ be $N$ dependent random variables that take values
in some arbitrary (finite, countable, or continuous) spaces $\Xmat_1 \times \Xmat_2\times\cdots\times \Xmat_N$.
 The joint distribution of $\Xbf$ is denoted as $p(\xbf)$, which is either a probability mass function or a probability density function.
We now give the definition of the common information for $N$ dependent random variables.% with arbitrary alphabets.
\begin{definition}\label{definition CI}
Let $\Xbf$ be a random vector with  joint distribution
$p(\xbf)$. The common information of $\Xbf$ is defined as
\beq C(\Xbf)\triangleq \inf
I(\Xbf;W),\label{eq CI newdef}\eeq where the infimum is taken over all
the joint distributions of $(\Xbf,W)$ such that
\begin{enumerate}
\item the marginal distribution for $\Xbf$ is $p(\xbf)$, \label{theorem1.1}
\item  $\Xbf$ are conditionally independent given $W$, i.e.,
\bqa
%\int_wp(\xbf,w)dw&=&q(\xbf),\label{theorem1.1}\\
p(\xbf|w)&=&\prod_{i=1}^{N}p(x_i|w).\label{theorem1.2} \eqa
\end{enumerate}
\end{definition}

\vspace{0.1in}

We now discuss several properties associated with the definition given in (\ref{eq CI newdef}).

Wyner's common information of two random variables $(X_1,X_2)$ satisfies the following inequality
  \bqn I(X_1,X_2)\leq C(X_1,X_2)\leq \min\{H(X_1),H(X_2)\}.\eqn
A similar inequality for the common information of $N$ random variables can be derived.
Let $A\subseteq\Nmat=\{1,2,\cdots,N\}$ and $\bar{A}=\Nmat\backslash A$.
%Denote by $\Xbf^A$ the set of random variables $X_i$ such that $i\in A$,
We have
\bqa \max_A\{I(\Xbf^A;\Xbf^{\bar{A}})\}\leq C(\Xbf)\leq \min_j\{H(\Xbf^{-j})\},\eqa
where $\Xbf^{-j}\triangleq \Xbf^{\Nmat\backslash \{j\}}=$$\{X_1,\cdots,X_{j-1},X_{j+1},\cdots,X_N\}$ for $j\in\Nmat$.

To verify the upper bound, for any $j\in\Nmat$, let $W_j=\Xbf^{-j}$. Thus, $X_1,\cdots, X_N$ are conditionally independent given $W_j$,
%then $W_j$ satisfies the conditions (\ref{theorem1.1}) and (\ref{theorem1.2}),
and
\bqn I(\Xbf;W_j)=I(\Xbf;\Xbf^{-j})=H(\Xbf^{-j}).\eqn
Thus $C(\Xbf)\leq H(\Xbf^{-j})$ for all $j\in\Nmat$.

For the lower bound, since $X_1$, $\cdots$, $X_N$ are conditionally independent given $W$,
 we have the Markov chain
$\Xbf^A-W-\Xbf^{\bar{A}}$ for any subset $A\subseteq\Nmat$. Hence, %for any subset $A\subseteq\Nmat$,
\bqn I(\Xbf;W)\geq I(\Xbf^A;W)\geq I(\Xbf^A;\Xbf^{\bar{A}}),\eqn
where  the second inequality is by the data processing inequality.

Therefore,
\bqa I(\Xbf;W)\geq \max_A\{I(\Xbf^A;\Xbf^{\bar{A}})\}.\label{eq 28}\eqa

The common information defined in (\ref{eq CI newdef}) also satisfies the following monotone property.
\begin{lemma}
Let $\Xbf\sim p(\xbf)$. For any two sets
$A,B$ that satisfy $A\subseteq B \subseteq \Nmat=\{1,2,\cdots, N\} $, we have
 \beq
 C(\Xbf^{A})\leq C(\Xbf^{B}),
\eeq  % where $\Xbf^A=\{X_i\}_{i\in A}$ and $\Xbf^B=\{X_i\}_{i\in B}$.
\end{lemma}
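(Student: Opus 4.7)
The plan is to observe that any auxiliary variable $W$ feasible for the common information problem on the larger index set $B$ is automatically feasible for the problem on $A\subseteq B$, and that projecting from $\Xbf^B$ to its subvector $\Xbf^A$ can only decrease the mutual information with $W$.

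First, I would take an arbitrary joint distribution of $(\Xbf^B,W)$ satisfying the two conditions of Definition \ref{definition CI}, namely that its $\Xbf^B$-marginal is the prescribed $p(\xbf^B)$ and that $p(\xbf^B|w)=\prod_{i\in B}p(x_i|w)$. Marginalizing over the coordinates indexed by $B\setminus A$ produces a joint distribution of $(\Xbf^A,W)$ whose $\Xbf^A$-marginal is the prescribed $p(\xbf^A)$ (by consistency of marginals) and, since the product structure in (\ref{theorem1.2}) is preserved when one integrates out a subset of factors, also satisfies $p(\xbf^A|w)=\prod_{i\in A}p(x_i|w)$. Thus the same $W$ is feasible in the optimization defining $C(\Xbf^A)$.

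Next, by the chain rule for mutual information,
\[
I(\Xbf^B;W)=I(\Xbf^A;W)+I(\Xbf^{B\setminus A};W\mid \Xbf^A)\geq I(\Xbf^A;W),
\]
since conditional mutual information is nonnegative. Taking the infimum over all $W$ feasible for the $B$-problem and using that every such $W$ induces (via marginalization) a feasible $W$ for the $A$-problem, I conclude
\[
C(\Xbf^A)\;\leq\;\inf_{W\text{ feasible for }B}I(\Xbf^A;W)\;\leq\;\inf_{W\text{ feasible for }B}I(\Xbf^B;W)\;=\;C(\Xbf^B),
\]
which is the desired inequality.

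The argument involves no real obstacle; the only step that requires a moment's care is checking that marginalizing a conditionally independent product distribution over a subset of its coordinates yields again a conditionally independent product distribution, which is immediate from the factorization in (\ref{theorem1.2}). The infimum in (\ref{eq CI newdef}) interacts cleanly with this restriction operation because feasibility for $B$ is strictly stronger than feasibility for $A$.
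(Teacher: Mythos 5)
Your proof is correct and follows essentially the same route as the paper's: feasibility of $W$ for the $B$-problem implies feasibility for the $A$-problem, and $I(\Xbf^B;W)\geq I(\Xbf^A;W)$ gives the inequality after taking infima. The only difference is cosmetic — the paper phrases the argument in terms of a $W'$ that \emph{achieves} $C(\Xbf^B)$, whereas you work with an arbitrary feasible $W$ and pass to the infimum at the end, which sidesteps the (unjustified but inessential) assumption that the infimum is attained.
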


\begin{proof} Let $W'$ be the auxiliary variable that achieves $C(\Xbf^{B})$, i.e.,
$I(\Xbf^B;W')=\inf_W I(\Xbf^B;W)$. Since $A\subseteq B$, $\Xbf^B$ being
conditionally independent given $W'$ implies that $\Xbf^A$ are
conditionally independent given $W'$. Thus \bqn I(\Xbf^B;W')&\geq&
I(\Xbf^A;W'),\\
&\geq & \inf I(\Xbf^A;W),\eqn where the infimum is taken over all
$W$ such that $\Xbf^A$ is independent given $W$.\end{proof}

The above monotone property of the common information is contrary to what the name implies: conceptually, the information in common ought to decrease when new variables are included in the set of random variables. Such is the case for G\'{a}cs and K\"{o}rner's common randomness, i.e.,
$ K(\Xbf^{A})\geq K(\Xbf^{B})$.
As a consequence, we have that for any $N$ random variables
$ C(\Xbf)\geq K(\Xbf)$. The fact that the common information $C(\Xbf)$ increases as more variables are involved suggests that it may have potential
applications in statistical inference problems. This was explored in \cite{Xu_2011}.

\subsection{Coding theorems for the  common information of $N$ discrete random variables}

Section \ref{Wyner's result} describes
two operational interpretations of Wyner's common information for two discrete random variables based on the Gray-Wyner network and distribution approximation. These operational interpretations can also be extended to the common information of $N$ dependent random variables.

For the first approach, we consider the lossless Gray-Wyner network with $N$ terminals.
For the Gray-Wyner source coding network,
A number $R_0$ is said to be {\em achievable} if for any
$\epsilon>0$, there exists, for $n$ sufficiently large, an  $(n,M_0,M_1,\cdots,M_N)$ code
with
\bqa M_0&\leq& 2^{nR_0},\label{eq:achi0}\\
     \frac{1}{n}\sum^N_{i=0}\log{M_i}&\leq&H(\Xbf)+\epsilon,\label{eq:ach2}\\
     P^{(n)}_e&\leq& \epsilon.\label{eq:achi1}\eqa

Define $C_1$   as the
infimum of all achievable $R_0$.

\begin{theorem}\label{th 1}
For $N$ discrete random variables $\Xbf$  with joint distribution
$p(\xbf)$, %$\xbf^N\in\Xmat^N$,
\beq C_1=C(\Xbf).\eeq
\end{theorem}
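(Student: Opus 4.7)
The plan is to prove $C_1 = C(\Xbf)$ by establishing achievability and converse separately, closely following Wyner's original $N=2$ argument in \cite{Wyner_CI_75IT} with appropriate bookkeeping for $N$ dependent variables. The key enabling tool is the generalized Gray-Wyner rate region in Theorem \ref{them A1}.

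For achievability ($C_1 \leq C(\Xbf)$), I would take any admissible $W$ in the definition (\ref{eq CI newdef}) and apply Theorem \ref{them A1} with the rate tuple $R_0 = I(\Xbf;W)$ and $R_i = H(X_i|W)$ for $i=1,\ldots,N$. The key identity is
\[
R_0 + \sum_{i=1}^N R_i = I(\Xbf;W) + \sum_{i=1}^N H(X_i|W) = I(\Xbf;W) + H(\Xbf|W) = H(\Xbf),
\]
where the middle equality crucially invokes the conditional independence (\ref{theorem1.2}). Thus the sum-rate constraint (\ref{eq:ach2}) is met after an $\epsilon$-slack in each $R_i$, giving $C_1 \leq I(\Xbf;W)$, and infimizing yields $C_1 \leq C(\Xbf)$.

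For the converse ($C_1 \geq C(\Xbf)$), I would single-letterize. Given any achievable code, introduce a time-sharing $J$ uniform on $\{1,\ldots,n\}$ independent of everything else, and set $V = (W_0, \Xbf^{J-1}, J)$ and $\tilde X_i = X_{i,J}$. The i.i.d.\ source makes $(\tilde X_1,\ldots,\tilde X_N) \sim p(\xbf)$, and a chain-rule identity combined with $H(W_0) \leq n(R_0 + \epsilon)$ gives $I(\tilde\Xbf; V) = \frac{1}{n} I(\Xbf^n; W_0) \leq R_0 + \epsilon$. The technical heart is then to show that $V$ makes the $\tilde X_i$ approximately conditionally independent. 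Combining Fano's inequality on each decoder ($H(X_i^n | W_0, W_i) \leq n\epsilon_n$), the private-rate bound $H(X_i^n | W_0) \leq n R_i + n(\epsilon + \epsilon_n)$ (derived through $H(W_i) \geq H(W_i | W_0) \geq I(X_i^n; W_i | W_0)$), and the bound $H(\Xbf^n | W_0) \geq nH(\Xbf) - n(R_0 + \epsilon)$, the sum-rate constraint (\ref{eq:ach2}) yields
\[
\sum_{i=1}^N H(\tilde X_i | V) - H(\tilde\Xbf | V) \leq (N+2)\epsilon + N\epsilon_n \triangleq \delta_n.
\]

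The main obstacle is to upgrade this approximate conditional independence into the exact version demanded by (\ref{eq CI newdef}). The left side above is exactly the conditional multi-information $E_V[D(P_{\tilde\Xbf | V} \,\|\, \prod_i P_{\tilde X_i | V})]$, so Pinsker's inequality places $P_{\tilde\Xbf, V}$ within total variation $O(\sqrt{\delta_n})$ of a joint distribution with exact conditional independence. On finite alphabets one can then exploit continuity of mutual information and, following Wyner's two-variable strategy, construct an admissible auxiliary $W^{\star}$ for (\ref{eq CI newdef}) whose rate $I(\Xbf; W^{\star})$ differs from $I(\tilde\Xbf; V)$ by $o_\epsilon(1)$. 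Taking $\epsilon \to 0$ then yields $C(\Xbf) \leq C_1$. Tracking these approximation errors simultaneously across all $N$ marginals — in particular preserving the source marginal $p(\xbf)$ after the perturbation — is the step most sensitive to the generalization beyond Wyner's $N=2$ argument.
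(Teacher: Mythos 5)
Your proposal is correct and follows essentially the same route as the paper, which itself omits the proof as a direct extension of Wyner's two-variable argument: achievability by plugging an admissible $W$ into Theorem \ref{them A1} and using conditional independence to hit the sum rate $H(\Xbf)$, and a converse that reduces to approximate conditional independence $\sum_i H(X_i|W)-H(\Xbf|W)\leq\epsilon$ followed by Wyner's continuity-at-zero argument. The marginal-preservation worry you flag at the end is handled exactly as in Wyner's lemma --- a cardinality bound plus compactness of the set of joint distributions with the fixed $\Xbf$-marginal, so no perturbation of $p(\xbf)$ is ever needed.
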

%A proof of Theorem \ref{th 1} is given in the Appendix \ref{appendix 1}.
The proof of Theorem \ref{th 1} is a direct extension  of the proof for two discrete random variables in \cite{Wyner_CI_75IT} and hence is omitted.

The second approach of interpreting the common information of discrete random variable uses distribution approximation.
Let $\{\Xbf_1,\cdots, \Xbf_n\}$ be i.i.d. copies of $\Xbf$ with distribution $p(\xbf)$, i.e., the joint distribution for  $\{\Xbf_1,\cdots, \Xbf_n\}$ is %$p^{(n)}(\xbf_1,\cdots,\xbf_n)$
 \beq p^{(n)}(\xbf_1,\cdots,\xbf_n)=\prod^n_{k=1}
p(\xbf_k).\label{eq:pn}\eeq
An $(n, M, \Delta) $ {\em generator} consists of the
following:
\bi
 \item a message set ${\cal W}$ with cardinality $M$;
 \item for all $w\in {\cal W}$, probability distributions $q^{(n)}_i(x^n_i|w)$, for
$i=1,2,\cdots,N$.
\ei
Define the probability distribution on $ {\cal
X}^n_1 \times {\cal X}^n_2 \times \cdots \times {\cal X}^n_N$ \beq
q^{(n)}(\xbf_1,\cdots,\xbf_n)=\sum_{w\in{\cal
W}}\frac{1}{M}\prod^N_{i=1}
q^{(n)}_i(x^n_i|w).\label{eq:qn}\eeq

Let \beq
\Delta=D_n\left(q^{(n)}(\xbf_1,\cdots,\xbf_n);p^{(n)}(\xbf_1,\cdots,\xbf_n)\right)=\sum_{\xbf^n}\frac{1}{n}q^{(n)}(\xbf_1,\cdots,\xbf_n)
\log{\frac{q^{(n)}(\xbf_1,\cdots,\xbf_n)}{p^{(n)}(\xbf_1,\cdots,\xbf_n)}},
\eeq where   $p^{(n)}(\xbf_1,\cdots,\xbf_n)$ is defined in (\ref{eq:pn}) and $q^{(n)}(\xbf_1,\cdots,\xbf_n)$ is defined as in (\ref{eq:qn}).

A number $R$ is said to be {\em achievable} if for all $\epsilon>0$, if for $n$ sufficiently large
 there exists an $(n,M,\Delta)$ generator with $M\leq 2^{nR}$ and $\Delta \leq \epsilon$.
Define $C_2$ as the infimum of all achievable $R$.

\begin{theorem}\label{th 2}
For $N$ discrete random variables $\Xbf$  with joint distribution
$p(\xbf)$, %$\xbf^N\in\Xmat^N$,
\beq C_2=C(\Xbf).\eeq
\end{theorem}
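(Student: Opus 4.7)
My plan is to adapt Wyner's two-part proof for $N=2$ in \cite{Wyner_CI_75IT} to arbitrary $N$, since the $N$-fold conditional independence in Definition~\ref{definition CI} slots into both the achievability and converse arguments in exactly the same way the Markov chain $X-W-Y$ does for two variables. For the direct part $C_2\le C(\Xbf)$, I fix an auxiliary $W$ feasible in Definition~\ref{definition CI} with $I(\Xbf;W)\le C(\Xbf)+\delta$, pick a rate $R>I(\Xbf;W)$, and build the generator by soft covering: draw $M=\lceil 2^{nR}\rceil$ codewords $W^n(j)$ i.i.d.\ from $p(w)^n$ and declare the generator kernels memoryless, $q_i^{(n)}(x_i^n\mid j)=\prod_{k=1}^n p(x_{ik}\mid W_k(j))$. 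The product structure $p(\xbf|w)=\prod_i p(x_i|w)$ causes the induced $q^{(n)}$ to have exactly the form analyzed in the resolvability calculation underlying Wyner's Theorem~6.3, so $\mathbb{E}\bigl[\tfrac{1}{n}D(q^{(n)}\|p^{(n)})\bigr]\to 0$; a deterministic codebook meets the expectation, and letting $\delta\to 0$ closes $C_2\le C(\Xbf)$.

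For the converse $C_2\ge C(\Xbf)$, begin with $\log M\ge H_q(W)\ge I_q(\Xbf^n;W)$, noting that under $q$ the sequences $X_1^n,\dots,X_N^n$ are independent across $i$ given $W$. I single-letterize by introducing $T$ uniform on $\{1,\dots,n\}$ and independent of everything else, and setting $\tilde X_i=X_{iT}$, $\tilde W=(W,T,\Xbf^{T-1})$. Conditional independence descends from $X_i^n$'s given $W$ to $\tilde X_i$'s given $\tilde W$, so $I_q(\tilde\Xbf;\tilde W)$ is a feasible value for the common information under the single-letter $q$-marginal $q_{\mathrm{avg}}=\tfrac{1}{n}\sum_t q_t^{\mathrm{marg}}$ of $\tilde\Xbf$. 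A chain-rule computation yields
\begin{equation*}
I_q(\tilde\Xbf;\tilde W)-\tfrac{1}{n}I_q(\Xbf^n;W) = I_q(\Xbf_T;T) + \tfrac{1}{n}D\bigl(q^{(n)}\,\|\,{\textstyle\prod_t}q_t^{\mathrm{marg}}\bigr),
\end{equation*}
and both right-hand terms are at most $\Delta$: the second by Gibbs' inequality $D(q^{(n)}\|\prod_t q_t^{\mathrm{marg}})\le D(q^{(n)}\|p^{(n)})$, and the first by the identity $\sum_t D(q_t^{\mathrm{marg}}\|q_{\mathrm{avg}})=\sum_t D(q_t^{\mathrm{marg}}\|p)-nD(q_{\mathrm{avg}}\|p)\le n\Delta$. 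Convexity of divergence simultaneously gives $D(q_{\mathrm{avg}}\|p)\le\Delta$, so $q_{\mathrm{avg}}\to p$ in total variation.

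Combining gives $\tfrac{1}{n}\log M\ge C_{q_{\mathrm{avg}}}(\tilde\Xbf)-2\Delta$, and the final step—which I expect to be the main obstacle—is the lower-semicontinuity statement $C_{q_{\mathrm{avg}}}(\tilde\Xbf)\to C_p(\Xbf)$ as $q_{\mathrm{avg}}\to p$. For finite alphabets this reduces to the $N$-variable analogue of Wyner's Lemma~4.2: the infimum in (\ref{eq CI newdef}) can be restricted to $W$ on a finite alphabet of cardinality at most $\prod_i|\Xmat_i|$, after which the feasible set of joint distributions is compact and $I(\Xbf;W)$ is continuous on it. This support-reduction step is a routine extension of Wyner's $N=2$ argument, since the conditional independence constraint is linear in the joint distribution for each fixed $w$; once in place, the converse follows by letting $n\to\infty$ and $\Delta\to 0$.
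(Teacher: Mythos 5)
Your proposal is correct and takes essentially the same approach as the paper: the paper omits this proof entirely, stating only that it can be constructed in the same way as Wyner's Theorems 5.2 and 6.2, and your argument — soft covering for achievability, and a single-letterized converse using $\tilde W=(W,T,\Xbf^{T-1})$ with the divergence bookkeeping and the cardinality-bound/continuity step — is exactly that extension, with the $N$-fold conditional independence playing the role Wyner's Markov chain plays for $N=2$. The identity and the two $\Delta$-bounds in your converse check out, so nothing further is needed.
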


The proof
can be constructed in the same way as that of  \cite[Theorems 5.2 and 6.2]{Wyner_CI_75IT}, hence is omitted.

\section{The Lossy Source Coding Interpretation of Wyner's Common Information}\label{section  CI for arbitrary alphabets}
The common information defined in (\ref{eq:CI}) and (\ref{eq CI newdef}) equally applies to that of continuous random variables. However, such definitions are only meaningful when they are associated with concrete operational interpretations. In this section, we develop a
lossy source coding interpretation of Wyner's common information using the Gray-Wyner network. While this new interpretation holds for the general case of $N$ dependent random variable, we elect to present coding theorems involving only a pair of dependent variables for ease of notion and presentation.

%In section \ref{subsection 3.2.1}, we defined $C_1$ for discrete random variables with respect to the lossless source coding of Gray-Wyner network. However, $C_1$ could not provide a meaningful interpretation for common information of continuous random variables. Hence, in this section, we will define a quantity $C_3(\Dbf)$ with respect to  We will  characterize  $C_3(\Dbf)$ and study its relationship to common information .
%This relationship provides an interpretation for common information of random variables with continuous alphabets.
%Consider the lossy source coding problem described in Fig.~\ref{fig:model1}.
 %We now give the definitions.

\subsection{Lossy Gray-Wyner source coding}

Given a two-dimensional source $(X_1,X_2)\sim p(x_1,x_2)$,  for any $(D_1,D_2)\geq 0$, a number $R_0$ is said to be \emph{$(D_1,D_2)$-achievable} if for any $\epsilon>0$, there exists, for $n$ sufficiently large, an $(n,M_0,M_1,M_2,\Delta_1,\Delta_2)$ code with
 \bqa M_0&\leq &2^{nR_0},\label{CI EQ1}\\
 \sum_{i=0}^2\frac{1}{n}\log M_i&\leq& R_{X_1X_2}(D_1,D_2)+\epsilon,\label{CI EQ2}\\
 \Delta_1\leq D_1+\epsilon &,& \Delta_2\leq D_2+\epsilon.\label{CI EQ3}\eqa

Define
$C_3(D_1,D_2)$  as the infimum of  all $R_0$'s that are  $(D_1, D_2)$-achievable. Thus, $C_3(D_1,D_2)$ is the minimum  common message  rate for the Gray-Wyner network with sum rate $R_{X_1X_2}(D_1,D_2)$ while satisfying the distortion constraint. Since  $R_{X_1X_2}(D_1,D_2)$ is always $(D_1, D_2)$-achievable, it is obvious that \bqa C_3(D_1,D_2)\leq R_{X_1X_2}(D_1,D_2).\eqa
The following theorem gives a precise characterization of $C_3(D_1,D_2)$.

%\subsection{Characterization of $C_3(D_1,D_2)$}

\begin{theorem}\label{th 3}
\bqa C_3(D_1,D_2)=\tilde{C}(D_1,D_2),\eqa
where $ \tilde{C}(D_1,D_2)$ is the solution of the following optimization problem: \bqa &\inf&I(X_1,X_2;W)\label{eq 13}\\
&\mbox{subject to}&R_{X_1|W}(D_1)+R_{X_2|W}(D_2) +I(X_1,X_2;W)=R_{X_1X_2}(D_1,D_2).\nn\eqa
\end{theorem}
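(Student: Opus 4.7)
The plan is to establish matching upper and lower bounds using Theorem \ref{them A2} (the Gray--Wyner lossy rate region) as the main workhorse on both sides, together with the trivial lower bound $R_0+R_1+R_2\geq R_{X_1X_2}(D_1,D_2)$ that follows from the single-encoder joint rate--distortion function.

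\textbf{Achievability.} I would pick a joint law of $(X_1,X_2,W^*)$ that is $\epsilon$-optimal for the program (\ref{eq 13}); by feasibility it satisfies
\[
I(X_1,X_2;W^*) + R_{X_1|W^*}(D_1) + R_{X_2|W^*}(D_2) = R_{X_1X_2}(D_1,D_2).
\]
Applying Theorem \ref{them A2} to the triple $(R_0,R_1,R_2)=(I(X_1,X_2;W^*), R_{X_1|W^*}(D_1), R_{X_2|W^*}(D_2))$ shows that it lies in $\Rmat_2(D_1,D_2)$, and by construction its sum equals $R_{X_1X_2}(D_1,D_2)$. Hence $R_0=I(X_1,X_2;W^*)$ is $(D_1,D_2)$-achievable in the sense of (\ref{CI EQ1})--(\ref{CI EQ3}), yielding $C_3(D_1,D_2)\leq \tilde{C}(D_1,D_2)+\epsilon$, and then the desired bound by letting $\epsilon\downarrow 0$.

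\textbf{Converse.} Fix any $(D_1,D_2)$-achievable $R_0$ and any $\epsilon>0$. The associated code produces a rate triple in $\Rmat_2(D_1,D_2)$ with sum rate at most $R_{X_1X_2}(D_1,D_2)+\epsilon$ by (\ref{CI EQ2}). By the converse part of Theorem \ref{them A2}, there exists an auxiliary $W$ (depending on $\epsilon$) such that $R_0\geq I(X_1,X_2;W)$ and $R_i\geq R_{X_i|W}(D_i)$ for $i=1,2$. Since a single encoder can always achieve distortion $(D_1,D_2)$ at rate $R_{X_1X_2}(D_1,D_2)$, every rate triple in $\Rmat_2(D_1,D_2)$ has sum rate at least $R_{X_1X_2}(D_1,D_2)$; applied to the triple generated by $W$ this gives
\[
I(X_1,X_2;W) + R_{X_1|W}(D_1) + R_{X_2|W}(D_2)\;\geq\; R_{X_1X_2}(D_1,D_2).
\]
Combining with the sum-rate constraint sandwiches this quantity in $[R_{X_1X_2}(D_1,D_2),\;R_{X_1X_2}(D_1,D_2)+\epsilon]$, so $W$ satisfies the equality constraint of (\ref{eq 13}) up to $\epsilon$. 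Therefore $R_0\geq I(X_1,X_2;W)\geq \tilde{C}(D_1,D_2)-o(1)$ as $\epsilon\downarrow 0$.

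\textbf{Main obstacle.} The cleanest step is achievability; the subtle piece is the limit in the converse. What the sandwich argument literally produces is the inequality $R_0\geq \tilde{C}_\epsilon(D_1,D_2)$, where $\tilde{C}_\epsilon$ denotes the infimum of $I(X_1,X_2;W)$ over an $\epsilon$-relaxation of the equality constraint in (\ref{eq 13}). To conclude, one must argue that $\tilde{C}_\epsilon(D_1,D_2)\to \tilde{C}(D_1,D_2)$ as $\epsilon\downarrow 0$. In the finite-alphabet setting this is handled by a Carath\'eodory-type cardinality bound on $W$, which renders the feasible set effectively compact, together with continuity of the mutual information and conditional rate--distortion functions in the joint law. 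For general (in particular continuous) alphabets, one needs to justify tightness of the family of laws $\{p(x_1,x_2,w_\epsilon)\}$ and appropriate semi-continuity of $R_{X_i|W}(D_i)$ under weak convergence; this is the genuinely delicate point to address in the full proof.
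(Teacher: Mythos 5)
Your proof is correct and follows essentially the same route as the paper's: both directions rest on Theorem \ref{them A2}, and your converse is the same sandwich of $I(X_1,X_2;W)+R_{X_1|W}(D_1)+R_{X_2|W}(D_2)$ between $R_{X_1X_2}(D_1,D_2)$ and $R_{X_1X_2}(D_1,D_2)+\epsilon$, the only real difference being that the paper derives the lower bound from the rate--distortion inequalities (\ref{Rd 2}) and (\ref{Rd 4}) of Lemma \ref{lemma 1} rather than from an operational sum-rate argument (and note that your stated justification for that step should invoke the converse, not the achievability, of the joint rate--distortion theorem). The $\epsilon\downarrow 0$ limit you single out as the delicate point is indeed the weakest link, but the paper does not treat it either: it simply records the exact equality (\ref{eq 22}) for the auxiliary $W$ extracted at finite $\epsilon$.
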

\begin{proof}
See Appendix \ref{appendix 3}.
\end{proof}

The authors in \cite{Viswanatha&Akyol&Rose:ISIT12} gave an alternative characterization of $C_3(D_1,D_2)$.  Define
\bqn C^*(D_1,D_2)=\inf I(X_1,X_2;W),
\eqn
where the infimum is taken over all joint distributions for $X_1,X_2,X^*_1,X^*_2,W$
such that
\bqa X_1^*-W-X^*_2\label{eq 1},\\
(X_1,X_2)-(X^*_1,X^*_2)-W\label{eq 2},\eqa
 where $(X^*_1,X^*_2)$ achieves $R_{X_1X_2}(D_1,D_2)$.
It was shown in \cite{Viswanatha&Akyol&Rose:ISIT12} that $C_3(D_1,D_2)=C^*(D_1,D_2)$. This, combined with Theorem \ref{th 3}, establishes that
\bqa \tilde{C}(D_1,D_2)=C^*(D_1,D_2).\label{eq 34}\eqa
$\tilde{C}(D_1,D_2)$ is derived from the rate distortion region $\Rmat_2(D_1,D_2)$  given in Theorem  \ref{them A2} while the authors in \cite{Viswanatha&Akyol&Rose:ISIT12} chose to derive  $C^*(D_1,D_2)$ from an alternative characterization of  $\Rmat_2(D_1,D_2)$ given in \cite{Venkataramani_Kramer_Goyal_03IT}.
In Appendix \ref{appendix 3-0}, we provide a direct proof of (\ref{eq 34}) for completeness. Also, as given in Appendix \ref{appendix 3-0}, a necessary condition for  the equality condition in the optimization problem (\ref{eq 13})  is \bqn R_{X_1X_2|W}(D_1,D_2)=R_{X_1|W}(D_1)+R_{X_2|W}(D_2).\eqn

\subsection{The relation of $C_3(D_1,D_2)$ and the common information}
Given our characterization of $C_3(D_1,D_2)$ in Theorem \ref{th 3}, we now establish its connection with $C(X_1,X_2)$ which leads to a new interpretation of Wyner's common information. We begin with the following two lemmas.
\begin{lemma}\label{lemma 3}
Let  $W$ be the random variable that achieves the common information of $X_1$ and $X_2$. If
\bqn  R_{X_1X_2|W}(D_1,D_2)+C(X_1,X_2)=R_{X_1X_2}(D_1,D_2),\eqn
then
\bqa C_3(D_1,D_2)\leq C(X_1,X_2)\label{eq:C3C}.\eqa
\end{lemma}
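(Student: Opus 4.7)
The plan is to construct an explicit code achieving $(D_1,D_2)$-constrained Gray--Wyner compression with common rate $C(X_1,X_2)$ and total rate arbitrarily close to $R_{X_1X_2}(D_1,D_2)$, then read off the conclusion directly from the definition of $C_3$. The natural vehicle is Theorem \ref{them A2}, which tells us that for any auxiliary $W \sim p(w|x_1,x_2)$ the rate triple $(I(X_1,X_2;W),\,R_{X_1|W}(D_1),\,R_{X_2|W}(D_2))$ is $(D_1,D_2)$-achievable. I would plug in the particular $W$ that achieves Wyner's common information, for which $I(X_1,X_2;W)=C(X_1,X_2)$ and $X_1 \perp X_2 \mid W$.

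The main technical step is to identify the sum of the private rates $R_{X_1|W}(D_1)+R_{X_2|W}(D_2)$ with the conditional joint rate-distortion function $R_{X_1X_2|W}(D_1,D_2)$. This is where the conditional independence given $W$ is used. I would argue both inequalities: the upper bound follows by restricting the joint test channel in $R_{X_1X_2|W}$ to product form $p(\hat x_1|x_1,w)p(\hat x_2|x_2,w)$, so that $(X_1,\hat X_1)$ and $(X_2,\hat X_2)$ become conditionally independent given $W$, making $I(X_1X_2;\hat X_1 \hat X_2|W)$ decompose additively. The lower bound follows from
\begin{equation*}
I(X_1X_2;\hat X_1\hat X_2|W) \;=\; H(X_1|W)+H(X_2|W)-H(X_1X_2|\hat X_1\hat X_2 W)
\;\ge\; I(X_1;\hat X_1|W)+I(X_2;\hat X_2|W),
\end{equation*}
using conditional independence for the first equality and monotonicity of entropy under dropping conditioning for the inequality; each summand is then bounded below by the respective conditional rate-distortion function. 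This identity is essentially the equality condition in (\ref{Rd 5}) of Lemma \ref{lemma 1} applied conditionally on $W$.

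Combining these pieces, the sum rate achieved by Theorem \ref{them A2} with this particular $W$ is
\begin{equation*}
C(X_1,X_2) + R_{X_1|W}(D_1)+R_{X_2|W}(D_2) \;=\; C(X_1,X_2) + R_{X_1X_2|W}(D_1,D_2) \;=\; R_{X_1X_2}(D_1,D_2),
\end{equation*}
where the last equality is the hypothesis of the lemma. Hence for every $\epsilon>0$ and $n$ sufficiently large there is an $(n,M_0,M_1,M_2,\Delta_1,\Delta_2)$ code with $M_0\le 2^{n(C(X_1,X_2)+\epsilon)}$, total rate at most $R_{X_1X_2}(D_1,D_2)+3\epsilon$, and $\Delta_i \le D_i+\epsilon$, verifying (\ref{CI EQ1})--(\ref{CI EQ3}). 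Taking the infimum over $R_0$ yields $C_3(D_1,D_2)\le C(X_1,X_2)$.

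The main obstacle is the step $R_{X_1X_2|W}(D_1,D_2)=R_{X_1|W}(D_1)+R_{X_2|W}(D_2)$: Lemma \ref{lemma 1} is stated unconditionally, so I must either reprove it conditionally or justify the two-sided argument sketched above directly in terms of the mutual information identities. A secondary nuisance is that the infimum defining $C(X_1,X_2)$ need not be attained, in which case I would replace $W$ by a sequence $W_k$ with $I(X_1,X_2;W_k)\downarrow C(X_1,X_2)$ and take $k\to\infty$ after the standard $\epsilon$-argument, provided the conditional rate-distortion functions behave continuously along this sequence --- a point that should be noted but can be handled by absorbing the small gap into the $\epsilon$ of the achievability definition.
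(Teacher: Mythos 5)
Your proposal is correct and follows essentially the same route as the paper: the paper notes that the Markov chain $X_1-W-X_2$ gives $R_{X_1X_2|W}(D_1,D_2)=R_{X_1|W}(D_1)+R_{X_2|W}(D_2)$, so the common-information-achieving $W$ satisfies the equality constraint in the optimization problem of Theorem \ref{th 3}, whence $C_3(D_1,D_2)=\tilde{C}(D_1,D_2)\leq I(X_1,X_2;W)=C(X_1,X_2)$; you simply inline the achievability half of Theorem \ref{th 3} by appealing to Theorem \ref{them A2} directly, resting on the very same key identity. The only additions in your write-up are that you actually prove the conditional additivity identity that the paper merely asserts, and your concern about the infimum defining $C(X_1,X_2)$ not being attained is moot here because the lemma's hypothesis already posits an achieving $W$.
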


Lemma \ref{lemma 3} is a direct consequence of Theorem \ref{th 3} as the Markov chain $X_1-W-X_2$ implies $R_{X_1X_2|W}(D_1,D_2)=R_{X_1|W}(D_1)+R_{X_2|W}(D_2)$. Thus, the equality constraint in (\ref{eq 13}) is satisfied. Inequality (\ref{eq:C3C}) follows as
\[
\tilde{C}(D_1,D_2)=C_3(D_1,D_2)\leq I(X_1,X_2;W)=C(X_1,X_2).
\]

The next lemma gives a sufficient condition under which $C_3(D_1,D_2)\geq C(X_1,X_2)$ is true.
\begin{lemma}\label{lemma 4-0}
For any distortion pair $(D_1,D_2)$, if the rate distortion function satisfies
\bqa R_{X_1X_2}(D_1,D_2)=R_{X_1}(D_1)+R_{X_2}(D_1)-I(X_1;X_2),\label{eq lemma 4-0}\eqa
then we have
\bqn C_3(D_1,D_2)\geq C(X_1,X_2).\eqn
\end{lemma}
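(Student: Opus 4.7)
My plan is to leverage the characterization $C_3(D_1,D_2)=\tilde{C}(D_1,D_2)$ from Theorem \ref{th 3} and then show that the hypothesis $R_{X_1X_2}(D_1,D_2)=R_{X_1}(D_1)+R_{X_2}(D_2)-I(X_1;X_2)$ forces every feasible auxiliary $W$ in the $\tilde{C}$ optimization to satisfy the Markov chain $X_1 - W - X_2$. Once that is established, every such $W$ is admissible in the definition of $C(X_1,X_2)$, so $I(X_1,X_2;W)\geq C(X_1,X_2)$ and taking the infimum gives the lemma.

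Concretely, let $W$ be any auxiliary variable satisfying the equality constraint of the optimization in (\ref{eq 13}), i.e.\ $R_{X_1|W}(D_1)+R_{X_2|W}(D_2)+I(X_1,X_2;W)=R_{X_1X_2}(D_1,D_2)$. I would apply inequality (\ref{Rd 4}) of Lemma \ref{lemma 1}, with $W$ in the role of $X_2$, to each marginal term:
\[
R_{X_i|W}(D_i)\geq R_{X_i}(D_i)-I(X_i;W),\qquad i=1,2.
\]
Substituting into the equality constraint and using the hypothesis $R_{X_1X_2}(D_1,D_2)=R_{X_1}(D_1)+R_{X_2}(D_2)-I(X_1;X_2)$ yields, after cancellation,
\[
I(X_1;W)+I(X_2;W)-I(X_1,X_2;W)\geq I(X_1;X_2).
\]
A short identity rewrite gives $I(X_1;W)+I(X_2;W)-I(X_1,X_2;W)=I(X_1;X_2)-I(X_1;X_2\mid W)$, so the inequality becomes $I(X_1;X_2\mid W)\leq 0$, and hence $I(X_1;X_2\mid W)=0$. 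Thus $X_1 - W - X_2$ forms a Markov chain.

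With the Markov chain in hand, $W$ is a feasible auxiliary in Definition \ref{definition CI} applied to the pair $(X_1,X_2)$, so $I(X_1,X_2;W)\geq C(X_1,X_2)$. Taking the infimum over all $W$ that satisfy the equality constraint in (\ref{eq 13}) gives $\tilde{C}(D_1,D_2)\geq C(X_1,X_2)$, and Theorem \ref{th 3} converts this to $C_3(D_1,D_2)\geq C(X_1,X_2)$. The main technical obstacle I anticipate is a minor one: the infimum in $\tilde{C}$ need not be attained, so the argument should be stated for every feasible $W$ (or for a near-optimal sequence) rather than for a single minimizer. A secondary subtlety is justifying the application of (\ref{Rd 4}) with $W$ playing the role of the conditioning variable when $W$ has an arbitrary (possibly continuous) alphabet; this is handled in Gray's treatment \cite{Gray_72_CRD} of conditional rate distortion functions, so no new work is needed there.
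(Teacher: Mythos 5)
Your proposal is correct and follows essentially the same route as the paper's proof: both apply inequality (\ref{Rd 4}) to each conditional rate distortion term, combine with the hypothesis (\ref{eq lemma 4-0}) and the constraint in (\ref{eq 13}), and use the identity $I(X_1;W)+I(X_2;W)-I(X_1,X_2;W)=I(X_1;X_2)-I(X_1;X_2\mid W)$ to force $I(X_1;X_2\mid W)=0$, making $W$ feasible for $C(X_1,X_2)$. Your remark about arguing over all feasible $W$ rather than an assumed minimizer is a minor tightening of the paper's phrasing, not a different method.
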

\begin{proof}See Appendix \ref{Appendix 4-0}.
\end{proof}

Lemmas \ref{lemma 3} and \ref{lemma 4-0}, together with the relations of marginal, joint and conditional rate distortion functions described in Lemmas \ref{lemma 1} and \ref{lemma 2}, allow us to determine a region
such that $C_3(D_1,D_2)$ equals to the common information.
%To prove the Theorem, we first introduce the following Lemma which is given in \cite{Gray_73}.

\begin{theorem}\label{th 4}
Let  random variables $X_1,X_2$ be distributed as
$p(x_1,x_2)$ on $\Xmat_1\times \Xmat_2$
such that for $x_1\in \Xmat_1,x_2\in \Xmat_2$, $p(x_2|x_1)>0$.
Let the reproduction alphabets $\hat{\Xmat}_1=\Xmat_1$, $\hat{\Xmat_2}=\Xmat_2$.
The two per-letter distortion measures $d_1(x_1,\hat{x}_1)$, $d_2(x_2,\hat{x}_2)$ satisfy
\bqa d_i(x_i,\hat{x}_i)&>&d_i(x_i,x_i)=0, \quad x_i\neq \hat{x}_i, i=1,2.\eqa
Then there exists a strictly positive surface $\gamma\triangleq (\gamma_1,\gamma_2)$ such that, for $(D_1, D_2)\leq \gamma$,
\beq C_3(D_1,D_2)= C(X_1,X_2).\eeq
\end{theorem}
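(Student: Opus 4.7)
The plan is to establish $C_3(D_1,D_2) = C(X_1,X_2)$ by matching an upper bound from Lemma \ref{lemma 3} with a lower bound from Lemma \ref{lemma 4-0}, and then identifying a common strictly positive surface $\gamma$ on which the hypotheses of both lemmas are simultaneously verified using the ESLB machinery of Lemma \ref{lemma 2}.

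For the $\geq$ direction, I would invoke Lemma \ref{lemma 2}: under the stated assumptions on $p(x_1,x_2)$, on the reproduction alphabets, and on $d_1,d_2$, there is a strictly positive surface $\gamma^{(a)}$ below which $R_{X_1X_2}(D_1,D_2)$, $R_{X_1}(D_1)$, and $R_{X_2}(D_2)$ all coincide with their ESLBs, and (\ref{ESLB 2}) then gives
\[
R_{X_1X_2}(D_1,D_2) = R_{X_1}(D_1) + R_{X_2}(D_2) - I(X_1;X_2).
\]
This is exactly the hypothesis of Lemma \ref{lemma 4-0}, so $C_3(D_1,D_2) \geq C(X_1,X_2)$ whenever $(D_1,D_2) \leq \gamma^{(a)}$.

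For the $\leq$ direction, let $W^{*}$ attain the infimum in (\ref{eq CI newdef}), so $I(X_1,X_2;W^{*}) = C(X_1,X_2)$ and $X_1 - W^{*} - X_2$. By Lemma \ref{lemma 3}, it is enough to verify $R_{X_1X_2|W^{*}}(D_1,D_2) + C(X_1,X_2) = R_{X_1X_2}(D_1,D_2)$. Two ingredients are combined. First, the conditional independence $X_1 \perp X_2 \mid W^{*}$ sandwiches $R_{X_1X_2|W^{*}}(D_1,D_2)$ between $R_{X_1|W^{*}}(D_1)+R_{X_2|W^{*}}(D_2)$ on both sides via the conditional versions of (\ref{Rd 2}) and (\ref{Rd 3}) (in the latter the Markov chain collapses $R_{X_1|X_2 W^{*}}(D_1)$ to $R_{X_1|W^{*}}(D_1)$), yielding $R_{X_1X_2|W^{*}}(D_1,D_2) = R_{X_1|W^{*}}(D_1)+R_{X_2|W^{*}}(D_2)$. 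Second, an ESLB argument analogous to Lemma \ref{lemma 2} applied to each conditional distribution $p(x_i|w^{*})$ yields a strictly positive surface $\gamma^{(b)}$ on which $R_{X_i|W^{*}}(D_i) = R_{X_i}(D_i) - I(X_i;W^{*})$, i.e., the equality case of (\ref{Rd 4}). Substituting and using the identity
\[
I(X_1;W^{*}) + I(X_2;W^{*}) - I(X_1,X_2;W^{*}) = I(X_1;X_2),
\]
which follows from $H(X_1,X_2|W^{*}) = H(X_1|W^{*})+H(X_2|W^{*})$, the left-hand side of the Lemma \ref{lemma 3} equality reduces to $R_{X_1}(D_1)+R_{X_2}(D_2)-I(X_1;X_2)$, which equals $R_{X_1X_2}(D_1,D_2)$ on $\gamma^{(a)}$. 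Taking $\gamma = \min\bigl(\gamma^{(a)},\gamma^{(b)}\bigr)$ componentwise then gives $C_3(D_1,D_2) = C(X_1,X_2)$ for $(D_1,D_2) \leq \gamma$.

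The main obstacle is constructing $\gamma^{(b)}$: one must establish that, for the specific auxiliary $W^{*}$ attaining $C(X_1,X_2)$, the conditional rate distortion functions $R_{X_i|W^{*}}(D_i)$ meet their ESLBs on a strictly positive surface. The natural route is to apply the ESLB apparatus of Lemma \ref{lemma 2} to each slice $p(x_i|w^{*})$ and aggregate over $w^{*}$; this requires that the positivity and zero-diagonal distortion conditions lift from the marginal to the conditional setting. Under the theorem's standing assumption $p(x_2|x_1)>0$ (which, together with the joint distribution constructed from $W^{*}$, controls the conditional supports) and the zero-diagonal distortion hypothesis, such a transfer is expected to be routine, though care may be needed when $W^{*}$ is continuous or when its induced conditional supports degenerate.
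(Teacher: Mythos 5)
Your proposal is correct and follows essentially the same route as the paper: both directions reduce to verifying the hypotheses of Lemmas \ref{lemma 3} and \ref{lemma 4-0} on a common strictly positive surface supplied by the ESLB equalities of Lemma \ref{lemma 2}. The only (immaterial) difference is in the upper bound, where the paper applies the conditional ESLB equality directly to the pair $\bigl((X_1,X_2),W\bigr)$ to get $I(X_1,X_2;W)+R_{X_1X_2|W}(D_1,D_2)=R_{X_1X_2}(D_1,D_2)$, whereas you reach the same identity through the per-component conditionals $R_{X_i|W}(D_i)$ and the conditional-independence identity $I(X_1;W)+I(X_2;W)-I(X_1,X_2;W)=I(X_1;X_2)$; the transfer of Lemma \ref{lemma 2}'s hypotheses to the auxiliary $W$, which you flag as the remaining obstacle, is glossed over in the paper as well.
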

\begin{proof} See Appendix \ref{Appendix 4}.
\end{proof}

%{\em Remark:}
% $\gamma$ must be less than $\gamma_c$, where $\gamma_c$ is the two-dimensional surface such that $R_{X_1X_2}(D_1,D_2)=C(X_1,X_2)$.
 Theorem \ref{th 4} shows that Wyner's common information is precisely the smallest common message rate $C_3(D_1,D_2)$  of Gray-Wyner network for a certain range of distortion constraints when the total rate is arbitrarily close to the rate distortion function with joint decoding.  As the common information is only a function of the joint distribution, hence is a constant for a given $p(x_1,x_2)$,  it is surprising that the smallest common rate $C_3(D_1,D_2)$ remains constant even if the distortion constraints vary, as long as they are in a specific distortion region.

While Theorem \ref{th 4} establishes that $C_3(D_1,D_2)=C(X_1,X_2)$ for $(D_1,D_2)\leq \gamma$,  it does not specify the value of the positive distortion vector $\gamma$.
Let $\Dmat^c\triangleq (D^c_1,D^c_2)$ be the two-dimensional distortion surface such that $R_{X_1X_2}(D^c_1,D^c_2)=C(X_1,X_2)$, then we must have
that $\gamma\leq\Dmat^c$. This is because if $\gamma>\Dmat^c$, then there exists $(D_1,D_2)$ such that $\gamma\geq (D_1,D_2)>\Dmat^c$ and
$C_3(D_1,D_2)\leq R_{X_1X_2}(D_1,D_2)<R_{X_1X_2}(D^c_1,D^c_2)=C(X_1,X_2)$, which  contradicts Theorem \ref{th 4}.
Now let us consider a particular point on the surface $\Dmat^c$. %, $(D^0_1,D^0_2)$. % which is achieved by transmit the
Let $W$ be the auxiliary random variable that achieves $C(X_1,X_2)$. Suppose there exists a distortion pair $(D_1^0,D_2^0)$ satisfying, for $i=1,2$, % when the sequence $W^n$ is available at the decoder such that %the joint rate distortion function  $R_{X_1X_2}(D^0_1,D^0_2)$ satisfying % the auxiliary random variable that achieves the common information,i.e., %let $W$ be the random variable that achieves the common information of $X_1,X_2$,
\bqa \begin{array}{rll} R_{X_i}(D_i^0)&=&I(X_i;W),\\
%D_1^0 &=&\inf_{\hat{x}_1(w)}Ed_1(X_1,\hat{X}_1^0(W)),\\
 D_i^0&=&\inf_{\hat{x}_i(w)}Ed_i(X_i,\hat{X}_i^0(W)),\end{array}\label{ratedistortion}\eqa
 where $\hat{x}_1^0(w),\hat{x}_2^0(w)$ are deterministic functions. Under this assumption, we can show that  $R_{X_1X_2}(D^0_1,D^0_2)=I(X_1,X_2;W)$.
Therefore, the joint rate distortion function  $R_{X_1X_2}(D^0_1,D^0_2)$  not only equals to the common information but also is achieved by the auxiliary random variable $W$.
Furthermore, it is easy to show  \bqa C_3(D_1^0,D_2^0)=C(X_1,X_2),\eqa using Lemma \ref{lemma 4-0} and the fact that $C_3(D_1^0,D_2^0)\leq R_{X_1X_2}(D^0_1,D^0_2)$. %It means that $C(X_1,X_2)$ is $(D_1^0,D_2^0)$
This means that in the Gray-Wyner network,  with the total rate equal to $R_{X_1X_2}(D_1^0,D_2^0)$, the scheme to transmit the pair of sources $(X_1^n,X^n_2)$  within distortion constraints $(D_1^0,D_2^0)$  is to communicate $W$ to the two receivers  using the common channel.

Let us now decrease the distortion constraints from $(D_1^0,D_2^0)$ to $(D_1,D_2)\leq (D_1^0,D_2^0)$. The question is whether the rate $C(X_1,X_2)$ is $(D_1,D_2)-$achieveble, i.e., if it is possible to transmit the sources $(X_1^n,X^n_2)$  with smaller distortions $(D_1,D_2)$ with the sum rate  at $R_{X_1X_2}(D_1,D_2)$   while keeping  the common rate at $C(X_1,X_2)$.
 In the following, we identify a sufficient condition for $C_3(D_1,D_2)=C(X_1,X_2)$ for successively refinable sources. A source $X$ with distortion measure $d(x,\hat{x})$ is said to be successively refinable from a coarser distortion $\delta_1$ to a finer distortion $\delta_2$ ($\delta_1\geq\delta_2$) if it can be encoded in two stages in which the optimal descriptions at the second stage is a refinement of the optimal descriptions at the first stage\cite{Equitz_91IT}. Similar definition can be applied to vector sources with individual distortion constraints and the details can be found in \cite{Nayak_10IT}.

In the following theorem, we  give a sufficient condition under which $ C_3(D_1,D_2)= C(X_1,X_2)$ for any $(D_1,D_2)\leq (D_1^0,D_2^0)$. This sufficient condition ensures the optimality of a two-stage encoding scheme: first encode the common message with rate $C(X_1,X_2)$ and we can obtain a coarse distortion $(D_1^0,D_2^0)$, then encode the two private messages with rates $R_{X_1|W}(D_1)$ and $R_{X_2|W}(D_2)$. The successive refinement assumption guarantees that the two-step approach can achieve the distortion $(D_1,D_2)$ and the sum rate  does not exceed the total rate  $R_{X_1X_2}(D_1,D_2)$.

\begin{theorem}\label{theorem 9}
Let $W$ be the auxiliary variable that  achieves $C(X_1,X_2)$ and $(D_1^0,D_2^0)$ be a distortion pair satisfying  (\ref{ratedistortion}).
If the source $(X_1,X_2)$ is successively refinable from $(D_1^0,D_2^0)$ to $(D_1,D_2)$ for $(D_1,D_2)\leq (D_1^0,D_2^0)$, and $X_i$ is successively refinable from $D_i^0$ to $D_i$ for $D_i\leq D_i^0$, $i=1,2$,  %$X_2$ is  successively refinable from $D_2^0$ to $D_2'$ for $D_2'\leq D_2^0$.
then,  \bqn C_3(D_1,D_2)= C(X_1,X_2).\eqn

 \end{theorem}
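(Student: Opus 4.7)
The plan is to prove $C_3(D_1,D_2)=C(X_1,X_2)$ by realizing the two-stage Gray--Wyner code described in the paragraph preceding the theorem and combining its performance with Lemma~\ref{lemma 4-0}. For the achievability direction, I would transmit $W^n$ as the common message at rate $I(X_1,X_2;W)+\epsilon=C(X_1,X_2)+\epsilon$, and use the $i$th private message to encode $X_i^n$ given $W^n$ to distortion $D_i$ at rate $R_{X_i|W}(D_i)+\epsilon$; both primitives are furnished by Theorem~\ref{them A2} applied with the auxiliary $W$.

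The pivotal intermediate step is the identity
\[
R_{X_i|W}(D_i)=R_{X_i}(D_i)-I(X_i;W).
\]
To obtain this, observe that (\ref{ratedistortion}) together with the functional form $\hat X_i^0(W)$ gives $I(X_i;W)=R_{X_i}(D_i^0)=I(X_i;\hat X_i^0(W))$, so the data-processing inequality holds with equality and yields the Markov chain $X_i-\hat X_i^0(W)-W$. Splicing this with the successive-refinement chain $X_i-\hat X_i^{(1)}-\hat X_i^0(W)$, where $\hat X_i^{(1)}$ is the second-stage reproduction that achieves $R_{X_i}(D_i)$, produces $X_i-\hat X_i^{(1)}-W$, whence
\[
R_{X_i|W}(D_i)\leq I(X_i;\hat X_i^{(1)}|W)=I(X_i;\hat X_i^{(1)})-I(X_i;W)=R_{X_i}(D_i)-I(X_i;W),
\]
with the matching lower bound furnished by Lemma~\ref{lemma 1}, inequality (\ref{Rd 4}). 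Since $X_1-W-X_2$ implies $C(X_1,X_2)=I(X_1;W)+I(X_2;W)-I(X_1;X_2)$, the scheme's total rate telescopes to $R_{X_1}(D_1)+R_{X_2}(D_2)-I(X_1;X_2)$, which by Lemma~\ref{lemma 1}, inequality (\ref{Rd 5}), is at most $R_{X_1X_2}(D_1,D_2)$. The joint successive refinability of $(X_1,X_2)$ guarantees that this two-stage scheme actually attains distortion $(D_1,D_2)$, so the rate tuple is $(D_1,D_2)$-achievable and $C_3(D_1,D_2)\leq C(X_1,X_2)$.

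For the converse, the very scheme just constructed certifies $R_{X_1X_2}(D_1,D_2)\leq R_{X_1}(D_1)+R_{X_2}(D_2)-I(X_1;X_2)$, which combined with the opposite direction in (\ref{Rd 5}) forces the equality $R_{X_1X_2}(D_1,D_2)=R_{X_1}(D_1)+R_{X_2}(D_2)-I(X_1;X_2)$, precisely the hypothesis of Lemma~\ref{lemma 4-0}. That lemma then delivers $C_3(D_1,D_2)\geq C(X_1,X_2)$ and closes the proof. The principal technical obstacle I anticipate is rigorously justifying the identity $R_{X_i|W}(D_i)=R_{X_i}(D_i)-I(X_i;W)$: one must exhibit a coherent joint law on $(X_i,W,\hat X_i^0(W),\hat X_i^{(1)})$ under which the successive-refinement Markov chain and the sufficiency of $\hat X_i^0(W)$ for $X_i$ both hold, so that the composite Markov chain $X_i-\hat X_i^{(1)}-W$ is legitimately available. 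Once that identity is secured, the remainder is rate bookkeeping combined with the two cited lemmas.
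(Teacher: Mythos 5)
Your proposal is correct in substance, and its lower bound ($C_3(D_1,D_2)\geq C(X_1,X_2)$ via Lemma~\ref{lemma 4-0}) coincides with the paper's; the route to the upper bound, however, is genuinely different. The paper first uses the \emph{joint} successive refinability of $(X_1,X_2)$ to establish $R_{X_1X_2|W}(D_1,D_2)+I(X_1,X_2;W)=R_{X_1X_2}(D_1,D_2)$ directly and then invokes Lemma~\ref{lemma 3}; only afterwards does it use the scalar refinability of each $X_i$ to get $R_{X_i}(D_i)-I(X_i;W)\geq R_{X_i|W}(D_i)$ and hence the hypothesis of Lemma~\ref{lemma 4-0}. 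You instead make the identity $R_{X_i|W}(D_i)=R_{X_i}(D_i)-I(X_i;W)$ the single pivot: once it holds, the chain $R_{X_1}(D_1)+R_{X_2}(D_2)-I(X_1;X_2)=R_{X_1|W}(D_1)+R_{X_2|W}(D_2)+I(X_1,X_2;W)\geq R_{X_1X_2|W}(D_1,D_2)+I(X_1,X_2;W)\geq R_{X_1X_2}(D_1,D_2)\geq R_{X_1}(D_1)+R_{X_2}(D_2)-I(X_1;X_2)$ collapses to equalities, which simultaneously verifies the constraint of Theorem~\ref{th 3} for $W$ (so $C_3\leq C$) and the hypothesis of Lemma~\ref{lemma 4-0} (so $C_3\geq C$). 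A noteworthy by-product is that your argument never actually invokes the joint successive refinability of the vector source, only the two scalar hypotheses, so if the pivotal identity is made rigorous you prove the theorem under nominally weaker assumptions. The obstacle you flag is real but is shared by the paper: both proofs splice the refinement chain $X_i-\hat X_i^{(1)}-\hat X_i^0$ with the chain $X_i-\hat X_i^0-W$ deduced from $I(X_i;W)=I(X_i;\hat X_i^0)$, tacitly identifying the first-stage reproduction of the successive-refinement coupling with the particular optimal test channel induced by $\hat x_i^0(W)$; this matters only if the optimizing test channel for $R_{X_i}(D_i^0)$ is non-unique. Two minor bookkeeping points: the distortions $(D_1,D_2)$ are attained by the conditional rate-distortion codes themselves (Theorem~\ref{them A2} with auxiliary $W$), not by joint refinability as you state, and the inequality $R_{X_1X_2}(D_1,D_2)\leq R_{X_1}(D_1)+R_{X_2}(D_2)-I(X_1;X_2)$ follows more cleanly from the information inequalities above than from an appeal to the operational converse.
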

\begin{proof} See Appendix \ref{Appendix them 9}.\end{proof}

In the following section, we will consider two examples involving successively refinable sources: the binary random variables and bivariate Gaussian variables. For these two cases, we compute explicitly the function $C_3(D_1,D_2)$ and establish its connection with $C(X_1,X_2)$. The distortion pair $(D_1^0,D_2^0)$  satisfying (\ref{ratedistortion}) are identified for both cases, thus Theorem \ref{theorem 9} can be directly applied.

\section{Examples}\label{section 4}

%In this section, we will consider  two examples including discrete alphabets and continuous alphabets.
 %We give the result for the binary source and Gaussian source.
\subsection{Binary random variables}
Let $S\sim \mbox{Bern}(\theta)$ for $0\leq \theta\leq 1$, i.e., $S\in\{0,1\}$ and $P(S=1)=\theta$. Let $X_i$, $i=1,\cdots,N$, be the output of a binary symmetric channel (BSC) with crossover probability $a_1$ $(0\leq a_1\leq \frac{1}{2})$ and with $S$ as input. The BSC channels are independent of each other. Thus,
\bqn p(x_1,\cdots, x_N|s)=\prod_{i=1}^Np(x_i|s),\eqn
where
\bqn p(x_i|s)=\left\{\begin{array}{ll}
1-a_1, &\mbox{if $x_i=s$},\\
a_1, &\mbox {otherwise},\\
\end{array}\right.
\eqn
 for  $x_i\in\{0,1\}$. Therefore, the joint distribution of $X_1,X_2,\cdots,X_N$ is
\bqa p(x_1,x_2,\cdots,x_N)&=&\sum_{s\in\{0,1\}}p(s)\prod_{i=1}^Np(x_i|s),\nn\\
&=&\theta a_1^{t_N}(1-a_1)^{N-t_N}+(1-\theta)(1-a_1)^{t_N}a_1^{N-t_N},\label{eq 12}\eqa

where $t_N=\sum_{i=1}^Nx_i$.

For $N=2$, the joint distribution of $X_1,X_2$ is given by the following probability matrix,
\bqa\left[      \begin{array}{cc}
       \theta(1-a_1)^2+(1-\theta)a_1^2& a_1(1-a_1)\\
       a_1(1-a_1) &  \theta a_1^2+(1-\theta)(1-a_1)^2
      \end{array}
     \right].\label{distribution}\eqa
%\bqa
%\left\{\begin{array}{rr}
%\frac{1}{2}(1-a_0), &\mbox{ if $x_1=x_2$}\\
%\frac{1}{2}a_0, &\mbox {Otherwise}\\
%\end{array}\right.\label{eq 32}
%\eqa
%where $a_0=2a_1(1-a_1)$.
It has been shown by Witsenhausen \cite{Witsenhausen_CI76} that the common information of $X_1,X_2$
is achieved with $W$ being $S$. That is \bqa C(X_1,X_2)= I(X_1X_2;S)=H(X_1,X_2)-2h(a_1),\eqa
where $h(\cdot)$ is the binary entropy function.
When $\theta=\frac{1}{2}$, $(X_1,X_2)$  is a Doubly Symmetric Binary Source (DSBS) whose common information was derived by  Wyner \cite{Wyner_CI_75IT} using a different approach.

We now obtain the common information for $N$ variables.
\begin{proposition}
Let $S\sim$ Bern($\theta$) and let $X_i$, $i=1,\cdots, N$, be the output of independent BSCs with common input $S$ and
crossover probability $0\leq a_1\leq1/2$. Then for any $N\geq 2$, the common information for $X_1,\cdots,X_N$
 is given as
\bqa C(X_1,\cdots, X_N)=I(X_1,\cdots,X_N;S).\eqa
\end{proposition}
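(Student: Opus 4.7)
The plan is to establish matching upper and lower bounds. The upper bound is immediate: by construction $X_1,\ldots,X_N$ are conditionally independent given $S$, so $W=S$ is feasible in the infimum defining $C(\mathbf{X})$, giving
\[
C(X_1,\ldots,X_N)\;\le\; I(X_1,\ldots,X_N;S).
\]

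For the lower bound, I would let $W$ be any auxiliary random variable achieving conditional independence of $X_1,\ldots,X_N$, and show $I(\mathbf{X};W)\ge I(\mathbf{X};S)$. The first observation is that
\[
I(\mathbf{X};S)=H(\mathbf{X})-H(\mathbf{X}|S)=H(\mathbf{X})-N h(a_1),
\]
since each $X_i$ is the output of an independent BSC$(a_1)$ driven by $S$. So the task reduces to proving $H(\mathbf{X}|W)\le N h(a_1)$. By the conditional-independence hypothesis, $H(\mathbf{X}|W)=\sum_{i=1}^N H(X_i|W)$, so I need the pointwise bound $\sum_i H(X_i|W)\le N h(a_1)$.

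The key step is to leverage Witsenhausen's result for pairs, which is cited just above the proposition: for any $i\neq j$, the pair $(X_i,X_j)$ is produced by the same two-BSC construction, hence $C(X_i,X_j)=I(X_i,X_j;S)=H(X_i,X_j)-2h(a_1)$. Since $W$ in particular makes $X_i$ and $X_j$ conditionally independent, it is feasible in the infimum defining $C(X_i,X_j)$, so
\[
I(X_i,X_j;W)\;\ge\; C(X_i,X_j)\;=\;H(X_i,X_j)-2h(a_1),
\]
which rewrites (again using the conditional independence of $X_i$ and $X_j$ given $W$) as
\[
H(X_i|W)+H(X_j|W)\;=\;H(X_i,X_j|W)\;\le\; 2h(a_1).
\]
Now I sum this inequality over all $\binom{N}{2}$ unordered pairs. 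Each index $i$ appears in exactly $N-1$ pairs, so the left-hand side sums to $(N-1)\sum_{i=1}^N H(X_i|W)$ while the right-hand side totals $N(N-1)h(a_1)$. Dividing by $N-1$ yields $\sum_i H(X_i|W)\le N h(a_1)$, as required.

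Putting the pieces together, $I(\mathbf{X};W)=H(\mathbf{X})-H(\mathbf{X}|W)\ge H(\mathbf{X})-N h(a_1)=I(\mathbf{X};S)$, and taking the infimum over admissible $W$ gives $C(X_1,\ldots,X_N)\ge I(\mathbf{X};S)$, matching the upper bound. The only nontrivial step is the pairwise-to-aggregate trick in the previous paragraph; everything else is a direct consequence of conditional independence and Witsenhausen's pairwise result, which we invoke as a black box.
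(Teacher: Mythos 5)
Your proof is correct and rests on the same ingredients as the paper's own argument: the conditional-independence decomposition $H(\mathbf{X}|W)=\sum_{i}H(X_i|W)$, exchangeability of the $X_i$, and Witsenhausen's pairwise result $C(X_1,X_2)=I(X_1,X_2;S)$. The paper phrases the reduction to pairs in contrapositive form (if $\sum_i H(X_i|W)$ exceeded $\sum_i H(X_i|S)$, some single pair would have to violate the pairwise bound), while you sum the pairwise bound over all $\binom{N}{2}$ pairs directly; these are the same averaging argument, so the approaches are essentially identical.
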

\begin{proof}
That $C(X_1,\cdots, X_N)\leq I(X_1,\cdots,X_N;S)$ follows from the definition of the common information (\ref{eq CI newdef}). The  inequality  $C(X_1,\cdots, X_N)\geq I(X_1,\cdots,X_N;S)$ can  be proved by contradiction.   Suppose there exists a $W$ such that
\bqa C(X_1,\cdots, X_N)=I(X_1,\cdots,X_N;W)<I(X_1,\cdots,X_N;S),\label{eq 33}\eqa i.e., $C(X_1,\cdots,X_N)$ is achieved by $W$ and it is strictly less than $I(X_1,\cdots,X_N;S)$. Since $W$ induces conditional independence of $X_1,\cdots, X_N$, we have, from (\ref{eq 33}),
\bqn \sum_{i=1}^NH(X_i|W)>\sum_{i=1}^NH(X_i|S).\eqn
Thus, there must exist two random variables $X_{k}, X_{j}$, $k,j\in\{1,\cdots, N\}$ such that
\bqn H(X_k|W)+H(X_j|W)>H(X_k|S)+H(X_j|S).\eqn
Given that the sequence $\{X_1,\cdots, X_N\}$ is exchangeable \cite{Bernardo_96}, $p(x_k,x_j)$ has the same joint distribution as $p(x_1,x_2)$. Thus,
\[C(X_1,X_2)=C(X_k,X_j)=I(X_k,X_j;W)<I(X_k,X_j;S)=I(X_1,X_2;S).\]
This, however, contradicts the fact that $S$ achieves $C(X_1,X_2)$. Thus the proposition is proved.
\end{proof}
%If $a_1=1/2$, $X_1,X_2,\cdots,X_N$ are mutually independent for any $N$, so $C(X_1,X_2,\cdots,X_N)=0$.
%a binary random variable with Bernoulli(1/2) distribution.
%Each $X_i$  for $i=1,2$ could be viewed as the output of a BSC with crossover probability $a_1$ with $W$ as input.$Y$ achieves the common information of $X_1,X_2$.

We now characterize the minimum common rate $C_3(D_1,D_2)$ for a DSBS.

\begin{proposition}\label{proposition binary}
Consider a DSBS $(X_1,X_2)$ with distribution
\bqa
p(x_1,x_2)=\left\{\begin{array}{ll}
\frac{1}{2}(1-a_0), &\mbox{if $x_1=x_2$},\\
\frac{1}{2}a_0, &\mbox {otherwise},\\
\end{array}\right.\label{eq 32}
\eqa
where, without loss of generality, $0\leq a_0\leq 1/2$. Let $a_1$ be such that $a_0=2a_1(1-a_1), 0\leq a_1\leq1/2$.  With Hamming distortion $d_1=d_2=d_H$, we have
\bqa
C_3(D_1,D_2)=\left\{\begin{array}{ll}
C(X_1,X_2),& (D_1,D_2)\in \Emat_{10},\\
R_{X_1X_2}(D_1,D_2),& (D_1,D_2)\in \Emat_2\cup\Emat_3,\\
0,& (D_1,D_2)\geq (\frac{1}{2},\frac{1}{2}),
\end{array}\right.\eqa
\bqa C(X_1,X_2)\leq C_3(D_1,D_2)\leq R_{X_1X_2}(D_1,D_2), ~~(D_1,D_2)\in \Emat_{11},\eqa
        where
       \bqa  \begin{array}{lll}\Emat_{10}&=&\{(D_1,D_2):0\leq D_i\leq a_1, i=1,2\},\\
       \Emat_{11}&=&\Emat_{10}^c\cap\{(D_1,D_2):D_1+D_2-2D_1D_2\leq a_0\},\\
       \Emat_2 &=&\Emat_{10}^c\cap\Emat_{11}^c\cap\left\{(D_1,D_2):  \max\left\{\frac{D_1-D_2}{1-2D_2},\frac{D_2-D_1}{1-2D_1}\right\}\leq a_0\right\},\\
       \Emat_3&=&\Emat_{10}^c\cap\Emat_{11}^c\cap\Emat_2^c\cap\left\{(D_1,D_2): D_i\leq \frac{1}{2},i=1,2\right\}.\end{array}\label{binary region}\eqa
% and $\delta_i=1-2D_i,i=1,2$.
%where $a_1=\frac{1}{2}(1-\sqrt{1-2a_0}).$
\end{proposition}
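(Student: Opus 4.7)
The plan is to handle the four distortion regions separately, combining Theorems \ref{th 3} and \ref{theorem 9} with the explicit rate distortion functions of the DSBS under Hamming distortion. As preparation I would record that $R_{X_i}(D_i)=1-h(D_i)$ for $D_i\le 1/2$, and that by the Extended Shannon Lower Bound in Lemma \ref{lemma 2}, $R_{X_1X_2}(D_1,D_2)=R_{X_1}(D_1)+R_{X_2}(D_2)-I(X_1;X_2)=1+h(a_0)-h(D_1)-h(D_2)$ on the Gray region $D_1+D_2-2D_1D_2\le a_0$, which covers $\Emat_{10}\cup\Emat_{11}$; outside that region $R_{X_1X_2}$ collapses to a single marginal $R_{X_i}(D_i)$ when one distortion dominates ($\Emat_3$), takes an intermediate form in $\Emat_2$, and vanishes once both $D_i\ge 1/2$.

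For $\Emat_{10}$ I would invoke Theorem \ref{theorem 9} with $W=S$, the hidden Bern$(1/2)$ variable of the BSC model. Since each $X_i$ is a BSC$(a_1)$ output of $S$, one checks $I(X_i;S)=1-h(a_1)=R_{X_i}(a_1)$, and the reconstruction $\hat x_i^0(s)=s$ attains $Ed_H(X_i,S)=a_1$, so hypothesis (\ref{ratedistortion}) is met at $(D_1^0,D_2^0)=(a_1,a_1)$. Successive refinability of the Bern$(1/2)$ source under Hamming distortion is classical (Equitz--Cover), and the joint DSBS is successively refinable from $(a_1,a_1)$ to any smaller pair inside $\Emat_{10}$ because the separable factorization of the optimal backward test channel guaranteed by the ESLB form in Lemma \ref{lemma 2} yields nested optimal test channels at the two distortion levels. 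Theorem \ref{theorem 9} then gives $C_3(D_1,D_2)=C(X_1,X_2)$ throughout $\Emat_{10}$.

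For $\Emat_{11}$ the bound $C_3\le R_{X_1X_2}$ is immediate, and the matching bound $C_3\ge C(X_1,X_2)$ follows from Lemma \ref{lemma 4-0}, because the defining inequality $D_1+D_2-2D_1D_2\le a_0$ is precisely the ESLB identity hypothesized in that lemma. For $\Emat_2\cup\Emat_3$ the upper bound $C_3\le R_{X_1X_2}$ is again automatic; for the matching lower bound I would invoke the $\tilde C$ characterization of Theorem \ref{th 3}, combining its equality constraint $R_{X_1|W}(D_1)+R_{X_2|W}(D_2)+I(X_1,X_2;W)=R_{X_1X_2}(D_1,D_2)$ with the conditional RDF bound $R_{X_i|W}(D_i)\ge R_{X_i}(D_i)-I(X_i;W)$ from Lemma \ref{lemma 1} and with the explicit form of $R_{X_1X_2}$ in these regions to force $I(X_1,X_2;W)\ge R_{X_1X_2}$. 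The trivial case $(D_1,D_2)\ge(1/2,1/2)$ is immediate since $R_{X_1X_2}=0$.

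The main obstacle is the lower bound in $\Emat_2$: there $R_{X_1X_2}$ admits neither the clean ESLB form nor the single-marginal form, so one must first characterize the joint RDF in this intermediate region (via Mrs.\ Gerber's lemma and the factorization conditions of Lemma \ref{lemma 1}) before the Theorem \ref{th 3} constraint argument can close the lower bound. The $\Emat_3$ lower bound is by comparison benign, since $R_{X_1X_2}$ there equals a single marginal RDF and the binding decoder's individual rate requirement carries the argument.
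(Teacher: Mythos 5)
Your treatment of $\Emat_{10}$ (via Theorem \ref{theorem 9} with $W=S$ and the successive refinability of the DSBS and of the Bern$(1/2)$ marginals) is valid and is exactly the alternative route the paper itself points out in the remarks following the proposition; the paper's own proof instead computes $R_{X_i|S}(D_i)=h(a_1)-h(D_i)$ for $D_i\leq a_1$, checks the equality constraint of Theorem \ref{th 3} directly to get $C_3\leq C$ on $\Emat_{10}$, and gets $C_3\geq C$ on all of $\Emat_1=\Emat_{10}\cup\Emat_{11}$ from Lemma \ref{lemma 4-0}. Your $\Emat_{11}$ argument coincides with the paper's, and the degenerate case $(D_1,D_2)\geq(1/2,1/2)$ is trivial as you say.

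The genuine gap is the lower bound $C_3\geq R_{X_1X_2}$ on $\Emat_2$ (and, less visibly, on $\Emat_3$). The route you propose does not close: substituting $R_{X_i|W}(D_i)\geq R_{X_i}(D_i)-I(X_i;W)$ into the constraint of Theorem \ref{th 3} and using $I(X_1;W)+I(X_2;W)=I(X_1,X_2;W)+I(X_1;X_2)-I(X_1;X_2|W)$ yields only $I(X_1;X_2|W)\leq R_{X_1X_2}(D_1,D_2)-\bigl(R_{X_1}(D_1)+R_{X_2}(D_2)-I(X_1;X_2)\bigr)$, i.e., an upper bound on $I(X_1;X_2|W)$ by the (strictly positive, in $\Emat_2$) gap between the joint RDF and its ESLB. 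To conclude $I(X_1,X_2;W)\geq R_{X_1X_2}(D_1,D_2)$ you would need $R_{X_1|W}(D_1)+R_{X_2|W}(D_2)=0$, and nothing in Lemma \ref{lemma 1} or in ``Mrs.\ Gerber's lemma plus factorization conditions'' delivers that. The idea you are missing is structural: from the known backward test channel of the DSBS in $\Emat_2$, the optimal reproduction satisfies $\hat{X}_2=\hat{X}_1$ almost surely. Feeding this into the $C^*(D_1,D_2)$ characterization (equation (\ref{eq 34})), the required Markov chain $\hat{X}_1-W-\hat{X}_2$ becomes $\hat{X}_1-W-\hat{X}_1$, which forces $H(\hat{X}_1|W)=0$, hence $I(X_1,X_2;W)\geq I(X_1,X_2;\hat{X}_1)=R_{X_1X_2}(D_1,D_2)$. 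The same degeneracy of the optimizing reproduction (with $\hat{X}_2$ a function of $\hat{X}_1$) is what carries $\Emat_3$; your appeal to ``the binding decoder's individual rate requirement'' does not suffice, since summing the two decoders' cut-set constraints under the sum-rate budget $R_{X_1}(D_1)$ only yields $R_0\gtrsim R_{X_2}(D_2)$, which is strictly smaller than $R_{X_1X_2}(D_1,D_2)=R_{X_1}(D_1)$ in that region.
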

\begin{figure}
\centerline{
\begin{psfrags}
\psfrag{d1}[c]{$D_1$}
\psfrag{d2}[c]{$D_2$}
\psfrag{a1}[c]{$a_1$}
\psfrag{a0}[c]{$a_0$}
\psfrag{0.5}[c]{$\frac{1}{2}$}
\psfrag{0}[c]{$0$}
\psfrag{e10}[c]{$\Emat_{10}$}
\psfrag{e11}[c]{$\Emat_{11}$}
\psfrag{e2}[c]{$\Emat_2$}
\psfrag{e3}[c]{$\Emat_3$}
 \scalefig{.4}\epsfbox{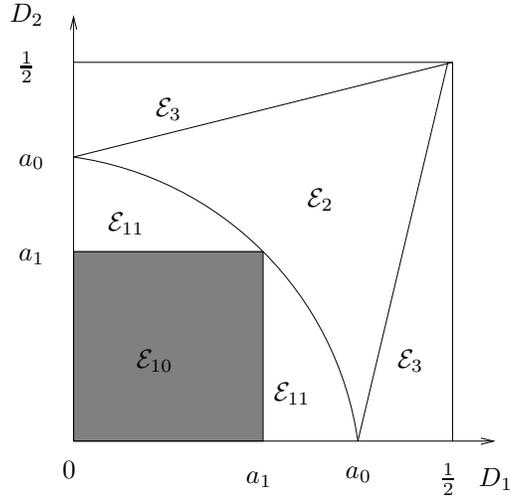}
\end{psfrags}}
\caption{\label{fig: Binary rate distortion region} The distortion regions $\Emat_{10},\Emat_{11},\Emat_2$ and $\Emat_3$  for the DSBS.
$C_3(D_1,D_2)=C(X_1,X_2)$ in the shaded region.}
\end{figure}
\vspace{0.1in}

\begin{proof}
  For $X_i\sim \mbox{Bern}(1/2),i=1,2$ with Hamming distortion, the rate distortion function is
      \bqn R_{X_i}(D_i)=\left\{\begin{array}{ll}1-h(D_i),& 0\leq D_i\leq \frac{1}{2},\\
      0,& D_i\geq \frac{1}{2}.\end{array} \right.\eqn
%       where $a*b\triangleq a(1-b)+(1-a)b$.
The joint rate distortion function
       of the DSBS $(X_1,X_2)$  is given by  \cite{Nayak_10IT}
\bqa R_{X_1X_2}(D_1,D_2)\!\!=\!\!\left\{\begin{array}{ll}
1+h(a_0)-h(D_1)-h(D_2),&(D_1,D_2)\in \Emat_1,\\
%L(1-a_0)-\frac{1}{2}\{L(2D-a_0)+L[2(1-D)-a_0]\} &(D_1,D_2)\in \Emat_1\\
1-(1-a_0)h\left(\frac{D_1+D_2-a_0}{2(1-a_0)}\right)-a_0h\left(\frac{D_1-D_2+a_0}{2a_0}\right),&(D_1,D_2)\in \Emat_2,\\
1-h\left(\min\{D_1,D_2\}\right),&(D_1,D_2)\in \Emat_3.
\end{array} \right.\label{eq:lossybinary}\eqa
where $\Emat_1=\Emat_{10}\cup\Emat_{11}$ with $\Emat_{10},\Emat_{11},\Emat_2$ and $\Emat_3$ defined in (\ref{binary region}).
 Therefore, for this DSBS,
 $R_{X_1}(D_1)+R_{X_2}(D_2)-I(X_1;X_2)=R_{X_1X_2}(D_1,D_2)$, for $(D_1,D_2)\in \Emat_1$.
From Lemma \ref{lemma 4-0}, we have for $(D_1,D_2)\in \Emat_1$,
 \bqa C_3(D_1,D_2)\geq C(X_1,X_2). \label{eq 45}\eqa

 On the other hand,  the conditional rate distortion function $R_{X_i|S}(D_i)$, $i=1,2$, is given by \cite{Gray_73}
  \bqn R_{X_i|S}(D_i)
  =\left\{\begin{array}{ll}h(a_1)-h(D_i),& 0\leq D_i\leq a_1,\\
      0,& D_i\geq a_1.\end{array} \right.\eqn

Therefore, $ R_{X_1|S}(D_1)+R_{X_2|S}(D_2)+I(X_1,X_2;S)=R_{X_1X_2}(D_1,D_2)$ is satisfied for $(D_1,D_2)\in\Emat_{10}$.
By Theorem \ref{th 3}, $C_3(D_1,D_2)\leq C(X_1,X_2)$ for  $(D_1,D_2)\in\Emat_{10}$.
Together with (\ref{eq 45}) and given that $\Emat_{10}\subset\Emat_1$,
we have proved that for $(D_1,D_2)\in\Emat_{10}$,
\bqn
C_3(D_1,D_2)= C(X_1,X_2).\eqn

For $(D_1,D_2)\in \Emat_2$, we only need to show that $C_3(D_1,D_2)\geq R_{X_1X_2}(D_1,D_2)$. It was shown in \cite{Nayak_10IT} that %$(\hat{X}_1,\hat{X}_2)$ that achieves $R_{X_1X_2}(D_1,D_2)$ satisfying that  \bqn \hat{X_2}=\hat{X}_1.\eqn
the backward test channel that achieves $R_{X_1X_2}(D_1,D_2)$  is given by
\bqn  \begin{array}{cc}
X_1&=\hat{X}_1+Z_1,\\ X_2&=\hat{X}_2+Z_2,\end{array}\eqn
where both $\hat{X}_1,\hat{X}_2$ and $Z_1,Z_2$ are binary vectors independent of each other with the probability mass functions  given respectively as
\bqn P_{\hat{X}_1\hat{X}_2}=
\left[
      \begin{array}{cc}
       \frac{1}{2}& 0 \\
       0 & \frac{1}{2} \\
      \end{array}
     \right],\quad \quad
     P_{Z_1Z_2}=
\frac{1}{2}\left[
      \begin{array}{cc}
       2-a_0-D_1-D_2& D_2-D_1+a_0 \\
       D_1-D_2+a_0 & D_1+D_2-a_0 \\
      \end{array}
     \right].\eqn
 Therefore, $(\hat{X}_1,\hat{X}_2)$ that achieves $R_{X_1X_2}(D_1,D_2)$ satisfies \bqn \hat{X_2}=\hat{X}_1.\eqn
For the characterization   $C^*(D_1,D_2)$ of  $C_3(D_1,D_2)$, any  $W$ satisfying the Markov chain $\hat{X}_1-W-\hat{X}_1$ must satisfy $H(\hat{X}_1|W)=0$. Thus, $\hat{X}_1$ is a function of $W$ and we  have
\bqn I(X_1,X_2;W)=I(X_1,X_2;W,\hat{X}_1)\geq I(X_1,X_2;\hat{X}_1)=R_{X_1X_2}(D_1,D_2).\eqn
 Therefore,  $C_3(D_1,D_2)=R_{X_1X_2}(D_1,D_2)$.
%\footnote{ This result was first  shown by Viswanatha etc. in \cite{Viswanatha&Akyol&Rose:ISIT12} for the symmetric distortion region.}

The region $\Emat_3$ is a degenerated one. For example,  $R_{X_1X_2}(D_1,D_2)=R_{X_1}(D_1)$ if $a_0<\frac{D_2-D_1}{1-2D_1}$ and $ D_i\leq \frac{1}{2},i=1,2$. This  implies that the optimal coding scheme is to ignore $X_2$ and optimally compress $X_1$. Then $\hat{X}_2$ can be estimated from $\hat{X}_1$ with distortion less than $D_2$. The case of $a_0<\frac{D_1-D_2}{1-2D_2}$ is dealt with similarly. Hence, similar to the region $\Emat_2$, $C_3(D_1,D_2)=R_{X_1X_2}(D_1,D_2)$.
\end{proof}

%\bqn C_3(D,D)\leq C(X,Y) for~~ 0\leq D\leq \delta\eqn
% Therefore, we have \bqn C_3(D,D)=C(X,Y) for~~ 0\leq D\leq \delta\eqn
%{\em Remark:} Note that the fact $C_3(D,D)=C(X_1,X_2)$ if $0\leq D\leq \delta$ is irrelevant to $\theta$.

The characterization of $C_3(D_1,D_2)$ is plotted in Fig.~\ref{fig: Binary rate distortion region} as a function of the distortion constraints. $C_3(D_1,D_2)=C(X_1,X_2)$ in the shaded region.  For the symmetric distortion constraint, $D_1=D_2=D$, the relation of $C_3(D,D)$ and $D$ for the DSBS is given in Fig.~\ref{fig: Binary symmetric figure}.
\begin{figure}
\centerline{
\begin{psfrags}
\psfrag{a}[c]{\hspace{-2.5cm}$C_3(D,D)$}
\psfrag{b}[c]{$D$}
\psfrag{c}[c]{$a_1$}
\psfrag{d}[c]{$\frac{1}{2}$}
\psfrag{0.5}[c]{$\frac{1}{2}$}
\psfrag{e}[c]{$0$}
\psfrag{f}[c]{$C(X_1,X_2)$}
\psfrag{g}[c]{$R_{X_1X_2}(D,D)$}
 \scalefig{.4}\epsfbox{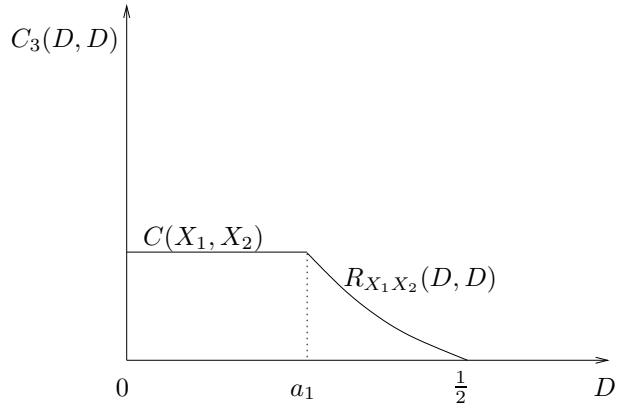}
\end{psfrags}}
\caption{\label{fig: Binary symmetric figure} The relation of $C_3(D,D)$ and $D$ for the DSBS with  $D_1=D_2=D$.}
\end{figure}

%This is consistent with Theorem \ref{th 4} and in this case, the two-dimensional surface $\gamma$ coincides with $\gamma_c$.

{\em Remarks:}
\bi
\item The claim $C_3(D_1,D_2)=C(X_1,X_2)$ for $(D_1,D_2)\in\Emat_{10}$ can also be proved using Theorem \ref{theorem 9}. $R_{X_1X_2}(a_1,a_1)$ is achieved by the backward test channel $p_b(x_1,x_2|s)=p(x_1|s)p(x_2|s)$. The vector  source $(X_1,X_2)$  is successively refinable for any $(D_1,D_2)\leq (a_1, a_1)$ \cite{Nayak_10IT} and the scalar source $X_i$ is successively refinable for any
$D_i\leq a_1$, $i=1,2$\cite{Equitz_91IT}. Thus by Theorem \ref{theorem 9}, $C_3(D_1,D_2)=C(X_1,X_2)$ for $(D_1,D_2)\leq (a_1, a_1)$.
\item We have the full characterization of $C_3(D_1,D_2)$ in the distortion region except the region $\Emat_{11}$. From the proof of Proposition \ref{proposition binary}, we know that $C_3(D_1,D_2)\geq C(X_1,X_2)$ for $(D_1,D_2)\in\Emat_{11}$, but the exact value of $C_3(D_1,D_2)$ in this region remains unknown.
\item
Let $(D_1,D_2)\leq (D_1',D_2')\leq (a_1,a_1)$, then the rate $R_{X_1X_2}(D_1',D_2')$ is $(D_1,D_2)-$achievable in the Gray-Wyner network, i.e., $R_{X_1X_2}(D_1',D_2')\geq C_3(D_1,D_2)$.

To show this, let $(\hat{X}_1,\hat{X}_2)$ achieve $R_{X_1X_2}(D_1',D_2')$.
 %for $(D_1',D_2')\in\Emat_{10}$, the joint rate distortion function $R_{X_1X_2}(D_1',D_2')$  is achieved by $(\hat{X}_1,\hat{X}_2)$
The backward test channel that achieves $R_{X_1X_2}(D_1',D_2')$  satisfies  $p_b(x_1,x_2|\hat{x}_1\hat{x}_2)=p_b(x_1|\hat{x}_1)p_b(x_2|\hat{x}_2)$ where
\bqn p_b(x_i|\hat{x}_i)=\left\{\begin{array}{lr}
1-D_i', &\mbox{ if $x_i=\hat{x}_i$},\\
D_i', &\mbox {Otherwise}.\\
\end{array}\right.\eqn
for $i=1,2$. % and the Markov chain: $X_1-\hat{X}_1-S-\hat{X}_2-X_2$.
Then for $(D_1,D_2)\leq (D_1',D_2')\leq (a_1,a_1)$, let the rate allocation of $R_0,R_1,R_2$ in the Gray-Wyner network be
\bqa\begin{array}{l}R_0=R_{X_1X_2}(D_1',D_2')=1+h(a_0)-h(D_1')-h(D_2'),\\
R_i=R_{X_i|\hat{X}_1\hat{X}_2}(D_i)=R_{X_i|\hat{X}_i}(D_i)=h(D_i')-h(D_i),i=1,2.\end{array}\label{eq binary}\eqa
Since $R_0,R_1$ and $R_2$ in (\ref{eq binary}) sum up to $R_{X_1X_2}(D_1,D_2)$, $R_{X_1X_2}(D_1',D_2')$ is $(D_1,D_2)-$achievable.

 %Therefore, for the DSBS in the Gray-Wyner network, we can use the rate allocation in (\ref{eq binary}) to achieve the distortion $(D_1,D_2)\leq (a_1, a_1)$ for any $(D_1,D_2)\leq (D_1',D_2')\leq (a_1,a_1)$.
  The minimal $R_0$  satisfying (\ref{eq binary}) is exactly $C(X_1,X_2)$, which is achieved by  letting $(D_1',D_2')= (a_1,a_1)$.
\ei

\subsection{Gaussian random variables}
In this section we consider bivariate Gaussian random variables $X_1,X_2$ with zero mean and covariance matrix
 \bqa K_2=\left[
      \begin{array}{cc}
       \sigma_1^2& \rho \sigma_1\sigma_2\\
       \rho\sigma_1\sigma_2& \sigma_2^2
      \end{array}
     \right].\label{matrix 1}\eqa
The common information between this pair of Gaussian random variables is given in the following theorem.
\begin{theorem}\label{th Gaussian}
 For  two joint Gaussian random variables $X_1,X_2$ with covariance matrix $K_2$,
     the common information is
    \bqa C(X_1,X_2)=\frac{1}{2}\log\frac{1+\rho}{1-\rho}.\eqa
\end{theorem}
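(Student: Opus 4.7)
The plan is to establish the common information by matching upper and lower bounds via an estimation-theoretic route.

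For the upper bound, I would exhibit an explicit Gaussian auxiliary $W$ that satisfies $X_1 - W - X_2$. Concretely, take $W \sim \mathcal{N}(0,1)$ and write
\[
X_1 = \sigma_1\sqrt{\rho}\,W + N_1, \qquad X_2 = \sigma_2\sqrt{\rho}\,W + N_2,
\]
with $N_1,N_2,W$ mutually independent and $N_i \sim \mathcal{N}(0,(1-\rho)\sigma_i^2)$ (assuming WLOG $\rho \ge 0$; a sign flip of $X_2$ reduces the $\rho<0$ case to this). A direct variance/covariance check shows the joint law of $(X_1,X_2)$ matches the prescribed $K_2$, and the construction makes $X_1,X_2$ conditionally independent given $W$. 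A routine entropy calculation then yields
\[
I(X_1,X_2;W) = h(X_1,X_2) - h(X_1|W) - h(X_2|W) = \tfrac{1}{2}\log\tfrac{1-\rho^2}{(1-\rho)^2} = \tfrac{1}{2}\log\tfrac{1+\rho}{1-\rho}.
\]

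For the matching lower bound, I would consider an arbitrary $W$ satisfying $X_1 - W - X_2$ and proceed in three steps. First, exploit the Markov chain to expand
\[
I(X_1,X_2;W) = h(X_1,X_2) - h(X_1|W) - h(X_2|W).
\]
Second, bound each conditional differential entropy using the maximum entropy principle: writing $d_i = \mathrm{MMSE}(X_i\mid W) = E[\mathrm{Var}(X_i\mid W)]$, Jensen's inequality together with the Gaussian maximum-entropy property gives $h(X_i|W) \le \tfrac{1}{2}\log(2\pi e\, d_i)$, hence
\[
I(X_1,X_2;W) \;\ge\; \tfrac{1}{2}\log\tfrac{\sigma_1^2\sigma_2^2(1-\rho^2)}{d_1 d_2}.
\]

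Third — and this is the crux of the argument — I would upper-bound $d_1 d_2$ using the conditional-independence structure. By the law of total covariance and $\mathrm{Cov}(X_1,X_2\mid W)=0$, we get $\mathrm{Cov}(E[X_1|W],E[X_2|W]) = \rho\sigma_1\sigma_2$, while the law of total variance gives $\mathrm{Var}(E[X_i|W]) = \sigma_i^2 - d_i$. Cauchy-Schwarz then yields $(\sigma_1^2-d_1)(\sigma_2^2-d_2) \ge \rho^2\sigma_1^2\sigma_2^2$, equivalently $(1-u_1)(1-u_2) \ge \rho^2$ for $u_i := d_i/\sigma_i^2 \in [0,1]$. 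A short optimization (or an AM-GM argument applied to the substitution $v_i = 1-u_i$) shows that this constraint forces $u_1 u_2 \le (1-\rho)^2$, i.e.\ $d_1 d_2 \le \sigma_1^2\sigma_2^2(1-\rho)^2$. Substituting back gives $I(X_1,X_2;W) \ge \tfrac{1}{2}\log\tfrac{1+\rho}{1-\rho}$, completing the proof.

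The main obstacle is step three of the lower bound: the MMSE bound alone is loose, and what makes it tight is recognizing that conditional independence reroutes the entire marginal correlation $\rho\sigma_1\sigma_2$ through the conditional means, after which Cauchy-Schwarz couples $d_1$ and $d_2$ precisely enough to match the explicit construction. The rest is bookkeeping with Gaussian entropies.
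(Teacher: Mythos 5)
Your proposal is correct, and its overall skeleton matches the paper's: an explicit additive-Gaussian construction for achievability, and a converse that reduces $I(X_1,X_2;W)$ to $\tfrac12\log\frac{1-\rho^2}{d_1 d_2}$ with $d_i$ the MMSE of $X_i$ given $W$ (your max-entropy/Jensen bound on $h(X_i|W)$ is exactly the Shannon lower bound $R_{X_i}(d_i)=\tfrac12\log\frac{\sigma_i^2}{d_i}$ that the paper invokes through the chain $I(X_i;U)\geq I(X_i;E(X_i|U))\geq R_{X_i}(d_i)$). The one place you genuinely diverge is the coupling step that forces $d_1 d_2\leq \sigma_1^2\sigma_2^2(1-\rho)^2$. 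The paper expands $E(X_1-X_2)^2=2(1-\rho)$ (after normalizing to unit variances) and uses orthogonality of the estimation errors to obtain the \emph{sum} constraint $D_1+D_2\leq 2(1-\rho)$, then applies AM--GM. You instead use the law of total covariance with $\mathrm{Cov}(X_1,X_2\mid W)=0$ plus Cauchy--Schwarz to obtain the \emph{product} constraint $(1-u_1)(1-u_2)\geq\rho^2$, and optimize. Your constraint is in fact the stronger one (it implies the paper's sum constraint via AM--GM applied to $1-u_i$), it avoids the normalization step, and it makes the role of conditional independence more transparent --- the entire correlation must be carried by the conditional means; the paper's version is slightly more elementary in that it needs only one orthogonality identity and no constrained optimization. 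Your explicit remark that $\rho<0$ is handled by a sign flip is a point the paper leaves implicit.
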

\vspace{0.1in}
\begin{proof} See Appendix \ref{appendix 6}.\end{proof}

 As the common information of $(X_1,X_2)$ is only a function of the correlation coefficient  $\rho$,  we  consider, without loss of generality, the covariance matrix  \bqa K'_2= \left[
      \begin{array}{cc}
       1& \rho \\
       \rho& 1
      \end{array}
     \right].\label{matrix K_2'}\eqa

The above result generalizes to multi-variate Gaussian random variables satisfying a certain covariance matrix structure, the proof of which can be constructed in a similar fashion.
\begin{corollary}
For  $N$ joint Gaussian random variables $X_1,X_2,\cdots,X_N$ with covariance matrix $K_N$,
 \bqa K_N=\left[
      \begin{array}{cccc}
       1& \rho &\cdots &\rho\\
       \rho & 1 &\cdots&\rho\\
        \cdot &\cdot&\cdots&\cdot\\
         \rho&\rho &\cdots&1
      \end{array}
     \right],\label{matrix 1}\eqa
     the common information is
    \bqa C(\Xbf_N)=\frac{1}{2}\log\left(1+\frac{N\rho}{1-\rho}\right).\eqa
\end{corollary}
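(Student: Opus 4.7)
The plan is to establish the corollary by matching upper and lower bounds, generalizing the argument used for Theorem~\ref{th Gaussian}.

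For the upper bound, I would exhibit an explicit auxiliary variable. Let $U,Z_1,\ldots,Z_N$ be i.i.d.\ $\mathcal{N}(0,1)$, set $W=U$, and define $X_i=\sqrt{\rho}\,U+\sqrt{1-\rho}\,Z_i$. The $X_i$ are conditionally independent given $W$, each marginal is $\mathcal{N}(0,1)$, and each pairwise covariance equals $\rho$, so the joint law matches $K_N$. Using $\det K_N=(1+(N-1)\rho)(1-\rho)^{N-1}$ (eigenvalues $1+(N-1)\rho$ with multiplicity one and $1-\rho$ with multiplicity $N-1$) together with $h(X_i|W)=\tfrac{1}{2}\log(2\pi e(1-\rho))$, conditional independence yields
\[
I(\Xbf;W)=h(\Xbf)-\sum_{i=1}^N h(X_i|W)=\tfrac{1}{2}\log\tfrac{(1+(N-1)\rho)(1-\rho)^{N-1}}{(1-\rho)^N}=\tfrac{1}{2}\log\bigl(1+\tfrac{N\rho}{1-\rho}\bigr).
\]

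For the lower bound, let $W$ be any auxiliary variable rendering $X_1,\ldots,X_N$ conditionally independent. Set $\hat{X}_i=E[X_i\mid W]$ and $m_i=\mathrm{Var}(X_i\mid W)$. Conditional independence forces $\mathrm{Cov}(X_i,X_j\mid W)=0$ for $i\neq j$, hence $\mathrm{Cov}(\hat{X}_i,\hat{X}_j)=\rho$. Thus the covariance matrix $K_{\hat{\Xbf}}$ has diagonal entries $1-m_i$, off-diagonal entries $\rho$, and is positive semidefinite. The Gaussian maximum entropy bound, combined with conditional independence, gives
\[
h(\Xbf|W)=\sum_{i=1}^N h(X_i|W)\leq\sum_{i=1}^N \tfrac{1}{2}\log(2\pi e\,m_i),
\]
so that $I(\Xbf;W)\geq\tfrac{1}{2}\log\bigl(\det K_N/\prod_i m_i\bigr)$. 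The task thus reduces to upper-bounding $\prod_i m_i$ over all $(m_i)$ for which $K_{\hat{\Xbf}}$ remains positive semidefinite.

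My plan for this final step is to exploit symmetry and concavity: the objective $\sum_i\log m_i$ is concave and permutation-symmetric, and the feasible region is a convex, permutation-symmetric subset of $[0,1]^N$, so its maximum is attained at a symmetric point $m_i=m$. At this point the eigenvalues of $K_{\hat{\Xbf}}$ are $1-m+(N-1)\rho$ (once) and $1-m-\rho$ ($N-1$ times), so positive semidefiniteness reduces to $m\leq 1-\rho$, yielding $\prod_i m_i\leq (1-\rho)^N$. Substitution reproduces $I(\Xbf;W)\geq\tfrac{1}{2}\log(1+N\rho/(1-\rho))$, matching the upper bound. The main obstacle is justifying this symmetrization rigorously: one must argue that averaging any feasible $(m_1,\ldots,m_N)$ with all its permutations produces a feasible point (by convexity of the psd cone) that strictly improves the strictly concave objective unless the original point is already symmetric. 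Once that is in place, the rest of the derivation is routine Gaussian calculus paralleling Theorem~\ref{th Gaussian}.
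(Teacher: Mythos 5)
Your proposal is correct, and its overall architecture is the same as the paper's: the same explicit construction $X_i=\sqrt{\rho}\,U+\sqrt{1-\rho}\,Z_i$ for the upper bound, and an estimation-theoretic lower bound that reduces $I(\Xbf;W)$ to a bound on the product of the MMSEs $m_i=E[(X_i-E[X_i|W])^2]$. (Note the paper does not actually write out the $N$-variable proof; it only asserts that it "can be constructed in a similar fashion" to the bivariate case in Appendix F.) The one place where you genuinely deviate is the final step constraining the $m_i$. The paper's bivariate argument expands $E(X_1-X_2)^2=2(1-\rho)$ via the orthogonality principle to get $D_1+D_2\le 2(1-\rho)$ and then applies AM--GM; its natural $N$-variable extension is the pairwise inequality $m_i+m_j\le 2(1-\rho)$ for all $i\ne j$, summed and combined with AM--GM. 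You instead observe that conditional independence forces $\mathrm{Cov}(\hat X_i,\hat X_j)=\rho$, impose positive semidefiniteness of the full conditional-mean covariance matrix, and symmetrize. Your constraint is strictly stronger (the paper's identity is just the quadratic form of your PSD matrix in the direction $(1,-1)$), and the symmetrization-by-averaging argument you flag as the "main obstacle" is indeed sound: permutation-averaging preserves feasibility by convexity of the PSD cone and can only increase $\sum_i\log m_i$, after which the symmetric point gives $m\le 1-\rho$ directly from the smallest eigenvalue $1-m-\rho$. So your route costs a little more machinery at the end but packages the correlation constraint in a cleaner, dimension-free way; the paper's pairwise route is more elementary. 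Also, first, your $m_i=\mathrm{Var}(X_i\mid W)$ should be read as the expected conditional variance (the MMSE), since $\mathrm{Var}(X_i\mid W)$ is a random variable; the subsequent use of Jensen's inequality in $h(X_i|W)\le\tfrac12\log(2\pi e\,m_i)$ is then correct. Second, both your argument and the paper's implicitly require $\rho\ge 0$, which is also implicit in the statement of the corollary.
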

\vspace{0.1in}

 We now characterize the minimum common rate $C_3(D_1,D_2)$ in the Gray-Wyner lossy source coding network for bivariate Gaussian random variables with covariance matrix $K_2'$ in equation (\ref{matrix K_2'}).
 %For bivariate Gaussian random variables $X_1,X_2$  with zero mean, covariance matrix $K'_2$ and squared error distortion,  the characterization of $C_3(D_1,D_2)$
 It was shown in \cite{Viswanatha&Akyol&Rose:ISIT12} that for symmetric distortion, i.e.,$D_1=D_2=D$,
 \bqa
C_3(D,D)=\left\{\begin{array}{ll}
C(X_1,X_2),& 0\leq D\leq 1-\rho,\\
R_{X_1X_2}(D,D),&  1-\rho\leq D\leq 1,\\
0,& D\geq 1.
\end{array}\right.\eqa
We characterize $C_3(D_1,D_2)$ for general distortion $(D_1,D_2)$ in the following proposition.
\begin{proposition}
For bivariate Gaussian random variables $X_1,X_2$ with zero mean, covariance matrix $K'_2$ and  squared error distortion, we have
that \bqa
C_3(D_1,D_2)=\left\{\begin{array}{ll}
C(X_1,X_2),& (D_1,D_2)\in\Dmat_{10},\\
R_{X_1X_2}(D_1,D_2),& (D_1,D_2)\in \Dmat_2\cup\Dmat_3,\\
0,&(D_1,D_2)\geq (1,1),
\end{array}\right.\eqa
\bqa C(X_1,X_2)\leq C_3(D_1,D_2)\leq R_{X_1X_2}(D_1,D_2),~~(D_1,D_2)\in \Dmat_{11},\eqa
where
\bqa\begin{array}{lll}
\Dmat_{10}&=&\{(D_1,D_2): 0\leq D_i\leq 1-\rho,i=1,2\},\\
\Dmat_{11}&=&\Dmat_{10}^c\cap\{(D_1,D_2):D_1+D_2-D_1D_2\leq 1-\rho^2\},\\
\Dmat_2&=&\Dmat_{10}^c\cap\Dmat_{11}^c\cap\left\{(D_1,D_2): \min\left\{\frac{1-D_1}{1-D_2},\frac{1-D_2}{1-D_1}\right\}\geq \rho^2\right\},\\
\Dmat_3&=&\Dmat_{10}^c\cap\Dmat_{11}^c\cap\Dmat_2^c\cap \{(D_1,D_2): D_i\leq 1,i=1,2\}.\end{array}\eqa
\end{proposition}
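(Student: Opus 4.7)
The plan is to mirror the proof of Proposition \ref{proposition binary}, substituting Gaussian rate--distortion functions and the Gaussian auxiliary variable for their binary counterparts. First I would recall the joint rate--distortion function $R_{X_1X_2}(D_1,D_2)$ for bivariate Gaussians with covariance $K'_2$ under squared error. In the combined region $\Dmat_1=\Dmat_{10}\cup\Dmat_{11}$ the explicit form
\[
R_{X_1X_2}(D_1,D_2)=\tfrac{1}{2}\log\tfrac{1-\rho^2}{D_1 D_2}
\]
holds and coincides with $R_{X_1}(D_1)+R_{X_2}(D_2)-I(X_1;X_2)$, while in $\Dmat_2$ and $\Dmat_3$ the optimal backward test channel degenerates so that the two reconstructions are deterministically (linearly) related, or one source is effectively discarded. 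The case boundaries defining $\Dmat_{10},\Dmat_{11},\Dmat_2,\Dmat_3$ come directly from this piecewise structure of the Gaussian joint rate--distortion formula.

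For the region $\Dmat_{10}$ I would establish $C_3(D_1,D_2)=C(X_1,X_2)$ by sandwiching. Take $W$ to be the Gaussian auxiliary variable from the proof of Theorem \ref{th Gaussian}: $W\sim\mathcal{N}(0,\rho)$ with $X_i=W+N_i$ and independent $N_i\sim\mathcal{N}(0,1-\rho)$. Then $R_{X_i\mid W}(D_i)=\tfrac{1}{2}\log\tfrac{1-\rho}{D_i}$ for $0\le D_i\le 1-\rho$, and a direct computation gives
\[
R_{X_1\mid W}(D_1)+R_{X_2\mid W}(D_2)+I(X_1,X_2;W)=\tfrac{1}{2}\log\tfrac{1-\rho^2}{D_1 D_2}=R_{X_1X_2}(D_1,D_2),
\]
so Theorem \ref{th 3} yields $C_3(D_1,D_2)\le I(X_1,X_2;W)=C(X_1,X_2)$. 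For the matching lower bound, throughout $\Dmat_1$ the identity $R_{X_1X_2}(D_1,D_2)=R_{X_1}(D_1)+R_{X_2}(D_2)-I(X_1;X_2)$ is verified by the computation above, so Lemma \ref{lemma 4-0} gives $C_3(D_1,D_2)\ge C(X_1,X_2)$. Since $\Dmat_{10}\subset\Dmat_1$, equality follows on $\Dmat_{10}$, and the same lower bound gives the left inequality claimed on $\Dmat_{11}$.

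For the regions $\Dmat_2$ and $\Dmat_3$ I would show the stronger equality $C_3(D_1,D_2)=R_{X_1X_2}(D_1,D_2)$ by invoking the alternative characterization $C^*(D_1,D_2)$ from \cite{Viswanatha&Akyol&Rose:ISIT12} and exploiting the explicit structure of the optimal backward test channel. In $\Dmat_2$ the optimum $(\hat{X}_1,\hat{X}_2)$ achieving $R_{X_1X_2}(D_1,D_2)$ satisfies a deterministic linear relation $\hat{X}_2=\alpha\hat{X}_1$; in $\Dmat_3$ the optimum effectively ignores one source and reconstructs it as an affine function of the other's reproduction. In either case any $W$ with $\hat{X}_1-W-\hat{X}_2$ must make $\hat{X}_1$ (or $\hat{X}_2$) measurable with respect to $\sigma(W)$, whence
\[
I(X_1,X_2;W)\ge I(X_1,X_2;W,\hat{X}_1,\hat{X}_2)\ge I(X_1,X_2;\hat{X}_1,\hat{X}_2)=R_{X_1X_2}(D_1,D_2).
\]
Combined with the trivial bound $C_3(D_1,D_2)\le R_{X_1X_2}(D_1,D_2)$ this yields equality; the case $(D_1,D_2)\ge(1,1)$ is immediate since $R_{X_1X_2}=0$ there.

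The main obstacle is handling the degenerate regimes $\Dmat_2$ and $\Dmat_3$ rigorously in the continuous setting. Unlike the binary case, the shortcut ``$\hat{X}_1$ is a function of $W$ implies $H(\hat{X}_1\mid W)=0$'' does not transfer to continuous variables, so the argument must be phrased in terms of $\sigma$-algebras and the data processing inequality. Establishing the sharp partition of the $(D_1,D_2)$ plane into $\Dmat_{10},\Dmat_{11},\Dmat_2,\Dmat_3$ from the piecewise Gaussian joint rate--distortion formula, and verifying the deterministic structure of the optimal test channel in $\Dmat_2$ and $\Dmat_3$, will require invoking the detailed Gaussian joint rate--distortion results and is where the bulk of the technical work lies.
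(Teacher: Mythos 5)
Your proposal matches the paper's proof essentially step for step: the same explicit Gaussian joint/marginal/conditional rate--distortion formulas, Lemma \ref{lemma 4-0} for the lower bound $C_3\geq C(X_1,X_2)$ on $\Dmat_1=\Dmat_{10}\cup\Dmat_{11}$, Theorem \ref{th 3} with the auxiliary $W$ from $X_i=\sqrt{\rho}W+\sqrt{1-\rho}N_i$ for the upper bound on $\Dmat_{10}$, and the $C^*(D_1,D_2)$ characterization together with the deterministic linear relation $\hat{X}_2=\sqrt{(1-D_2)/(1-D_1)}\,\hat{X}_1$ (resp.\ the degenerate structure) on $\Dmat_2$ (resp.\ $\Dmat_3$). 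The only difference is cosmetic: where you argue that $\hat{X}_1$ must be $\sigma(W)$-measurable, the paper phrases the same fact as the pair of Markov chains $X_1X_2-\hat{X}_1-W-\hat{X}_2$ and $X_1X_2-\hat{X}_2-W-\hat{X}_1$, which sidesteps the continuous-entropy issue you flag.
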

\vspace{0.1in}
\begin{proof}
The joint rate distortion function for Gaussian random variables with  squared error distortion\cite{Berger_RDT,Xiao_Luo_05,Nayak_10IT} is given by
%For $0<\Delta<\infty$, we have the rate-distortion function
\bqa  R_{X_1X_2}(D_1,D_2)=\left\{\begin{array}{ll}
\frac{1}{2}\log\frac{1-\rho^2}{D_1D_2}, &(D_1,D_2)\in \Dmat_1,\\
\frac{1}{2}\log\frac{1-\rho^2}{D_1D_2-\left(\rho-\sqrt{(1-D_1)(1-D_2)}\right)^2},&(D_1,D_2)\in \Dmat_2,\\
\frac{1}{2}\log\frac{1}{\min\{D_1,D_2\}}, &(D_1,D_2)\in\Dmat_3,
\end{array}\right.\eqa
where $\Dmat_1=\Dmat_{10}\cup\Dmat_{11}$.
The marginal rate distortion function for $X_i\sim \Nmat(0,1),i=1,2$, is
\bqn R_{X_i}(D_i)=\left\{\begin{array}{ll}
\frac{1}{2}\log\frac{1}{D_i},&0\leq D_i\leq 1,\\
0,& D_i\geq 1.
\end{array}\right.\eqn
%So \bqa R_{X_1X_2}(1-\rho,1-\rho)=\frac{1}{2}\log\frac{1+\rho}{1-\rho}=C(X_1,X_2).\eqa

Therefore,
$R_{X_1}(D_1)+R_{X_2}(D_2)-I(X_1;X_2)=R_{X_1X_2}(D_1,D_2)$, for $(D_1,D_2)\in \Dmat_1$.
From Lemma \ref{lemma 4-0}, for $(D_1,D_2)\in \Dmat_1$,
\bqn C_3(D_1,D_2)\geq C(X_1,X_2).\eqn

On the other hand,  the random variable $W$ in the following decomposition of $X_1$ and $X_2$ achieves the common information
\bqa X_i=\sqrt{\rho}W+\sqrt{1-\rho}N_i,~ i=1,2.\eqa
where $W,N_1,N_2$ are mutually independent standard Gaussian random variables.
The conditional distribution of $X$ given $W$ is Gaussian distribution with variance $1-\rho$.
%\bqa Q_{X_1|Y}(x|y)=a_1^{1-\delta_{x,y}}(1-a_1)^{\delta_{x,y}}\eqa
Hence, for $i=1,2$, the conditional rate distortion function is
\bqa R_{X_i|W}(D_i)=\left\{\begin{array}{ll}\frac{1}{2}\log\frac{1-\rho}{D_i},&0\leq D_i\leq 1-\rho, \\
0,& D_i\geq 1-\rho.
\end{array}\right.
\eqa
The condition $R_{X_1|W}(D_1)+R_{X_2|W}(D_2)+I(X_1,X_2;W)=R_{X_1X_2}(D_1,D_2)$ is satisfied for $(D_1,D_2)\in\Dmat_{10}$.
From Theorem \ref{th 3},
 $C_3(D_1,D_2)\leq C(X_1,X_2)$ for $(D_1,D_2)\in\Dmat_{10}$.
%Notice that the distortion pair $(D_1,D_2)$ satisfying  $0\leq D_i\leq 1-\rho, i=1,2$  is in the region $\Dmat_1$, therefore,
Since, $\Dmat_{10}\in\Dmat_1$, we proved that for $(D_1,D_2)\in\Dmat_{10}$,\bqn
C_3(D_1,D_2)= C(X_1,X_2).\eqn

%Thus, we complete the proof $C_3(D_1,D_2)=C(X_1,X_2)$ for $0\leq D_i\leq 1-\rho,i=1,2$.

For $(D_1,D_2)\in \Dmat_2$, it was shown in \cite{Nayak_10IT} that $(\hat{X}_1,\hat{X}_2)$ that achieves
$R_{X_1X_2}(D_1,D_2)$ satisfies \bqn \hat{X_2}=\sqrt{\frac{1-D_2}{1-D_1}}\hat{X}_1.\eqn
Hence, using  the characterization   $C^*(D_1,D_2)$, it is easy to show that the $W$ satisfying the Markov chains (\ref{eq 1}) and (\ref{eq 2}) must satisfy  two Markov chains
 \bqn X_1X_2-\hat{X}_1-W-\hat{X}_2,\\
 X_1X_2-\hat{X}_2-W-\hat{X}_1.\eqn
 Therefore, we have \bqn I(X_1,X_2;W)=I(X_1,X_2;\hat{X}_1)=I(X_1,X_2;\hat{X}_1,\hat{X}_2),\eqn which proved $C_3(D_1,D_2)=R_{X_1X_2}(D_1,D_2)$.

The region $\Dmat_3$ is a degenerated one. For example,  $R_{X_1X_2}(D_1,D_2)=R_{X_1}(D_1)$ if $\frac{1-D_2}{1-D_1}<\rho^2$, this means that the correlation between $X_1$ and $X_2$ is so strong that the optimal coding scheme is to  encode $X_1$ to within distortion $D_1$ and ignore $X_2$. Then $\hat{X}_2$ can be estimated from $\hat{X}_1$. We have
 \bqn \hat{X}_2=\rho\hat{X}_1.\eqn
The case of $\frac{1-D_1}{1-D_2}<\rho^2$ is dealt with similarly. Hence, we have $C_3(D_1,D_2)=R_{X_1X_2}(D_1,D_2)$.
\end{proof}
\begin{figure}
\centerline{
\begin{psfrags}
\psfrag{d1}[c]{$D_1$}
\psfrag{d2}[c]{$D_2$}
\psfrag{a1}[c]{$1-\rho$}
\psfrag{a0}[c]{$1-\rho^2$}
\psfrag{0.5}[c]{$1$}
\psfrag{0}[c]{$0$}
\psfrag{e10}[c]{$\Dmat_{10}$}
\psfrag{e11}[c]{$\Dmat_{11}$}
\psfrag{e2}[c]{$\Dmat_2$}
\psfrag{e3}[c]{$\Dmat_3$}
 \scalefig{.5}\epsfbox{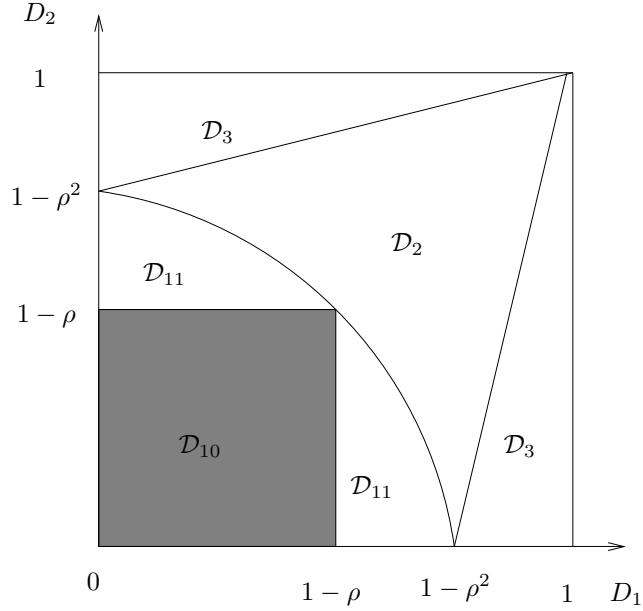}
\end{psfrags}}
\caption{\label{fig: Gaussian distortion region} The distortion regions $\Dmat_{10}, \Dmat_{11},\Dmat_2$ and $\Dmat_3$  for bivariate Gaussian random variables.
$C_3(D_1,D_2)=C(X_1,X_2)$ in the shaded region.}
\end{figure}

The characterization of $C_3(D_1,D_2)$ is plotted in Fig. \ref{fig: Gaussian distortion region} as a function of the distortion constraints. $C_3(D_1,D_2)=C(X_1,X_2)$ in the shaded region.
%The distortion region such that $C_3(D_1,D_2)=C(X_1,X_2)$  is shown in Fig. \ref{fig: Gaussian distortion region}.

{\em Remarks:}
\bi
\item Similar to the binary case, the claim $C_3(D_1,D_2)=C(X_1,X_2)$ for $(D_1,D_2)\in\Dmat_{10}$
can also be proved using Theorem \ref{theorem 9}. This is because for the bivariate Gaussian random variables with covariance matrix $K'_2$, $R_{X_1X_2}(1-\rho,1-\rho)$ is achieved by the backward test channel $p_b(x_1,x_2|w)=p(x_1|w)p(x_2|w)$, $(X_1,X_2)$  is successively refinable for any $(D_1,D_2)\leq (1-\rho, 1-\rho)$  \cite{Nayak_10IT} and $X_i$ is successively refinable for $D_i\leq 1-\rho$, $i=1,2$\cite{Equitz_91IT}.
\item Similarly,  $C_3(D_1,D_2)\geq C(X_1,X_2)$ for $(D_1,D_2)\in\Dmat_{11}$ but the exact characterization of $C_3(D_1,D_2)$ remains unknown in this region.
\item
Let $(D_1,D_2)\leq (D_1',D_2')\leq (1-\rho,1-\rho)$, then the rate $R_{X_1X_2}(D_1',D_2')$ is $(D_1,D_2)-$achievable in the Gray-Wyner network, i.e., $R_{X_1X_2}(D_1',D_2')\geq C_3(D_1,D_2)$.

This is because for $(D_1',D_2')\in\Emat_{10}$, the joint rate distortion function $R_{X_1X_2}(D_1',D_2')$  is achieved by Gaussian distributed  $(\hat{X}_1,\hat{X}_2)$  satisfying $X_1-\hat{X}_1-\hat{X}_2-X_2$ where the covariance matrix of $(\hat{X}_1,\hat{X}_2)$  is \cite{Nayak_10IT}\bqn K_{\hat{X}_1\hat{X}_2}= \left[
      \begin{array}{cc}
       1-D_1'& \rho \\
       \rho& 1-D_2'
      \end{array}
     \right].\eqn
Then for $(D_1,D_2)\leq (D_1',D_2')\leq (1-\rho,1-\rho)$, let the rate allocation of $R_0,R_1,R_2$ for the Gray-Wyner network be as follows:
\bqa\begin{array}{l}R_0=R_{X_1X_2}(D_1',D_2')=\frac{1}{2}\log\frac{1-\rho^2}{D_1'D_2'},\\
R_i=R_{X_i|\hat{X}_1\hat{X}_2}(D_i)=R_{X_i|\hat{X}_i}(D_i)=\frac{1}{2}\log\frac{D_i'}{D_i},i=1,2.\end{array}\label{eq Gaussisan}\eqa
 $R_0,R_1$ and $R_2$ in (\ref{eq Gaussisan}) sum up to $R_{X_1X_2}(D_1,D_2)$, so $R_{X_1X_2}(D_1',D_2')$ is $(D_1,D_2)-$achievable.

 Therefore,
in the Gray-Wyner network, we can use the rate allocation in (\ref{eq Gaussisan}) to achieve the distortion $(D_1,D_2)\leq (1-\rho, 1-\rho)$ for any $(D_1,D_2)\leq (D_1',D_2')\leq (1-\rho,1-\rho)$. The minimal $R_0$ satisfying (\ref{eq Gaussisan}) is exactly $C(X_1,X_2)$, which is achieved by  letting $(D_1',D_2')= (1-\rho,1-\rho)$.
\ei

\section{Conclusion}\label{section 5}
We have generalized the definition of Wnyer's common information and expanded its practical significance by providing a new operational interpretation. The generalization is two-folded: the number of dependent variables can be arbitrary, so are the alphabet of those random variables. We have determined  new properties  for the generalized Wyner's common information of $N$ dependent variables. More importantly, we have derived a lossy source coding interpretation of Wyner's common information using the Gray-Wyner network. In particular, it is established that the common information is precisely  the smallest common message rate when the total rate is arbitrarily close to the rate distortion function with joint decoding. A surprising observation is that such equality holds independent of the values of distortion constraints as long as the distortions are within some distortion region.
Two examples, the doubly symmetric binary source under Hamming distortion and bivariate Gaussian source under square-error distortion, are used to illustrate the lossy source coding interpretation of Wyner's common information. The common information for bivariate Gaussian source and its extension to the multi-variate case has also been computed explicitly.

While the lossy source coding interpretation of Wyner's common information presented in this paper is limited to $N=2$ random variables, the results can be extended to arbitrary $N$ random variables.

\section*{Acknowledgment}
%The authors gratefully acknowledges the helpful criticisms of Dr. P. Cuff of Princeton University.
The authors gratefully acknowledges Professor Paul Cuff of Princeton University for pointing out an error in an earlier proof of Theorem \ref{th Gaussian} given in \cite{Xu_11_CISS}.

\appendix

\subsection{Proof of Theorem \ref{th 3}}\label{appendix 3}

%Let $C^*(\Dbf^N)\triangleq\min_{W} I(\Xbf^N;W)$ such that $\sum_{i=1}^NR_{X_i|W}(D_i) +I(\Xbf^N;W)=R_{\Xbf^N}(\Dbf^N)$,
 We first  show that $C_3(D_1,D_2)\geq \tilde{C}(D_1,D_2)$. Let $R_0$ be $(D_1,D_2)$-achievable, then there exists an $(n, M_0, M_1, M_2)$
code  such that (\ref{CI EQ1})-(\ref{CI EQ3}) are satisfied. Define $R_i=\frac{1}{n}\log M_i$ for $i=1,2$.
Since $(R_0,R_1,R_2)$ is $(D_1,D_2)$-achievable, from Theorem \ref{them A2},  there exists a $W$ such that
\bqn R_0&\geq& I(X_1,X_2; W),\\
R_i&\geq& R_{X_i|W}(D_i),\hspace{0.2in}i=1,2\eqn
and  for any $\epsilon>0$,
 \beq\sum_{i=0}^2R_i\leq R_{X_1X_2}(D_1,D_2)+\epsilon.\label{eq 39}\eeq
Therefore,
\bqa R_{X_1X_2}(D_1,D_2)+\epsilon
&\geq&\sum _{i=0}^2R_i\nn\\
&\geq&\!\!\! \!\!I(X_1,X_2; W)\!+\!\sum_{i=1}^2R_{X_i|W}(D_i)\nn\\
                \!\!\!\!&\geq&\!\! \!\!I(X_1,X_2; W)\!+\!R_{X_1X_2|W}(D_1,D_2)\label{eq 21}\\
                \!\!\!\!&\geq&\!\!\!\! R_{X_1X_2}(D_1,D_2)\label{eq 25}\eqa
where (\ref{eq 21}) is from  (\ref{Rd 2}) and (\ref{eq 25}) comes from (\ref{Rd 4}).
Thus, we have  \bqa
I(X_1,X_2; W)\!+\!R_{X_1|W}(D_1)+R_{X_2|W}(D_2)=R_{X_1X_2}(D_1,D_2)\label{eq 22}.
\eqa

Hence, if $R_0$ is $(D_1,D_2)$-achievable, there exists a $W$ such that
$R_0\geq I(X_1,X_2;W)$ and (\ref{eq 22}) is true. It shows that
 $C_3(D_1,D_2)\geq \tilde{C}(D_1,D_2)$.

Next we show  $C_3(D_1,D_2)\leq \tilde{C}(D_1,D_2)$. Let $W'$ be the random variable that achieves $\tilde{C}(D_1,D_2)$.
For any $R_0> \tilde{C}(D_1,D_2)$ and $\epsilon>0$,  let
\beq \epsilon_1=\min \left\{\frac{\epsilon}{3}, R_0-\tilde{C}(D_1,D_2)\right\},\label{eq 27}\eeq
and hence $\epsilon_1>0$. From theorem \ref{them A2}, there exists an $(n,M_0,M_1,M_2)$
code with $Ed_1(X_1,\hat{X}_1)\leq D_1$, $Ed_2(X_2,\hat{X}_2)\leq D_2$,  and
\bqa \frac{1}{n}\log M_0&\leq& I(X_1,X_2;W')+\epsilon_1
=\tilde{C}(D_1,D_2)+\epsilon_1\leq R_0,\label{eq 17} \\
\frac{1}{n}\log M_i&\leq& R_{X_i|W'}(D_i)+\epsilon_1,\label{eq 18}\eqa
for $i=1,2$.
Sum over (\ref{eq 17}) and (\ref{eq 18}), we get
\bqa\sum^2_{i=0}\frac{1}{n}\log{M_i}
&\leq&I(X_1,X_2;W')+\sum^2_{i=1}R_{X_i|W'}(D_i)+3\epsilon_1\nn\\
&\leq & R_{X_1X_2}(D_1,D_2)+\epsilon\label{eq 38},
\eqa
where inequality (\ref{eq 38}) comes from  (\ref{eq 27}) and definition of $\tilde{C}(D_1,D_2)$.

This proves that $R_0$ is
$(D_1,D_2)$-achievable, thus completes the proof of $C_3(D_1,D_2)\leq \tilde{C}(D_1,D_2)$.

\subsection{Direct proof of  $\tilde{C}(D_1,D_2)= C^*(D_1,D_2)$}\label{appendix 3-0}

First we show that $\tilde{C}(D_1,D_2)\geq C^*(D_1,D_2)$.

Let $W$ be the variable that achieves $\tilde{C}(D_1,D_2)$ and let $\hat{X}_1,\hat{X}_2$ be random variables that achieve $R_{X_1|W}(D_1)$ and $R_{X_2|W}(D_2)$, i.e.,
\bqa  I(X_1,X_2;W)+R_{X_1|W}(D_1)+R_{X_2|W}(D_2)&=&R_{X_1X_2}(D_1,D_2),\label{eq 11}\\
R_{X_1|W}(D_1)&=&I(X_1;\hat{X}_1|W),\\
R_{X_2|W}(D_2)&=&I(X_2;\hat{X}_2|W),\\
 E[d_1(X_1,\hat{X}_1)]&\leq& D_1,\\
 E[d_2(X_2,\hat{X}_2)]&\leq& D_2.\label{eq 29}
\eqa
Without loss of generality, we can assume that the joint distribution of $(X_1,X_2,\hat{X}_1,\hat{X}_2,W)$ factors as
 $p(x_1,x_2,\hat{x}_1,\hat{x}_2,w)=p(x_1,x_2,w)p(\hat{x}|x,w)p(\hat{y}|y,w)$ because the distortion $D_1$ is independent of $X_2$ and
$D_2$ is independent of $X_1$.
We now establish \bqn R_{X_1X_2|W}(D_1,D_2)=R_{X_1|W}(D_1)+R_{X_2|W}(D_2).\eqn This is  from (\ref{eq 11}) and the inequalities
\bqn
R_{X_1X_2|W}(D_1,D_2)+I(X_1,X_2;W)&\geq& R_{X_1X_2}(D_1,D_2),\\
R_{X_1|W}(D_1)+R_{X_2|W}(D_2)&\geq& R_{X_1X_2|W}(D_1,D_2),\eqn
 from Lemma \ref{lemma 1}. Therefore, together with (\ref{eq 11})-(\ref{eq 29}), we have
\bqa R_{X_1X_2|W}(D_1,D_2)
&=&I(X_1;\hat{X}_1|W)+I(X_2; \hat{X}_2|W)\nn\\
&=&H(\hat{X}_1|W)+H(\hat{X}_2|W)-H(\hat{X}_1|X_1,W)-H(\hat{X}_2|X_2,W)\nn\\
&\geq& H(\hat{X}_1,\hat{X}_2|W)-H(\hat{X}_1|X_1,W)-H(\hat{X}_2|X_2,W)\nn\\
&=& H(\hat{X}_1,\hat{X}_2|W)-H(\hat{X}_1|W,X_1,X_2)-H(\hat{X}_2|W,X_1,X_2)\nn\\
&= &I(X_1,X_2;\hat{X}_1,\hat{X}_2|W)\nn\\
&\geq &R_{X_1X_2|W}(D_1,D_2).\nn\eqa
As the left-hand side (LHS) and right-hand side (RHS) of the above inequalities are the same, all the inequalities must be equalities so we have
\bqn I(\hat{X}_1;\hat{X}_2|W)=0.\eqn
Then we have
\bqa R_{X_1X_2}(D_1,D_2)&=&I(X_1,X_2;W)+ I(X_1;\hat{X}_1|W)+I(X_2; \hat{X}_2|W)\nn\\
&=&I(X_1,X_2;W,\hat{X}_1,\hat{X}_2)-I(X_1,X_2;\hat{X}_1,\hat{X}_2|W)+I(X_1;\hat{X}_1|W)+I(X_2; \hat{X}_2|W)\nn\\
&=&I(X_1,X_2;\hat{X}_1,\hat{X}_2)+I(X_1,X_2;W|\hat{X}_1,\hat{X}_2)\nn\\
%&=&I(X_1,X_2;\hat{X}_1,\hat{X}_2)+I(X_1,X_2;W|\hat{X}_1,\hat{X}_2)+I(\hat{X}_1;\hat{X}_2|W)\label{eq 3}\\
&\geq&I(X_1,X_2;\hat{X}_1,\hat{X}_2)\nn\\
&\geq&R_{X_1X_2}(D_1,D_2).\nn
\eqa

%where (\ref{eq 0}) is because of the Markov chains $\hat{X}_1-(W,X_1)-(X_2,\hat{X}_2)$ and $\hat{X}_2-(W,X_2)-(X_1,\hat{X}_1)$.

As the LHS and RHS of the above inequalities are the same, all the inequalities must be equalities so we have
\bqn I(X_1,X_2;W|\hat{X}_1,\hat{X}_2)&=&0,\\
I(X_1,X_2;\hat{X}_1,\hat{X}_2)&=&R_{X_1X_2}(D_1,D_2).\eqn
Therefore,  $X_1,X_2,\hat{X}_1,\hat{X}_2,W$ satisfy the Markov chains in (\ref{eq 1}) and (\ref{eq 2}) and $\hat{X}_1,\hat{X}_2$ achieve
$R_{X_1X_2}(D_1,D_2)$. Thus, $\tilde{C}(D_1,D_2)\geq C^*(D_1,D_2)$.

Next we  show that $\tilde{C}(D_1,D_2)\leq C^*(D_1,D_2)$.

Let $X_1,X_2,X^*_1,X^*_2,W$ achieve $C^*(D_1,D_2)$. Therefore, they satisfy the Markov chains in (\ref{eq 1}) and (\ref{eq 2}) and $I(X_1,X_2;X^*_1,X^*_2)=R_{X_1X_2}(D_1,D_2)$
and $E[d_1(X_1,X_1^*)]\leq D_1,E[d_2(X_2,X^*_2)]\leq D_2$.
\bqa R_{X_1X_2}(D_1,D_2)
&=&I(X_1,X_2;X^*_1,X^*_2)\nn\\
&=&I(X_1,X_2;W,X^*_1,X^*_2)\label{eq 4}\\
&=&I(X_1,X_2;W)+I(X_1,X_2;X^*_1,X^*_2|W)\nn\\
&=&I(X_1,X_2;W)+H(X^*_1|W)+H(X^*_2|W)-H(X^*_1,X^*_2|X_1,X_2,W)\label{eq 40}\\
&=&I(X_1,X_2;W)+I(X_1;X^*_1|W)+I(X_2;X^*_2|W)+H(X^*_1|X_1,W)\nn\\
&&+H(X^*_2|X_2,W)-H(X^*_1,X^*_2|X_1,X_2,W)\nn\\
&\geq&I(X_1,X_2;W)+I(X_1;X^*_1|W)+I(X_2;X^*_2|W)+H(X^*_1|X_1,X_2,W)\nn\\
&&+H(X^*_2|X_1,X_2,W)-H(X^*_1,X^*_2|X_1,X_2,W)\label{eq 5}\\
&=&I(X_1,X_2;W)+I(X_1;X^*_1|W)+I(X_2;X^*_2|W)+I(X^*_1;X^*_2|X_1,X_2,W)\nn\\
&\geq&I(X_1,X_2;W)+I(X_1;X^*_1|W)+I(X_2;X^*_2|W)\nn\\
&\geq& I(X_1,X_2;W)+R_{X_1|W}(D_1)+R_{X_2|W}(D_2)\nn\\
&\geq& I(X_1,X_2;W)+R_{X_1X_2|W}(D_1,D_2)\label{eq 7}\\
&\geq& R_{X_1X_2}(D_1,D_2),\label{eq 8}
\eqa
where (\ref{eq 4}) is from the Markov chain $(X_1,X_2)-(X^*_1,X^*_2)-W$, (\ref{eq 40}) is from the Markov chain
$X^*_1-W-X^*_2$, (\ref{eq 5}) is because conditioning reduces entropy, (\ref{eq 7}) and (\ref{eq 8})
 are by the properties of rate distortion functions.
As the LHS and RHS of the above inequalities are the same, all the inequalities must be equalities so we have
\bqn I(X_1,X_2;W)+R_{X_1|W}(D_1)+R_{X_2|W}(D_2)=R_{X_1X_2}(D_1,D_2).\eqn

Therefore, $C^*(D_1,D_2)=I(X_1,X_2;W)\geq \tilde{C}(D_1,D_2)$.

\subsection{Proof of Lemma \ref{lemma 4-0}}\label{Appendix 4-0}
%For a given $(D_1,D_2)$ pair, by Theorem \ref{th 3},
Let $W$ be the random variable that achieves $C_3(D_1,D_2)$. Thus, $C_3(D_1,D_2)= I(X_1,X_2;W)$ with
\bqa R_{X_1|W}(D_1)+R_{X_2|W}(D_2)+I(X_1,X_2; W)=R_{X_1X_2}(D_1,D_2)\label{eq 14}.\eqa
% \beq R_{X_1}(D_1)+R_{X_2}(D_2)-I(X_1;X_2)= R_{X_1X_2}(D_1,D_2).
%\label{eq 19}\eeq
Combined with (\ref{eq lemma 4-0}), we have that
\bqa R_{X_1}(D_1)+R_{X_2}(D_2)-I(X_1;X_2)
&=& R_{X_1|W}(D_1)+R_{X_2|W}(D_2)+I(X_1,X_2; W)\label{eq 15}\\
&\geq&R_{X_1}(D_1)-I(X_1;W)+R_{X_2}(D_2)-I(X_2;W)\nn\\
&&+I(X_1,X_2; W)\label{eq 16}\\
%&=& R_{X_1}(D_1)+R_{X_2}(D_2)+I(X_2;W|X_1)-I(X_2;W)\label{eq 23}\\
&=& R_{X_1}(D_1)+R_{X_2}(D_2)-I(X_1;X_2)+I(X_1;X_2|W)\label{eq 24}\\
&\geq& R_{X_1}(D_1)+R_{X_2}(D_2)-I(X_1;X_2),\label{eq 26}
 \eqa
 where equation (\ref{eq 15}) is from equations (\ref{eq 14}) and (\ref{eq lemma 4-0}), inequality (\ref{eq 16}) comes from Lemma \ref{lemma 1},
(\ref{eq 24}) is by the chain rule and inequality (\ref{eq 26}) is by the fact that $I(X_1;X_2|W)\geq 0$.

Because the LHS of (\ref{eq 15}) is the same as the RHS of (\ref{eq 26}), we  can conclude that all the inequalities above should be equalities. This implies  $I(X_1;X_2|W)=0$. Therefore,
\bqn C_3(D_1,D_2)\geq C(X_1,X_2).\eqn
\subsection{Proof of Theorem \ref{th 4}}\label{Appendix 4}

Let $W$ be the random variable that achieves the common information of $X_1,X_2$.
By Lemma \ref{lemma 2},  there exists a strictly positive surface $\Dmat(X_1X_2|W)$ such that  for any $0\leq (D_1, D_2)\leq \Dmat(X_1X_2|W)$,
 \bqa I(X_1,X_2;W)+R_{X_1X_2|W}(D_1,D_2)= R_{X_1X_2}(D_1,D_2)\label{eq 54}.\eqa
%Also, by the fact that  $X_1,X_2$ are conditionally independent given $W$,
%we have %for any $0\leq (D_1, D_2)\leq \beta$,
%\bqa
%R_{X_1X_2|W}(D_1,D_2)=R_{X_1|W}(D_1)+R_{X_2|W}(D_2)\label{equal 1}\eqa

%Combining (\ref{eq 54}) and (\ref{equal 1}) gives
%\bqa I(X_1,X_2;W)+R_{X_1|W}(D_1)+R_{X_2|W}(D_2)=R_{X_1X_2}(D_1,D_2)\label{eq 19}\eqa
Also by Lemma \ref{lemma 2}, there exists a strictly positive surface $\Dmat(X_1X_2)\geq \Dmat(X_1X_2|W)$
 such that  for any $0\leq (D_1, D_2)\leq\Dmat(X_1X_2)$,
 \beq R_{X_1}(D_1)+R_{X_2}(D_2)-I(X_1;X_2)= R_{X_1X_2}(D_1,D_2).
\label{eq 9}\eeq
Since $\Dmat(X_1X_2|W)\leq \Dmat(X_1X_2)$, let $\gamma=\Dmat(X_1X_2|W)$,  both equalities (\ref{eq 54}) and  (\ref{eq 9}) hold  for $0\leq (D_1, D_2)\leq \gamma$. Therefore, from Lemmas \ref{lemma 3} and \ref{lemma 4-0},
$C_3(D_1,D_2)=C(X_1,X_2)$ for $0\leq (D_1, D_2)\leq \gamma$.

\subsection{Proof of Theorem \ref{theorem 9}}\label{Appendix them 9}

First we  show that for any $(D_1,D_2)\leq (D_1^0,D_2^0)$, % \[ C_3(D_1',D_2')\leq C(X;Y).\] To show this, we only need to show that
\bqa R_{X_1X_2|W}(D_1,D_2)+I(X_1X_2;W)=R_{X_1X_2}(D_1,D_2)\label{eq 44}.\eqa

From the definition of $(D_1^0,D_2^0)$  in (\ref{ratedistortion}), we have
\bqn R_{X_1X_2}(D_1^0,D_2^0)\geq R_{X_1}(D_1^0)+R_{X_2}(D_2^0)-I(X_1;X_2)= I(X_1,X_2;W),\eqn
where the first inequality is from (\ref{Rd 5}).
On the other hand, \bqn R_{X_1X_2}(D_1^0,D_2^0)\leq I(X_1,X_2;\hat{X}_1^0,\hat{X}_2^0)\leq I(X_1,X_2;W).\eqn
Therefore, $R_{X_1X_2}(D_1^0,D_2^0)=I(X_1,X_2;\hat{X}_1^0,\hat{X}_2^0)=I(X_1X_2;W)$.

 Let $(\hat{X}_1,\hat{X}_2)$ achieve $R_{X_1X_2}(D_1,D_2)$.
As the vector source $(X_1,X_2)$ is successively refinable under individual distortion constraints\cite{Nayak_10IT}, we have the Markov chain $X_1X_2-\hat{X}_1\hat{X}_2-\hat{X}_1^0\hat{X}_2^0$. Therefore,
 \bqn R_{X_1X_2}(D_1,D_2)-I(X_1,X_2;W)
 &=& I(X_1,X_2;\hat{X}_1,\hat{X}_2)-I(X_1,X_2;\hat{X}_1^0,\hat{X}_2^0)\\
 &=&I(X_1X_2;\hat{X}_1\hat{X}_2|\hat{X}_1^0,\hat{X}_2^0)\\
 &\geq & R_{X_1X_2|\hat{X}_1^0,\hat{X}_2^0}(D_1,D_2)\\
  &\geq & R_{X_1X_2|W}(D_1,D_2),
 \eqn
where the last inequality is from the Markov chain $X_1X_2-W-\hat{X}_1^0,\hat{X}_2^0$.
 On the other hand, by Lemma \ref{lemma 1}, we have \bqn
 R_{X_1X_2|W}(D_1,D_2)+I(X_1X_2;W)\geq R_{X_1X_2}(D_1,D_2).\eqn
This establishes (\ref{eq 44}). Thus, from Lemma \ref{lemma 3}, $C_3(D_1,D_2)\leq C(X_1;X_2)$.

To complete the proof, we need to show
\bqa R_{X_1}(D_1)+R_{X_2}(D_2)-I(X_1;X_2)=R_{X_1X_2}(D_1,D_2)\label{eq 62}.\eqa
From Lemma \ref{lemma 1},
\bqn R_{X_1}(D_1)+R_{X_2}(D_2)-I(X_1;X_2)\leq R_{X_1X_2}(D_1,D_2).\eqn
 Therefore,  we only need to establish the other direction.
%First notice that $R_{X_1X_2}(D_1^0,D_2^0)$ is achieved by a backward test channel $q_t(x_1x_2|w)=p(x_1|w)p(x_2|w)$. From Lemma \ref{lemma 1}
%\bqn R_{X_1}(D_1^0)+R_{X_2}(D_2^0)-I(X_1;X_2)=R_{X_1X_2}(D_1^0,D_2^0).\eqn
%Thus, $R_{X_i}(D_i^0)$ is achieved by $p(x_i|w), i=1,2$.
For $i=1,2$, let $\hat{X}_i$ achieve $R_{X_i}(D_i)$, then by the definition of a successively refinable  scalar source \cite{Equitz_91IT},
we have the Markov chain $X_i-\hat{X}_i-\hat{X}_i^0$ for $D_i\leq D_i^0$. Therefore,
 \bqa R_{X_i}(D_i)-I(X_i;W)&=&I(X_i;\tilde{X}_i)-I(X_i;\hat{X}_i^0)\nn\\
 &=& I(X_i;\hat{X}_i|\hat{X}_i^0)\nn\\
 &\geq& R_{X_i|\hat{X}_i^0}(D_i)\nn\\
 &\geq& R_{X_i|W}(D_i)\label{eq 60},\eqa
 where (\ref{eq 60}) is from the Markov chain $X_i-W-\hat{X}_i^0$.
% Similarly, let $R_{X_2}(D_2')$ be achieved by $p(\tilde{x}_2|x_2)$,  we can show that
% \bqa R_{X_2}(D_2')-R_{X_2}(D_2^0)\geq R_{X_2|W}(D_2')\label{eq 61}\eqa
Using (\ref{eq 60}), we have
 \bqn R_{X_1}(D_1)+R_{X_2}(D_2)-I(X_1;X_2)
 &\geq&R_{X_1|W}(D_1)+I(X_1;W)+R_{X_2|W}(D_1)+I(X_2;W)-I(X_1;X_2)\\
 &=& R_{X_1|W}(D_1)+R_{X_2|W}(D_2)+I(X_1X_2;W)\\
  &=& R_{X_1X_2|W}(D_1,D_2)+I(X_1X_2;W)\\
 &=& R_{X_1X_2}(D_1,D_2),\eqn
 which completes the proof.

\subsection{Proof of Theorem \ref{th Gaussian}}\label{appendix 6}
First, we will show that the common information of $X_1,X_2$ is only a function of the correlation coefficient  $\rho$.
To show this, let $\tilde{X}_i=\frac{1}{\sigma_i}X_i$, $i=1,2$, thus $\tilde{X}_1,\tilde{X}_2$ are joint Gaussian distributed with
zero mean and covariance matrix
 \bqn \left[
      \begin{array}{cc}
       1& \rho\\
       \rho& 1
      \end{array}
     \right].\eqn
We have the Markov chain that $\tilde{X}_1-X_1-X_2-\tilde{X}_2$ and by the data processing inequality for Wyner's common information
     \cite{Witsenhausen_CI76}, $C(\tilde{X}_1,\tilde{X}_2)\leq C(X_1,X_2)$. On the other hand, we  have the Markov chain that $X_1-\tilde{X}_1-\tilde{X}_2-X_2$ and  $C(\tilde{X}_1,\tilde{X}_2)\leq C(X_1,X_2)$. Thus, $C(\tilde{X}_1,\tilde{X}_2)= C(X_1,X_2)$.
     Without loss generality, we will consider $\sigma_1^2=\sigma_2^2=1$ in the following.

Let
\bqa X_i=\sqrt{\rho}W+\sqrt{1-\rho}N_i,~ i=1,2,\eqa
where $W,N_1,N_2$ are mutually independent standard Gaussian random variables. It is clear that $X_1,X_2$ are bivariate Gaussian with correlation coefficient $\rho$,
 \bqn C(X_1,X_2)\leq I(X_1,X_2;W)= \frac{1}{2}\log\frac{1+\rho}{1-\rho}.\eqn
%where $X=W+Z_1$, $Y=W+Z_2$. $W,Z_1,Z_2$  are mutually independent standard Gaussian random variables with variances $\rho, 1-\rho,1-\rho$ respectively.
%The conditional density of $X$ given $W$  is Gaussian distributed with variance $1-\rho$.
%\bqa Q_{X_1|Y}(x|y)=a_1^{1-\delta_{x,y}}(1-a_1)^{\delta_{x,y}}\eqa
 Next we will show that \bqn C(X_1,X_2)\geq \frac{1}{2}\log\frac{1+\rho}{1-\rho}.\eqn

For any $U$ that satisfies the Markov chain $X_1-U-X_2$, let $D_1$ be the minimum mean square error (MMSE) of estimating $X_1$ using $U$, thus, $D_1=E(X_1-E(X_1|U))^2$. Similarly, let
$D_2=E(X_2-E(X_2|U))^2$. We now show that $I(X_1X_2;U)\geq \frac{1}{2}\log\frac{1+\rho}{1-\rho}$.
\bqa I(X_1X_2;U)&=&H(X_1X_2)-H(X_1|U)-H(X_2|U)\nn\\
%&=& H(X_1X_2)-H(X_1|U)-H(X_2|U)+H(X_1)+H(X_2)-H(X_1)-H(X_2)\nn\\
&=& I(X_1;U)+I(X_2;U)-I(X_1;X_2)\label{MK 2}\\
&\geq & I(X_1; E(X_1|U))+I(X_2;E(X_2|U))-I(X_1;X_2)\label{MK 3}\\
&\geq & R_{X_1}(D_1)+R_{X_2}(D_2)-I(X_1;X_2)\label{MK 4}\\
&=&\frac{1}{2}\log \frac{1-\rho^2}{D_1D_2},\nn
\eqa
for $D_1\leq 1,D_2\leq 1$, where (\ref{MK 2}) is from the chain rule,
(\ref{MK 3}) is from the Markov chains $X_1-U-E(X_1|U)$, $X_2-U-E(X_2|U)$ and (\ref{MK 4}) is by the definition
of rate distortion function.

Next  we show that $D_1+D_2\leq 2(1-\rho)$, $D_1\leq 1$, $ D_2\leq 1$.
 %thus we have $D\leq D_0\leq 1-\frac{\rho^2}{1-D}$.
\bqa 2(1-\rho)&=&E(X_1-X_2)^2\nn\\
 &=&E[X_1-E(X_1|U)+E(X_1|U)-X_2]^2\nn\\
&=& E[X_1-E(X_1|U)]^2+E[E(X_1|U)-X_2]^2+2E[(X_1-E(X_1|U))(E(X_1|U)-X_2)]\nn\\
&=&E[X_1-E(X_1|U)]^2+E[E(X_1|U)-X_2]^2\label{MK 5}\\
&=&E[X_1-E(X_1|U)]^2+E[E(X_1|U)-E(X_2|U)+E(X_2|U)-X_2]^2\nn\\
&=&E[X_1-E(X_1|U)]^2+E[X_2-E(X_2|U)]^2+E[E(X_2|U)-E(X_1|U)]^2\nn\\
&&+E[(X_2-E(X_2|U))(E(X_2|U)-E(X_1|U))]\nn\\
&=&E[X_1-E(X_1|U)]^2+E[X_2-E(X_2|U)]^2+E[E(X_2|U)-E(X_1|U)]^2\label{MK 6}\\
&\geq& D_1+D_2\nn
\eqa
where (\ref{MK 5}) is from
\bqn E[(X_1-E(X_1|U))(E(X_1|U)-X_2)]
&=&E[(X_1-E(X_1|U))E(X_1|U)]-E[(X_1-E(X_1|U))X_2]\\
&=&-E[(X_1-E(X_1|U))X_2]\\
&=&-E_{UX_2}[X_2E_{X_1|U}[X_1-E(X_1|U)]]\\
&=&-E_{UX_2}[X_2(E(X_1|U)-E(X_1|U))]=0,\eqn
and (\ref{MK 6}) is from
\bqn &&E[(X_2-E(X_2|U))(E(X_2|U)-E(X_1|U))]\\
&=& E[(X_2-E(X_2|U))E(X_2|U)]-E[(X_2-E(X_2|U))E(X_1|U)]=0\eqn
In addition, we have $D_1=E[X_1-E(X_1|U)]^2=EX_1^2-E[E(X_1|U)^2]\leq EX_1^2=1$.

Thus, \bqn I(X_1X_2;U)&\geq& \frac{1}{2}\log \frac{1-\rho^2}{D_1D_2}\\
&\geq& \frac{1}{2}\log \frac{1-\rho^2}{\left(\frac{D_1+D_2}{2}\right)^2}\\
&\geq&
\frac{1}{2}\log \frac{1-\rho^2}{(1-\rho)^2}\\
&=&\frac{1}{2}\log \frac{1+\rho}{1-\rho}.\eqn

\bibliographystyle{\HOME/tex/IEEEbib}
%\bibliography{\HOME/tex/Reference/CommonInformation}

\end{document}